\documentclass[ejs]{imsart}

\usepackage[T1]{fontenc}
\usepackage{lmodern}
\RequirePackage{amsthm,amsmath,amssymb}
\RequirePackage[numbers]{natbib}
\RequirePackage[colorlinks,citecolor=blue,urlcolor=blue]{hyperref}
\RequirePackage{graphicx}
\usepackage{float}
\usepackage{amsfonts}
\usepackage{xcolor}
\usepackage{mathbbol}
\usepackage{comment}
\usepackage{lipsum}
\DeclareMathOperator*{\argmax}{argmax}

\usepackage{xcolor}
\usepackage{subcaption} 
\usepackage{adjustbox}  
\usepackage{enumitem}
\usepackage{tikz}
\usetikzlibrary{positioning} 
\usepackage{tabularx}
\usepackage{booktabs}
\newcolumntype{R}{>{\centering\arraybackslash}p{0.22\textwidth}}

\DeclareSymbolFontAlphabet{\amsmathbb}{AMSb}%

\startlocaldefs

\DeclareSymbolFontAlphabet{\amsmathbb}{AMSb}%
\newcommand{\pr}{{\amsmathbb{P}}}
\newcommand{\E}{\amsmathbb{E}}
\newcommand*\dd{\mathop{}\!\mathrm{d}}

\numberwithin{equation}{section}
\theoremstyle{plain}
\newtheorem{theorem}{Theorem}
\newtheorem{lemma}[theorem]{Lemma}

\newtheorem{corollary}[theorem]{Corollary}
\newtheorem{assumption}[theorem]{Assumption}

\newtheorem{definition}{Definition}
\newtheorem{remark}[theorem]{Remark}

\endlocaldefs

\begin{document}
\begin{frontmatter}
\title{Scoring Rules with Normalized Upper Order Statistics for Tail Inference
}
\runtitle{Scoring Rules with Normalized Upper Order Statistics for Tail Inference}

\begin{aug}

\author{%
  \fnms{Martin} \snm{Bladt}\ead[label=e1]{martinbladt@math.ku.dk}
  \and
  \fnms{Christoffer} \snm{Øhlenschlæger}\ead[label=e2]{choh@math.ku.dk}
}
\address{%
  Department of Mathematical Sciences, University of Copenhagen\\
  Universitetsparken 5, 2100 Copenhagen Ø, Denmark \\
  \printead{e1,e2}
}

\end{aug}

\begin{abstract}
This paper proposes a scoring-rule-based method for ranking predictive distributions in the Fréchet domain that is able to distinguish between different tail indices. The approach is built on normalized order statistics and exploits proper scoring rules to compare tail limit distributions in a distributional framework, with direct relevance for insurance claim-severity tails. On the theoretical side, consistency and asymptotic normality for empirical tail scores based on normalized upper order statistics are obtained through residual estimation theory. Simulation results demonstrate that the scoring-rule-based approach is capable of discriminating between different tail behaviors in finite samples and that
systematic scale variation has only a minor impact on stability.
We further show that optimizing scoring rules yields consistent tail-index estimators and that the classical Hill estimator arises as a special case. The performance of the Energy Score tail-index estimator is investigated and compared with the Hill estimator across a range of tail indices. Lastly, we analyze an automobile claim-severity data set to demonstrate how scoring rules can be used to rank predictive models based on tail predictions in actuarial settings.
\end{abstract}

\begin{keyword}
\kwd{extreme values}
\kwd{scoring rules}
\kwd{tail index}
\kwd{predictive distribution}
\end{keyword}
\end{frontmatter}


\section{Introduction}

Floods, earthquakes, and other catastrophic events occur infrequently, yet they can generate substantial financial losses. A sound understanding of both the frequency and the severity of such extremes is therefore essential for actuarial risk assessment. To model such phenomena, \emph{Extreme Value Theory} (EVT) provides a fundamental framework and is a standard tool in actuarial science and risk management (e.g., \cite{EmbrechtsKluppelbergMikosch1997}). 
In practice, however, risk assessments have traditionally relied primarily on point forecasts. By adopting predictive distributions, one obtains a more complete description of risk, and a growing body of literature has therefore focused on distributional approaches to uncertainty quantification (see, e.g., \cite{Benedetti2010, RuizAbellonFernandezJimenezGuillamonGabaldon2024, KleinSmithNott2023}). Assessment of predictive distributions typically involves two complementary aspects: \emph{calibration}, which examines whether predicted probabilities are statistically consistent with observed outcomes, and \emph{scoring rules}, which provide a principled way to compare and rank different predictive distributions (\cite{Gneit_Raft, GneitingBalabdaouiRaftery2007}).
In actuarial applications, model selection rarely involves choosing between a clearly correct and a clearly incorrect specification. Practitioners typically face several competing models, so a data-driven ranking provides an objective and reproducible basis for pricing, reserving, and risk communication. In addition, the full ranking can guide the selection of alternative models when practical considerations such as interpretability, robustness, or regulatory constraints are taken into account.
Accordingly, this paper proposes a method for ranking predictive distributions in the Fréchet domain based on proper scoring rules and normalized upper order statistics.

Because this approach relies on scoring-rule comparisons, it is crucial that the scoring rule remains \emph{proper}, i.e.,
\[
\amsmathbb{E}_{Y \sim F} [ S(F, Y) ] \geq \amsmathbb{E}_{Y \sim F} [ S(G, Y) ],
\]
for all predictive distributions \( G \), where \( S(\cdot, \cdot) \) denotes the scoring rule and \( Y \sim F \) represents the true data-generating distribution.  
Properness ensures that the forecaster is incentivized to issue their genuine predictive distribution \cite{Gneit_Raft}.  
The dangers of using improper scoring rules can be seen in \cite{LerchThorarinsdottirRavazzoloGneiting2017}, which demonstrate how misassessment of extreme events can lead forecasters to issue predictions that are more extreme than they actually believe.

{Against this background, there is a vast literature on the evaluation of predictive distributions with a focus on extreme events.} One common approach is to employ weighted scoring rules, which allow greater emphasis to be placed on large observations, for example by using a weight function of the form \( w(y) = \mathbf{1}\{y \geq t\} \) for some threshold \( t \).
Studies investigating proper weighted scoring rules include \cite{GneitingRanjan2011}, \cite{HolzmannKlar2017}, \cite{Allen2024}, \cite{WesselFerroEvansKwasniok2025} and \cite{OlafsdottirRootzenBolin2024}. In addition to their theoretical contributions, several of these studies provide empirical illustrations of how weighted scoring rules can be applied to evaluate tail performance. {However, a central concern in the literature is the ability of predictive distributions to differentiate between tail indices.} \cite{TaillardatFougeresNaveauDeFondeville2023} shows that the widely used Continuous Ranked Probability Score (CRPS) fails to distinguish between predictive distributions with different tail indices. \cite{score-tail-kristin} extend the result by proving that no proper scoring rule can discriminate between tail indices, i.e., it is always possible to construct two competing predictive distributions with distinct tail indices whose scoring rule values are arbitrarily close. \cite{Allen2025TailCalibration} strengthen this finding by showing that this limitation cannot be overcome by taking the maximum over multiple scoring rules. These studies conclude that scoring rules are unsuitable when the objective is to assess or compare tail indices. 

{Our approach addresses this limitation by changing the object on which the scoring rule is applied. Rather than evaluating predictive distributions on the full sample, we base the comparison on normalized upper order statistics. Let \(Y_1,\ldots,Y_n\) denote the observations, with order statistics \(Y_{n,1}\leq \cdots \leq Y_{n,n}\). For a given number \(k\) of upper order statistics, the threshold is \(Y_{n,n-k}\), and each exceedance is evaluated relative to this threshold. Thus, the observations entering the score are
\[
    Y_{n,n-i+1}/Y_{n,n-k}, \qquad i=1,\ldots,k .
\]
This normalization leads to a limiting comparison in terms of the tail distribution.
Accordingly, the score is applied to a tail object obtained by restricting attention to
the upper tail and normalizing the exceedances, rather than directly to the full
predictive distribution on the original sample space. The relevant propriety statement
therefore concerns this tail object. In practice, the tail object may be chosen as either
the finite-threshold tail counterpart of the predictive distribution, which assesses
conditional tail fit above a high threshold, or the corresponding Pareto limit, which
targets the asymptotic tail distribution and hence the tail index; see
Remark~\ref{rem:gg}.}

{This perspective is also closely related to weighted scoring rules.
In \cite{HolzmannKlar2017} a study on how scoring rules can be transformed into
weighted scoring rules while preserving propriety is provided. In particular, their
Theorem~3 constructs weighted scoring rules by first normalizing the predictive
distribution on the weighted region and then multiplying the resulting score by
the weight function. With a deterministic threshold \(t\), the corresponding analogue in our setting is obtained by considering the rescaled variable \(Y/t\).
On this scale, exceedances over \(t\) correspond exactly to the event \(Y/t\geq 1\), so the relevant weight is \(w(y)=\mathbf{1}\{y\geq 1\}\). The renormalized predictive distribution on this weighted region is then the conditional distribution of \(Y/t\) given \(Y\geq t\). This connection also highlights where our approach departs from standard fixed-threshold weighted scoring rules. Weighted scoring rules can focus the evaluation on observations above a chosen threshold, but this alone does not provide a theoretical connection to the tail index. If the threshold is fixed, the score still evaluates a fixed part of the distribution, and there is no guarantee that this region reflects the limiting tail behavior. Hence, fixed-threshold weighting does not by itself resolve the difficulty that scoring-rule comparisons may fail to distinguish between different tail indices. The key feature of our approach is that the threshold is chosen as an intermediate order statistic and increases with the sample size. The score is thus evaluated on normalized exceedances in an increasingly extreme region, where the limiting distribution is determined by the tail index.}

{A related but distinct approach to tail-focused forecast evaluation is provided by tail calibration.
For instance, \cite{Allen2025TailCalibration} introduce tail calibration as a tool for assessing calibration in
the extremes and show that it can help distinguish between distributions with different tail
behaviors. This perspective is complementary to the score-based ranking framework considered
here and applies beyond the Fréchet domain.}

Beyond model ranking, this paper also investigates scoring-rule optimization as a method for estimating the tail index. {The idea is to choose the parameter value whose predictive distribution attains the largest empirical score. More precisely, let $\{F_\theta:\theta\in\Theta\}$ be a parametric class of distributions, let $S$ be a strictly proper scoring rule, and let $(Y_1,\ldots,Y_n)$ be an i.i.d.\ sample. The scoring-rule estimator is then defined by
\begin{align*}
    \hat{\theta}_S
    =
    \arg\max_{\theta\in\Theta}
    \sum_{i=1}^n S(F_\theta,Y_i).
\end{align*}}
\cite{Dawid2016} shows that, under regularity conditions, this estimator is consistent and asymptotically Gaussian. In addition, the associated estimating equation is shown to be unbiased. This places minimum scoring rule inference within the broader framework of $M$-estimation (see, e.g., \cite{HuberRonchetti2009}). Scoring-rule estimators have also found applications in the applied sciences, for instance in weather forecasting \cite{FriederichsThorarinsdottir2012}, where CRPS is used. For a comprehensive review of scoring-rule inference, see \cite{WaghmareZiegel2026}.

{Finally, we investigate the proposed framework through simulation studies and an empirical application. As part of the simulation study, we examine the effect of heterogeneous scaling. Such scale heterogeneity is not covered by the asymptotic theory in Section~\ref{sec:math}, but it is relevant in forecasting applications. We therefore include it as a simulation-based robustness check. In the extreme value theory literature, related non-identically distributed settings have been studied theoretically by \cite{EinmahlDeHaanZhou2016}.
 For the empirical study, we use the \texttt{usautoBI} (USAutoBI) bodily injury claims data from the \texttt{CASdatasets} package in \textsf{R}. Five Pareto predictive distributions are compared and ranked using the proposed scoring-rule approach. We further consider two data partitions based on policyholder characteristics. While one split leads to the same ranking as in the full sample, the other results in a different ordering of the models, indicating that tail behavior varies across subpopulations.}

In summary, our contributions are threefold: first, we propose a scoring-rule framework for tail-model ranking based on normalized upper order statistics; second, we provide asymptotic justification and show how score optimization yields a consistent tail-index estimator with Hill as a special case; and third, we illustrate the framework in simulation studies and in a real claim-severity application.

The paper is organized as follows. Section \ref{sec:pre} introduces the core concepts from both the scoring rule and extreme value theory literature that are used throughout the paper. Section \ref{sec:method} introduces our proposed method and Section \ref{sec:math} presents the main theoretical results. Section \ref{sec:sim} investigates the finite-sample performance of our method through simulation studies, demonstrating promising results even for relatively small sample sizes. Finally, Section \ref{sec:Data} applies the proposed scoring rule framework to a claim severity dataset.

\section{Background and Preliminaries} \label{sec:pre}
\subsection{Proper Scoring Rules}

{Let \(\mathcal{Y}=\mathbb{R}\) be equipped with the Borel
\(\sigma\)-algebra \(\mathcal{A}\), and let \(\mathcal{F}\) denote a
convex class of distribution functions on \(\mathcal{Y}\).} Before defining the score function, we introduce the term quasi-integrable.

\begin{definition}[$\mathcal{F}$-quasi-integrable]
    $f$ is quasi-integrable with respect to $F\in\mathcal{F}$ if it is $\mathcal{A}$ measurable and if either $f^+$ or $f^-$ has a real integral.
    $f$ is $\mathcal{F}$-quasi-integrable if it is quasi-integrable with respect to all {$F\in \mathcal{F}$}.
\end{definition}

With this setup, a scoring rule is defined as

\begin{definition}[Scoring rule] \label{def: Score rule}
   Any $S: {\mathcal{F} \times \mathcal{Y}} \to \overline{\mathbb{R}}$ is a scoring rule if it is $\mathcal{F}$-quasi-integrable for all ${F\in \mathcal{F}}$. 
\end{definition}

With this notation, we write 
    \begin{align}
        S(F,G):=\E[S(F,{Y})]=\int S(F,y)\dd G(y) \label{eq:scoring exp}
    \end{align}
    where $F,G\in \mathcal{F}$, and {$Y$} is a random variable with {distribution function $G$.} 

A natural requirement of a scoring rule is that it is proper, which has the following definition. 

    \begin{definition}[Proper scoring rule]\label{def:Proper}
    Scoring rule $S$ is proper relative to $\mathcal{F}$ if
    \begin{align*}
        S(G,G)\geq S(F,G)
    \end{align*}
    for all $F,G\in \mathcal{F}$. It is strictly proper if the {inequality} holds with equality if and only if $F=G$.
\end{definition}
Accordingly, using proper scoring rules incentivizes the forecaster to report their honest belief about the distribution. Therefore, the forecaster has no incentive to distort or strategically alter their forecast. {Definition~\ref{def:Proper} is stated under the positive score orientation, meaning that larger score values correspond to better predictive performance.} 

Two widely used scoring rules are the logarithmic score (LogS) and the continuous ranked probability score (CRPS). 
The former is defined as
\[ 
\mathrm{LogS}(F,y) 
:= \log {f}(y)\
\]
where {\(f\)} denotes the density of {\(F\)}. CRPS {is defined as
\begin{align*}
    CRPS(F,y):=-
\int_{-\infty}^{\infty}
\left(F(x)-\mathbf{1}\{y\le x\}\right)^2\,dx.
\end{align*}}
CRPS can be viewed as a special case of the Energy score and, unlike the logarithmic score, it does not require the existence of a density. 
More generally, the Energy score with parameter \(\beta \in (0,2)\) is defined as
\[ 
ES_\beta(F,y)
:=
\frac12 \E|X-X'|^\beta - \E|X-y|^\beta,
\]
where \(X,X' \sim F\) are independent random variables. The CRPS coincides with \(\mathrm{ES}_\beta\) for \(\beta = 1\). {In this paper, we present results for the Energy score, of which the CRPS is a special case. 

Both the logarithmic score and the Energy score with \(\beta\in(0,2)\) are strictly proper; see, for example, \cite{Gneit_Raft}.}

\subsection{Extreme Value Theory Framework}
A central result in Extreme Value Theory is the Fisher--Tippett--Gnedenko theorem, which states that suitably normalized maxima converge in distribution to one of three possible types: Gumbel, Fr\'echet, or Weibull. These domains of attraction correspond to fundamentally different tail behaviors in the underlying distribution. See \cite{dehaan} for more information.

The Fr\'echet domain, also referred to as the heavy-tailed domain, encompasses distributions whose survival function decays according to a power law. Such distributions have infinite right endpoints and exhibit polynomially decreasing tails, implying that extreme events occur with non-negligible probability. Formally, a distribution function $G$ is said to belong to the Fr\'echet domain of attraction if its survival function is regularly varying. That is, there exists a constant $\gamma > 0$ such that
\begin{align}
\lim_{t \to \infty} \frac{1 - G(tx)}{1 - G(t)} = x^{-1/\gamma}, \quad x > 0. 
\label{eq:RV}
\end{align}
The parameter $\gamma$, called the \emph{tail index}, quantifies the heaviness of the tail. When \eqref{eq:RV} holds, we say that the survival function $1-G$ is regularly varying with index $-1/\gamma$, and write $1-G \in \mathcal{R}_{-1/\gamma}$. 

An equivalent characterization can be expressed in terms of the tail quantile function
\[
U(x) = \inf\{ y \in \mathbb{R} : G(y) \ge 1 - 1/x \}, \qquad x > 1.
\]
If $1-G \in \mathcal{R}_{-1/\gamma}$, then the associated tail quantile function $U$ is regularly varying with index $\gamma$, that is, $U\in \mathcal{R}_{\gamma}$

In this paper we let $(Y_1,Y_2, \ldots , Y_n)$ be an i.i.d.\ sample with common distribution function $G$. Throughout, we assume that $1-G\in \mathcal{R}_{-1/\gamma_G}$. This implies that $Y$ is in the Fréchet domain with tail index $\gamma_G$. We now introduce the tail counterpart of $G$, which is given by $$G^t(x)=\frac{G(tx)-G(t)}{1-G(t)}\quad \text{for $t\geq 1$.}$$ We denote $G^\circ$ as the limit distribution of $G^t$ and we call it the \textit{tail distribution}. $Y^\circ$ denotes a random variable with distribution $G^\circ$. As $Y$ is in the Fréchet domain, we obtain
$$1-G^\circ(x)=\lim_{t\to \infty}1-G^t(x)=x^{-1/\gamma_G},$$ i.e., $G^t$ converges to a Pareto distribution with parameter $1/\gamma_G$.

\section{Methodology for Tail-Model Ranking and Estimation} \label{sec:method}

{The purpose of this section is to turn the tail comparison described in the
introduction into a computable score-based procedure. Rather than evaluating
predictive distributions on the original sample space, we evaluate the score on
normalized exceedances above a high threshold. The distribution used in the score
is therefore a tail object associated with the predictive model. In the main
implementation, this tail object is the Pareto limit distribution implied by the
model, so the comparison directly targets differences in tail indices. More
generally, the same framework can also be applied to finite-threshold tail
counterparts when the aim is to assess tail fit above a fixed high threshold; see
Remark~3.1.}

Formally, for $F\in \mathcal{F}$ and any proper scoring rule, $S$, we have

\begin{align*}
    \E[S(G^\circ, Y^\circ)]\geq \E[S(F, Y^\circ)],
\end{align*}

where $Y^\circ \sim G^\circ$. If $S$ is a strictly proper scoring rule, we have equality if and only if $F=G^\circ$. As $G^\circ$ is a Pareto distribution it is entirely described by the tail index, $\gamma_G$, and hence
\begin{align*}
    \E[S(G^\circ, Y^\circ)]=\E[S(F, Y^\circ)]
\end{align*}
if and only if $F$ is a {Pareto distribution} with the same tail index as $G^\circ$.

To connect this population comparison with sample quantities, note that
\[
(V^t_{k,1}, V^t_{k,2}, \ldots , V^t_{k,k})\stackrel{d}{=} \left(\frac{Y_{n,n-k+1}}{Y_{n,n-k}}, \frac{Y_{n,n-k+2}}{Y_{n,n-k}}, \ldots , \frac{Y_{n,n}}{Y_{n,n-k}}\right)\Bigg| Y_{n,n-k}=t,
\]
where $(V_i^t)_{i=1}^k$ are i.i.d.\ random variables with distribution $G^t$ and {$Y_{n,1} \leq Y_{n,2} \leq \cdots \leq Y_{n,n}$ are the order statistics}. As $G^t\to G^\circ$ when $t\to \infty$, a natural estimator for $\E[S(F, Y^\circ)]$ is 
\begin{align}
    S_k(F):=\frac{1}{k}\sum_{i=1}^k S\left(F, \frac{Y_{n,n-i+1}}{Y_{n,n-k}}\right). \label{eq: estimator}
\end{align}
Consequently, this introduces an additional tuning parameter, $k$. It should be chosen small enough for $G^t$ to provide a good approximation to $G^\circ$, yet large enough to ensure that the reduced sample remains sufficiently large for reliable inference. The EVT literature offers several methods for selecting $k$, most of which are based on identifying a range of values over which the estimator exhibits stability. In the simulation studies, we investigate this stability behavior.

In practice, we implement the procedure in the following way. First, choose a candidate set \(\{F_1,\ldots,F_m\}\) and a grid \(\mathcal{K}\) of values for \(k\). Next, compute \(S_k(F_j)\) for each \(j\) and \(k\in\mathcal{K}\). Then identify a stability range \(\mathcal{K}_{\mathrm{stab}}\subseteq \mathcal{K}\), where the ordering is not driven by erratic local fluctuations. Finally, report the ranking based on both pointwise curves \(k\mapsto S_k(F_j)\) and the average score over the stable range,
\[
\bar S(F_j):=\frac{1}{|\mathcal{K}_{\mathrm{stab}}|}\sum_{k\in\mathcal{K}_{\mathrm{stab}}} S_k(F_j).
\]
This separates the statistical comparison (score level) from the threshold-sensitivity diagnostic (stability over \(k\)).

\begin{remark} \label{rem:gg}

An important question is how to choose the distribution \(F\) used in the score.
There are two natural choices, corresponding to different inferential targets.

{The first choice is to use the tail limit distribution of \(F\), that is, the
Pareto distribution with tail index \(\gamma\) implied by \(F\).} This is appropriate when
the object of interest is the asymptotic tail index itself. In this formulation, two
predictive distributions with the same tail index but different finite-threshold
behaviour or different second-order properties have the same Pareto limit and are
therefore asymptotically indistinguishable under the score.

The second choice is to use the finite-threshold tail counterpart \(F^t\), for instance
with \(t=Y_{n,n-k}\). This retains information about the predictive distribution above
a high but finite threshold and is therefore more appropriate when the objective is
to assess finite-threshold tail fit. In this case, the score evaluates the conditional
distribution of normalized exceedances above \(t\), rather than only the limiting
Pareto approximation.

In both cases, propriety should be understood relative to the transformed
tail object being scored, not to the full predictive distribution on the original sample
space. 
\end{remark}

\begin{remark} 
    A practical issue is the choice of scoring rule. Two of the most widely used scoring rules in the literature are $LogS$ and $CRPS$. Notably, the $LogS$ requires the predictive distribution to admit a density. This is unproblematic when $F$ is specified as a Pareto distribution, for which the density is available in closed form. However, if one instead works with $F^t$, evaluation of the corresponding density may be unavailable or computationally cumbersome. In such situations, the $CRPS$ provides a convenient alternative, since it can be computed directly from the distribution function and therefore does not require density evaluation. 
\end{remark}

{The same framework can also be used for tail-index estimation. In this case, we restrict attention to Pareto predictive distributions, since
the objective is to estimate the tail index rather than to compare general
predictive distributions. We propose a
grid-search version of the estimator based on the empirical score in \eqref{eq: estimator}.
Let {\(\Gamma := [\gamma_L,\gamma_U]\), with \(0<\gamma_L<\gamma_U\)},
and for each \(\gamma \in \Gamma\), let \(F_\gamma\) denote the Pareto
distribution with tail index \(\gamma\). For a given \(k\), define
\begin{align*}
\widehat{\gamma}_{k}(S):
=
\argmax_{\gamma\in\Gamma}
\frac{1}{k}\sum_{i=1}^{k}
S\left(
F_\gamma,
\frac{Y_{n,n-i+1}}{Y_{n,n-k}}
\right).
\end{align*}
}
{In numerical implementations, the maximization over \(\Gamma\) may be carried
out over a finite grid.}
\section{Asymptotic Results} \label{sec:math}

In this section, we present consistency and asymptotic normality results for
scoring rules based on the normalized order statistics introduced in
Section~\ref{sec:method}. We also establish consistency for a tail-index
estimator obtained by optimizing the empirical score.

To support the use of scoring rules for ranking tail models, it is important to establish their large-sample properties. Consistency is a fundamental requirement for any method aimed at tail evaluation, as it ensures that the correct model is identified in the limit. Although asymptotic normality is not strictly necessary, it is highly useful since it makes it possible to assess whether differences in scores between competing models are statistically significant.

\subsection{Scoring with Normalized Upper Order Statistics}

We begin with consistency. Consistency and asymptotic normality follow directly from \cite{segers} (Theorem~2.1 and Theorem~4.5). Here the assumptions and theorem are restated in our notation. 

\begin{assumption} \label{ass:segers_ass}
    Let $S:\mathcal{F} \times [1,\infty) \to \mathbb{R}$ be an a.e. continuous function for $F\in \mathcal{F}$, and such that $|S(F,y)|\leq Ay^{(1-\delta)/\gamma_{G}}$ for some $A>0$ and $\delta\in (0,1)$.
\end{assumption}

This assumption is mild and does not impose a substantial restriction, since commonly used scoring rules are typically continuous. {The domination condition provides the necessary tail control.}

\begin{theorem}
\label{thm:consistency}
Let Assumption~\ref{ass:segers_ass} hold. In addition, assume that $k,n \to \infty$ and $k/n \to 0$. Then
\begin{align*}
S_k(F)
\;\rightarrow\;
\E\!\left[S(F, Y^\circ)\right]
\end{align*}
in probability, where $Y^\circ$ has distribution $G^\circ$.
\end{theorem}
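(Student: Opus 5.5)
The plan is to recognize $S_k(F)$ as a tail empirical integral with respect to a random measure, and then to invoke the residual estimation result of \cite{segers} (his Theorem~2.1) with $f(\cdot)=S(F,\cdot)$. Write
\[
S_k(F)=\int_{[1,\infty)} S(F,y)\,\dd \widehat{P}_k(y),\qquad \widehat{P}_k:=\frac1k\sum_{i=1}^k \delta_{Y_{n,n-i+1}/Y_{n,n-k}} .
\]
We must show $\int f\,\dd\widehat P_k\to\int f\,\dd G^\circ$ in probability, where $G^\circ$ is Pareto with survival function $y^{-1/\gamma_G}$.

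First we check that the hypotheses line up with Assumption~\ref{ass:segers_ass}. The function $f$ is a.e.\ continuous on $[1,\infty)$. Since $|f(y)|\le A y^{(1-\delta)/\gamma_G}$ and $\E[(Y^\circ)^{s}]<\infty$ for $s<1/\gamma_G$, the limit $\E[S(F,Y^\circ)]$ is finite.

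The self-contained argument I would follow proceeds in three steps.

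\emph{Step 1.} Show that $\widehat P_k$ converges weakly in probability to $G^\circ$. One route uses the conditional i.i.d.\ representation of Section~\ref{sec:method}: given $Y_{n,n-k}=t$, the normalized exceedances are i.i.d.\ $G^t$. Moreover $Y_{n,n-k}\to\infty$ in probability, since $k/n\to0$ gives $Y_{n,n-k}/U(n/k)\to1$. Regular variation \eqref{eq:RV} gives $G^t\to G^\circ$ locally uniformly by Pólya's theorem. A Glivenko--Cantelli argument (variance $O(1/k)$ conditionally) then yields $\sup_y|\widehat P_k[1,y]-G^\circ(y)|\to0$ in probability.

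\emph{Step 2.} Pass from weak convergence to convergence of the integral for a bounded a.e.\ continuous truncation $f\mathbf 1\{y\le M\}$. This is the continuous mapping / portmanteau step; it works because $G^\circ$ is absolutely continuous, so the discontinuity set of $f$ (and the point $M$) is $G^\circ$-null.

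\emph{Step 3.} Control the tail piece $\int_{(M,\infty)}|f|\,\dd\widehat P_k\le A\int_{(M,\infty)} y^{(1-\delta)/\gamma_G}\,\dd\widehat P_k$ uniformly in $n$. Conditionally on $Y_{n,n-k}=t$, its expectation is $A\,\E[(V^t)^{(1-\delta)/\gamma_G}\mathbf 1\{V^t>M\}]$. By Potter's bounds, for $t$ large,
\[
1-G^t(y)\le C y^{-1/\gamma_G+\epsilon}\quad\text{for } y\ge1,
\]
with $\epsilon<\delta/\gamma_G$. Integrating by parts therefore bounds this expectation by $C'M^{-(\delta/\gamma_G-\epsilon)}$. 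Letting $n\to\infty$ and then $M\to\infty$, combined with the same bound for $G^\circ$ itself, closes the argument.

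The main obstacle is Step~3: obtaining the uniform integrability of the heavy-tailed integrand with respect to the \emph{random} normalization. Handling it requires care that the Potter bound applies on the event $\{Y_{n,n-k}>t_0\}$, whose probability tends to one. This is exactly the content of Segers' Theorem~2.1, so in the written proof I would simply restate the hypotheses in our notation and cite it.
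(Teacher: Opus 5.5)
Your proposal is correct and follows the paper's route: the paper also gives no separate argument, and obtains the result by recognising the hypotheses of Theorem~2.1 in \cite{segers} as Assumption~\ref{ass:segers_ass} and applying that theorem with $f=S(F,\cdot)$. Your three-step sketch (weak convergence of the normalized tail empirical measure, truncation at $M$, then uniform integrability via Potter bounds on $\{Y_{n,n-k}>t_0\}$) is a sound outline of that argument, with one caveat. The conditional i.i.d.\ representation you use in Step~1 tacitly requires $G$ to be continuous, since atoms produce ties at the threshold; writing $Y_i=U(Z_i)$ with $Z_i$ i.i.d.\ standard Pareto avoids this.
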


In practical terms, this result ensures that empirical tail scores converge to their population targets when \(k\) is chosen in the standard EVT regime.

We next turn to asymptotic normality, which follows as a direct consequence of Theorem~4.5 in \cite{segers}. To establish this result, it is necessary to control the growth rate of $k$. We therefore introduce the following notation. Let $\xi \in \mathbb{R}$, define the function
$h_\xi : (0,\infty) \to \mathbb{R}$ by
\[
h_\xi(x)
= \int_1^x y^{\xi-1}\,dy
=
\begin{cases}
\displaystyle \frac{x^\xi - 1}{\xi}, & \text{if } \xi \neq 0, \\[1ex]
\log x, & \text{if } \xi = 0.
\end{cases}
\]

We may now restate Theorem~4.5 of \cite{segers} in our notation.
\begin{theorem}\label{thm:normality}
Let $G$ be a {distribution function} with tail quantile function $U$ satisfying
\begin{equation}
\lim_{t \to \infty} \frac{U(tx)/U(t) - x^{\gamma_G}}{a(t)}
= \pm x^{\gamma_G} h_{\xi}(x), \qquad x \ge 1,
\end{equation}
for some $\gamma_G > 0$, $\xi \le 0$ and $a \in \mathcal{R}_{\xi}$ satisfying
$a(t) \to 0$ as $t \to \infty$.
Let $F\in\mathcal{P}$ be a fixed candidate distribution used in the score comparison.
Let $S :\mathcal{F} \times [1,\infty) \to \mathbb{R}$ be an absolutely continuous function for $F\in \mathcal{F}$ with
$\frac{\partial S}{\partial y}$ such that 
\begin{align*}
\Bigg|\frac{\partial S}{\partial y}\Bigg| \le A y^{\rho-1}, \qquad y \ge 1,
\end{align*}
for some $A > 0$ and some $\rho < (2\gamma_G)^{-1}$.
If $k$ satisfies
$\sqrt{k}\,a{\left(\frac{n+1}{k+1}\right)} \to 0$ when $k,n \to \infty$ and $k/n \to 0$, then
\begin{equation} \label{eq:normality}
\sqrt{k}\Biggl(S_k(F) - \amsmathbb{E}[S(F,Y^\circ)]\Biggr)
\xrightarrow{d} \mathcal{N}\!\left(0,\operatorname{Var}(S(F,Y^\circ))\right).
\end{equation}
\end{theorem}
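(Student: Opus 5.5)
The plan is to reduce the statement to Theorem~4.5 of \cite{segers} by fixing the candidate $F$ and setting $f(y):=S(F,y)$ for $y\ge 1$. The empirical score is then
\[
S_k(F)=\frac1k\sum_{i=1}^k f\!\left(\frac{Y_{n,n-i+1}}{Y_{n,n-k}}\right),
\]
which is exactly the residual statistic studied there. The hypotheses on $f$ are the ones required in \cite{segers}: absolute continuity and the derivative bound $|f'(y)|\le Ay^{\rho-1}$ with $\rho<(2\gamma_G)^{-1}$.

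\textbf{Moments.} The derivative bound gives $|f(y)|\le |f(1)|+A h_\rho(y)$. Hence $f(Y^\circ)$ has a finite second moment, because $Y^\circ$ is Pareto with $\E[(Y^\circ)^{2\rho}]<\infty$ precisely when $2\rho<1/\gamma_G$. This makes the limiting variance well defined.

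\textbf{Structure of the limit in \cite{segers}.} I would write
\[
\sqrt{k}\bigl(S_k(F)-\E f(Y^\circ)\bigr)
\]
as the sum of three pieces:
\begin{itemize}
\item a centred i.i.d.\ term, $\sqrt{k}$ times the average of $f(V_i^t)-\E f(V^t)$ with $t=Y_{n,n-k}$, using the conditional representation from Section~\ref{sec:method};
\item a bias term, $\sqrt{k}\bigl(\E f(V^t)-\E f(Y^\circ)\bigr)$;
\item a threshold-fluctuation term, coming from the randomness of $Y_{n,n-k}/U(n/k)$.
\end{itemize}

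\textbf{Bias term.} The second-order condition gives a bias of order $\sqrt{k}\,a(n/k)$. This tends to zero by the assumption $\sqrt{k}\,a\bigl((n+1)/(k+1)\bigr)\to 0$, since $a\in\mathcal R_\xi$ makes the two arguments asymptotically equivalent.

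\textbf{Threshold-fluctuation term.} This is the main obstacle. In \cite{segers} it produces an additional Gaussian component proportional to $\E[(Y^\circ) f'(Y^\circ)]$ multiplied by the limit of $\sqrt{k}\,(Y_{n,n-k}/U(n/k)-1)$. The general limiting variance there is therefore of the form $\operatorname{Var}(f(Y^\circ))$ plus a correction coming from the random threshold.

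Normalized residuals $Y_{n,n-i+1}/Y_{n,n-k}$ are scale invariant, so I expect the threshold fluctuation to cancel. The Rényi representation supports this: with Pareto variables, $Y_{n,n-i+1}/Y_{n,n-k}$ is exactly distributed as an order statistic of $k$ i.i.d.\ Pareto variables, independent of $Y_{n,n-k}$. The conditional i.i.d.\ representation in Section~\ref{sec:method} gives the same picture.

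To make this rigorous, I would:
\begin{enumerate}
\item Write $Y=U(Z)$ with $Z$ standard Pareto.
\item Use the second-order condition, uniformly via Potter-type bounds, to approximate $U(Z_{n,n-i+1})/U(Z_{n,n-k})$ by $(Z_{n,n-i+1}/Z_{n,n-k})^{\gamma_G}$ with error $O(a(n/k))$ times a weight. The derivative bound controls this weight after integration.
\item Apply the classical CLT to the exact i.i.d.\ Pareto sample $(Z_{n,n-i+1}/Z_{n,n-k})^{\gamma_G}$, $i=1,\dots,k$.
\end{enumerate}

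The variance is then exactly $\operatorname{Var}(S(F,Y^\circ))$, and combining the pieces by Slutsky's lemma yields \eqref{eq:normality}. The delicate part is the uniform control of the approximation error near the largest order statistics. There I would split the sum at $i\le \varepsilon k$ and use the moment condition $\rho<(2\gamma_G)^{-1}$ to show that the extreme-block contribution is $o_P(1)$ after multiplying by $\sqrt{k}$.
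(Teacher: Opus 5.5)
Your reduction is the same as the paper's. The paper gives no separate argument; it simply restates Theorem~4.5 of \cite{segers} with $f=S(F,\cdot)$. You verify its hypotheses correctly, including the second-moment check via $|f(y)|\le |f(1)|+A h_\rho(y)$ and $2\rho<1/\gamma_G$.

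One correction to how you describe that theorem. For residual statistics normalized by the random threshold $Y_{n,n-k}$, the limiting variance in \cite{segers} is already exactly $\operatorname{Var}(f(Y^\circ))$. For the Hill estimator, for example, it is $\gamma_G^2=\operatorname{Var}(\log Y^\circ)$. So there is no threshold correction proportional to $\E[Y^\circ f'(Y^\circ)]$ that needs to cancel. In your own decomposition with $t=Y_{n,n-k}$, the conditional i.i.d.\ term and the bias term evaluated at this random $t$ already add up to the whole statistic, so there is no third piece. The randomness of $Y_{n,n-k}$ enters only through the bias, which is $O_P(\sqrt{k}\,a(n/k))\to 0$.

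Your $U(Z)$/R\'enyi/Potter sketch is a valid re-derivation of Segers' result and reaches the right variance. It is not needed once the cited theorem is invoked with its hypotheses checked.
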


{The derivative condition is a regularity condition ensuring that the scoring rule behaves smoothly in the tail. As $y\mapsto S(F,y)$ is assumed absolute continuous, the condition also ensures that the variance exists.} This normal approximation provides a direct basis for uncertainty quantification when comparing score levels across competing tail models.

The following corollary shows under which conditions Theorem~\ref{thm:consistency} and Theorem~\ref{thm:normality} are applicable for LogS and ES.  

\begin{corollary}\label{col:cons}
Let \(F_\gamma\) denote a Pareto distribution with tail index \(\gamma\). Then:
\begin{enumerate}
    \item The logarithmic score satisfies the assumptions of Theorems~\ref{thm:consistency}
    and~\ref{thm:normality} for all \(\gamma>0\). Moreover,
    \[
    \operatorname{Var}(\mathrm{LogS}(F_\gamma,Y^\circ))
    =
    \left(1+\frac{1}{\gamma}\right)^2\gamma_G^2 .
    \]

    \item The Energy score satisfies the assumptions of Theorem~\ref{thm:consistency}
    whenever
    \(
    \beta < \min\left\{\frac{1}{\gamma},\frac{1}{\gamma_G}\right\}.
    \)
    It satisfies the assumptions of Theorem~\ref{thm:normality} whenever, in addition,
    \(
    \beta < \frac{1}{2\gamma_G}.
    \)
    In particular, for \(\beta=1\), corresponding to the CRPS,
    \[
    \operatorname{Var}(\mathrm{ES}_1(F_\gamma,Y^\circ))
    =
    \frac{1}{1-2\gamma_G}
    -\frac{2a}{1-(1+p)\gamma_G}
    +\frac{a^2}{1-2p\gamma_G}
    -
    \left(
    \frac{1}{1-\gamma_G}
    -\frac{a}{1-p\gamma_G}
    \right)^2,
    \]
    where \(a=2\gamma/(\gamma-1)\) and \(p=1-1/\gamma\), whenever the displayed
    moments are finite.
\end{enumerate}
\end{corollary}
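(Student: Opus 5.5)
The argument verifies the hypotheses of Theorems~\ref{thm:consistency} and~\ref{thm:normality} directly, separately for each score, by writing $y\mapsto S(F_\gamma,y)$ explicitly on $[1,\infty)$ and bounding its growth and that of its derivative. The variances then follow from moments of $Y^\circ$, using that $Y^\circ$ is Pareto with tail index $\gamma_G$. This means $\log Y^\circ\sim\mathrm{Exp}(1/\gamma_G)$, so $\operatorname{Var}(\log Y^\circ)=\gamma_G^2$, and $\E[(Y^\circ)^s]=1/(1-s\gamma_G)$ for $s\gamma_G<1$.

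\textbf{Logarithmic score.} For $y\ge1$ the Pareto density is $f_\gamma(y)=\gamma^{-1}y^{-1/\gamma-1}$, so
\[
\mathrm{LogS}(F_\gamma,y)=-\log\gamma-(1+1/\gamma)\log y .
\]
This is continuous. Since $\log y\le C_\delta y^{(1-\delta)/\gamma_G}$, it is bounded by $Ay^{(1-\delta)/\gamma_G}$, which gives Assumption~\ref{ass:segers_ass}. It is also absolutely continuous, with $|\partial_y S|=(1+1/\gamma)y^{-1}$. This is of the form $Ay^{\rho-1}$ with $\rho=0<(2\gamma_G)^{-1}$, so Theorem~\ref{thm:normality} applies. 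For the variance, the constant term drops out, leaving
\[
\operatorname{Var}(\mathrm{LogS}(F_\gamma,Y^\circ))=(1+1/\gamma)^2\operatorname{Var}(\log Y^\circ)=(1+1/\gamma)^2\gamma_G^2 .
\]

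\textbf{Energy score.} Write $ES_\beta(F_\gamma,y)=\tfrac12\E|X-X'|^\beta-\E|X-y|^\beta$ with $X,X'\sim F_\gamma$. The first term is finite once $\beta<1/\gamma$, because $\E X^\beta<\infty$. For the second term, $\E|X-y|^\beta\le C(\E X^\beta+y^\beta)$, so $|S|\le A y^\beta$ on $[1,\infty)$. The assumption of Theorem~\ref{thm:consistency} then needs $\beta\le(1-\delta)/\gamma_G$ for some $\delta\in(0,1)$, i.e.\ $\beta<1/\gamma_G$. Continuity in $y$ follows by dominated convergence. For normality, I differentiate under the integral:
\[
\partial_y\E|X-y|^\beta=\beta\,\E\bigl[|y-X|^{\beta-1}\operatorname{sign}(y-X)\bigr].
\]
I need to show this is $O(y^{\beta-1})$ for large $y$. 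The plan is to split into $X<y/2$, where $|y-X|^{\beta-1}\asymp y^{\beta-1}$, and $X\ge y/2$. The second region is the main obstacle when $\beta<1$, because of the integrable singularity at $X=y$. There I would use the bounded density of $X$ near $y$, which is $O(y^{-1/\gamma-1})$, to bound that contribution by $O(y^{\beta-1/\gamma-1}+y^{-1/\gamma})$, and this is $O(y^{\beta-1})$ because $\beta<1/\gamma$. Taking $\rho=\beta<(2\gamma_G)^{-1}$ gives the derivative bound, and absolute continuity follows from this local integrability.

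\textbf{CRPS variance.} For $\beta=1$ I need the closed form of $\E|X-y|$ for Pareto $X$ with $\gamma<1$. Using $\E|X-y|=y-\E X+2\int_y^\infty(1-F_\gamma(x))\,dx$ with $1-F_\gamma(x)=x^{-1/\gamma}$ gives
\[
\E|X-y|=y-\frac{1}{1-\gamma}+\frac{2\gamma}{1-\gamma}\,y^{1-1/\gamma}.
\]
Since $\frac{2\gamma}{1-\gamma}=-a$ with $a=2\gamma/(\gamma-1)$ and $p=1-1/\gamma$, this reads
\[
ES_1(F_\gamma,y)=c-\bigl(y-a y^{p}\bigr)
\]
for a constant $c$. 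Hence $\operatorname{Var}(ES_1(F_\gamma,Y^\circ))=\operatorname{Var}(Y^\circ-a(Y^\circ)^p)$. Expanding with the moment formula gives
\[
\E(Y^\circ)^2-2a\E(Y^\circ)^{1+p}+a^2\E(Y^\circ)^{2p}-\bigl(\E Y^\circ-a\E (Y^\circ)^p\bigr)^2,
\]
which is exactly the displayed expression wherever these moments are finite.
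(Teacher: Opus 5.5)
Your route is the paper's: write $y\mapsto S(F_\gamma,y)$ explicitly, then check the growth bound of Assumption~\ref{ass:segers_ass} and the derivative bound of Theorem~\ref{thm:normality}. The following parts are correct:
\begin{itemize}
\item the LogS verification;
\item the envelope $\E|X-y|^\beta\le C(\E X^\beta+y^\beta)$, which is a cleaner alternative to the paper's split at $x=y$;
\item both variance computations.
\end{itemize}
The paper's proof never derives the two variance formulas. Your argument via $\log Y^\circ\sim\mathrm{Exp}(1/\gamma_G)$ and via the closed form $\E|X-y|=y-\frac{1}{1-\gamma}-ay^{p}$ fills this in properly. That closed form is valid for $\gamma<1$, which $\beta=1<1/\gamma$ forces anyway.

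The step that fails as written is the derivative bound for $ES_\beta$. The paper only asserts $\beta\int|x-y|^{\beta-1}\,dF_\gamma(x)\le A_6y^{\beta-1}$; you try to justify it. Your estimate on $\{X\ge y/2\}$ carries a term $O(y^{-1/\gamma})$, and the claim that this is $O(y^{\beta-1})$ ``because $\beta<1/\gamma$'' is wrong.

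\begin{itemize}
\item $y^{-1/\gamma}=O(y^{\beta-1})$ holds iff $\beta\ge 1-1/\gamma$. That is a \emph{lower} bound on $\beta$, and it fails whenever $\gamma>1$ and $\beta<1-1/\gamma$.
\item This case is inside the corollary. Take $\gamma=2$, $\gamma_G=3/2$, $\beta=0.3$: all stated conditions hold.
\item Yet your bound is of order $y^{-1/2}$. Dominating it by $Ay^{\rho-1}$ needs $\rho\ge 1/2>(2\gamma_G)^{-1}=1/3$, so Theorem~\ref{thm:normality} is not verified there.
\end{itemize}

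The extra term comes from bounding $|y-X|^{\beta-1}$ by a constant away from the singularity, which discards the decay for $X\gg y$. The fix is to split $\{X\ge y/2\}$ at $2y$.
\begin{itemize}
\item On $\{y/2\le X\le 2y\}$ the density is at most $\gamma^{-1}(y/2)^{-1/\gamma-1}$, and $\int_{y/2}^{2y}|y-x|^{\beta-1}\,dx=\beta^{-1}\bigl((y/2)^\beta+y^\beta\bigr)$. This part is therefore $O(y^{\beta-1/\gamma-1})$.
\item On $\{X>2y\}$ one has $X/2\le X-y\le X$, so $|y-X|^{\beta-1}\le \max\{1,2^{1-\beta}\}X^{\beta-1}$. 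The contribution is at most $c\int_{2y}^\infty x^{\beta-1/\gamma-2}\,dx=O(y^{\beta-1/\gamma-1})$, using $\beta<1/\gamma$.
\end{itemize}
So $\{X\ge y/2\}$ contributes $O(y^{\beta-1/\gamma-1})=O(y^{\beta-1})$, with no extra term. Also, $\E|y-X|^{\beta-1}$ is bounded for $y$ in compact sets, since the density is at most $1/\gamma$. Together these give $|\partial_y ES_\beta|\le Ay^{\beta-1}$ on all of $[1,\infty)$, and $\rho=\beta$ then works as you intended.
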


\begin{proof}
    We verify the conditions separately for LogS and \(\mathrm{ES}_\beta\).

    First consider LogS. If we denote the density of $F_\gamma$ as $f_\gamma$, we have
    \begin{align*}
        \mathrm{LogS}(F_\gamma,y)=\log(f_\gamma(y))=\log(1/\gamma)-\left(\frac{1}{\gamma}+1\right)\log(y).
    \end{align*}     
    $\mathrm{LogS}(F_\gamma,y)$  is absolute continuous for $y>1$ and for any $\delta\in (0,1)$ it is dominated by $Ay^{(1-\delta)/\gamma_G}$, where $A$ is some constant. Further, we have
    \begin{align*}
        \frac{\partial \mathrm{LogS}}{\partial y}=-\left(\frac{1}{\gamma}+1\right)y^{-1}
    \end{align*}
    {For any \(\rho \in (0,(2\gamma_G)^{-1})\), this derivative is dominated by \(Ay^{\rho-1}\).}

    Next consider \(\mathrm{ES}_\beta\):
    \begin{align*}
\mathrm{ES}_\beta(F_\gamma,y)
&= {\tfrac{1}{2} \int_1^\infty\int_1^\infty |x - x'|^\beta \, \dd F_\gamma(x) \dd F_\gamma(x')-\int_1^\infty |x - y|^\beta \, \dd F_\gamma(x)}.
\end{align*}
For $y\geq 1$, this expression is absolutely continuous. We first bound the first integral. {For $\beta<\frac{1}{\gamma}$, we have that}
\begin{align}
    \int_1^\infty |x - y|^\beta \, dF(x)&\leq A_1 \left(\int_1^y(y-1)^\beta x^{-1/\gamma -1}\dd x + \int_y^\infty x^{\beta-1/\gamma-1}\dd x\right) \label{eq:ES1}\\
    &{\leq A_2 (y^{\beta}+y^{\beta-1/\gamma-1})}\nonumber \\
    &{\le A_3y^\beta} \nonumber
\end{align}
for some constants $A_1,A_2,{A_3>0}$. { Hence we can conclude that \eqref{eq:ES1} is finite. Furthermore, for $\beta<\frac{1}{\gamma_G}$ there exist some $\delta\in(0,1)$ and $A$ such that it is dominated by $Ay^{(1-\delta)/\gamma_G}$}. Next, for the second integral, it is enough to verify finiteness since it does not depend on $y$:
\begin{align}
&\int_1^\infty\int_1^\infty |x - x'|^\beta \, \dd F_\gamma(x) \dd F_\gamma(x')\nonumber\\&\quad\leq \int_1^\infty x^{\beta-2/\gamma-1}-\left(A_4 + \int_1^\infty x'^{\beta-1/\gamma-1}\dd x'\right)x^{-1/\gamma-1}\dd x \label{eq:CRPS_finite}
\end{align}
for some constant $A_4>0$. The inner integral is finite if and only if $\beta < \frac{1}{\gamma}$. Moreover, note that
\begin{align*}
    \int_1^\infty x^{\beta - 2/\gamma - 1} - A_5 x^{-1/\gamma - 1} \, \dd x,
\end{align*}
where $A_5$ is a constant, is finite if and only if $\beta \leq \frac{2}{\gamma}$. Consequently, \eqref{eq:CRPS_finite} is finite if and only if $\beta < \frac{1}{\gamma}$. {Hence, we can conclude that $\mathrm{ES}_\beta$ satisfies Assumption \ref{ass:segers_ass} for \(\beta<\min\left\{1/\gamma,1/\gamma_G\right\}\).} Lastly, we have
    \begin{align*}
        \left|\frac{\partial \mathrm{ES}_\beta}{\partial y}\right|\leq {\beta \int_1^\infty |x-y|^{\beta-1}\dd F(x)} \leq A_6y^{{\beta-1}},
    \end{align*}
where $A_6$ is a constant. {If \(\beta < (2\gamma_G)^{-1}\), choose
\(\rho\) with \(\beta < \rho < (2\gamma_G)^{-1}\). Since \(y^{\beta-1} \leq
y^{\rho-1}\) for \(y \geq 1\), the derivative condition is satisfied.}

    Therefore both scoring rules satisfy the required conditions, which proves the claim.
\end{proof}

{This shows that the derivative condition mainly affects $\mathrm{ES}_\beta$, where it imposes an additional restriction on the score parameter $\beta$. For $\mathrm{LogS}$, the condition is automatically satisfied and therefore imposes no further restriction.}

\begin{remark} 
We highlight two practical aspects that are relevant for applying the asymptotic results.

\begin{enumerate}
    \item[(i)] The condition for \(ES_\beta\) depends on the true tail index \(\gamma_G\), which is
unknown in practice. This limitation should not be interpreted as excluding the Energy
score for practical applications. For the common CRPS case \(\beta=1\), the consistency condition is satisfied when the relevant tail indices are below one. This is a natural range in many insurance applications, since \(\gamma_G<1\) corresponds to finite mean claim sizes. The asymptotic normality condition is stronger and requires \(\gamma_G<1/2\), corresponding to finite variance; this is more restrictive, but still not uncommon in insurance applications. The logarithmic score avoids this restriction and therefore
provides a convenient default for tail-model ranking.

    \item[(ii)] The asymptotic variance in \eqref{eq:normality} depends on the unknown tail index
\(\gamma_G\). When an analytic variance formula is used, this parameter must
therefore be replaced by a plug-in value, for instance a tail-index estimate.
Alternatively, the variance may be estimated empirically.
\end{enumerate}
\end{remark}

\begin{remark}

The results in Theorems~\ref{thm:consistency} and~\ref{thm:normality}
are stated for an i.i.d. sample with a common tail distribution. Hence, they do not
cover covariate-dependent or otherwise non-identically distributed predictive
distributions. Extending the asymptotic theory to non-identically distributed predictive
distributions would require additional assumptions on the conditional tail
behavior, such as uniform regular variation or a common
tail index across covariate values; see, for example,
\cite{BladtOhlenschlaeger2025}. The effect of
heterogeneous scaling is investigated numerically in Section~\ref{subsec:scales}, but a detailed
theoretical treatment of such extensions is beyond the scope of the present paper.
\end{remark}

Taken together, Theorems~\ref{thm:consistency}--\ref{thm:normality} and Corollary~\ref{col:cons}: compute empirical tail scores over a range of (k), compare candidates by the score level, and use the normal approximation to assess whether observed score differences are practically meaningful.

\subsection{Tail Index Estimation via Score Optimization}

Beyond their use for ranking predictive distributions, we also investigate whether scoring rules can be employed for estimating the tail index. In this section, we assume that the predictive distribution is Pareto, denoted by $F_\gamma$, where $\gamma$ is the candidate tail index. Let 
\(\Gamma := [\gamma_L,\gamma_U]\), with {\(0<\gamma_L<\gamma_G<\gamma_U<\infty\)}. For shorthand, define
\[
S_k(\gamma):=\frac{1}{k}\sum_{i=1}^k S\left(F_\gamma, \frac{Y_{n,n-i+1}}{Y_{n,n-k}}\right).
\]
We are now in a position to state the result.
\begin{theorem} \label{thm:argmin}
	 Assume that $S$ is continuous on \(\Gamma\times[1,\infty)\) and a strictly proper scoring rule. Further assume Assumption \ref{ass:segers_ass} and that for $\gamma\in \Gamma$, {there exists
a measurable function \(D:[1,\infty)\to[0,\infty)\) such that
\begin{align}
\sup_{\gamma\in\Gamma}|S(F_\gamma,z)|\le D(z),
\qquad z\ge 1, \label{eq:uni_bound}
\end{align}
with $D(z)\le A z^{(1-\delta)/\gamma_G}$
for some \(A>0\) and \(\delta\in(0,1)\). In particular,
\(E[|S(F_\gamma,Y^\circ)|]<\infty\) for all \(\gamma\in\Gamma\).} Let $k\to\infty$ with $n/k\to\infty$. Then
    \begin{align}
        \label{eq:argmin_theo}\hat{\gamma}_k(S)=\argmax_{\gamma\in \Gamma} \frac{1}{k}\sum_{i=1}^k S\left(F_\gamma, \frac{Y_{n,n-i+1}}{Y_{n,n-k}}\right) \xrightarrow{\pr} \gamma_G.
    \end{align}
\end{theorem}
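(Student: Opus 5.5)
The proof follows the standard argmax (M-estimation) consistency route. Write $M(\gamma):=\E[S(F_\gamma,Y^\circ)]$ for the population criterion. By strict propriety and the fact that $G^\circ=F_{\gamma_G}$, the value $\gamma_G$ is the unique maximiser of $M$ on $\Gamma$. The work then reduces to two points: showing that $S_k(\gamma)\to M(\gamma)$ uniformly in $\gamma\in\Gamma$ in probability, and showing that the maximum at $\gamma_G$ is well separated. The classical argument (e.g.\ van der Vaart, Theorem 5.7) then gives $\hat\gamma_k(S)\xrightarrow{\pr}\gamma_G$.

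\emph{Continuity and well-separation of $M$.} Continuity of $\gamma\mapsto M(\gamma)$ follows from continuity of $S$ on $\Gamma\times[1,\infty)$ together with the envelope \eqref{eq:uni_bound}. Since $D(z)\le Az^{(1-\delta)/\gamma_G}$ and $Y^\circ$ is Pareto with index $1/\gamma_G$, we have $\E[D(Y^\circ)]<\infty$, and dominated convergence applies. Because $\Gamma$ is compact and $\gamma_G$ is the unique maximiser of the continuous function $M$, for every $\varepsilon>0$
\[
\sup_{|\gamma-\gamma_G|\ge\varepsilon,\ \gamma\in\Gamma}M(\gamma)<M(\gamma_G).
\]
Uniqueness of the maximiser needs one more ingredient besides strict propriety: the map $\gamma\mapsto F_\gamma$ must be injective, which holds for Pareto laws.

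\emph{Uniform convergence.} Pointwise convergence $S_k(\gamma)\to M(\gamma)$ in probability is Theorem~\ref{thm:consistency}, applied to each fixed $F_\gamma$; Assumption~\ref{ass:segers_ass} holds for each $\gamma$ by hypothesis. To upgrade this to uniform convergence I would use a bracketing argument based on the envelope $D$. Fix $\eta>0$.

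\begin{enumerate}
\item Choose $T$ with $\E[D(Y^\circ)\mathbf 1\{Y^\circ>T\}]<\eta$.
\item On the compact set $\Gamma\times[1,T]$, $S$ is uniformly continuous. Hence there is a finite grid $\gamma_1,\dots,\gamma_N$ such that $\sup_{z\le T}|S(F_\gamma,z)-S(F_{\gamma_j},z)|<\eta$ whenever $|\gamma-\gamma_j|$ is small.
\item Split $S_k(\gamma)-S_k(\gamma_j)$ into the contributions from $z\le T$ and from $z>T$. The first part is at most $\eta$. The second is bounded by $\frac{2}{k}\sum_i D(Z_i)\mathbf 1\{Z_i>T\}$, where $Z_i=Y_{n,n-i+1}/Y_{n,n-k}$.
\item The function $z\mapsto D(z)\mathbf 1\{z>T\}$ is a.e.\ continuous (choosing $T$ off a null set) and dominated by $Az^{(1-\delta)/\gamma_G}$. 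So Theorem~\ref{thm:consistency} applies to it as well, and the average in step 3 converges in probability to $2\E[D(Y^\circ)\mathbf 1\{Y^\circ>T\}]<2\eta$.
\end{enumerate}

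Step 4 assumes $D$ is continuous or can be replaced by a continuous majorant, for instance $Az^{(1-\delta)/\gamma_G}$ itself. That majorant is the cleaner choice, and I would use it directly. Combining these bounds with pointwise convergence at the finitely many $\gamma_j$, and the analogous $\eta$-bounds for $M$, gives
\[
\sup_{\gamma\in\Gamma}|S_k(\gamma)-M(\gamma)|\xrightarrow{\pr}0.
\]

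\emph{Conclusion and main obstacle.} With uniform convergence and well-separation in hand, the argmax argument finishes the proof: $S_k(\hat\gamma_k)\ge S_k(\gamma_G)$ implies $M(\hat\gamma_k)\ge M(\gamma_G)-o_\pr(1)$, which forces $|\hat\gamma_k-\gamma_G|<\varepsilon$ with probability tending to one. The main obstacle is the uniform convergence step. The sample points $Z_i$ are not i.i.d., so a Glivenko--Cantelli theorem cannot be invoked directly. The truncation-plus-grid device above avoids this by routing everything through finitely many applications of Theorem~\ref{thm:consistency}, and the delicate point is controlling the tail contribution uniformly in $\gamma$ through the single envelope $D$.
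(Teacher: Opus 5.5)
Your proof is correct, and its outer structure matches the paper's: uniform convergence in probability, then well-separation of the maximum of $\gamma\mapsto\E[S(F_\gamma,Y^\circ)]$ via strict propriety, identifiability of the Pareto family, continuity and compactness, then the standard argmax inequality.

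The difference is in how you get uniform convergence. The paper works with the local modulus $u(z,\theta,\tau)=\sup_{\gamma\in\Gamma,\,|\gamma-\theta|\le\tau}|S(F_\gamma,z)-S(F_\theta,z)|$. It uses dominated convergence with envelope $2D$ to make $\E[u(Y^\circ,\theta,\tau)]$ small for small $\tau$, covers $\Gamma$ by finitely many balls, and applies Theorem~\ref{thm:consistency} to $u(\cdot,\theta_m,\tau_m)$ itself. That forces it to check that $u$ is measurable (via a countable dense subset) and continuous in $z$.

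You instead truncate in $z$. On $[1,T]$, joint uniform continuity on $\Gamma\times[1,T]$ gives a deterministic bound $\eta$. On $(T,\infty)$, a single $\gamma$-free envelope term controls every $\gamma$ at once. So you need Theorem~\ref{thm:consistency} only at the finitely many grid points and for the explicit function $Az^{(1-\delta)/\gamma_G}\mathbf{1}\{z>T\}$, whose a.e.\ continuity is immediate. You were right to replace the merely measurable $D$ by this continuous majorant, since Theorem~\ref{thm:consistency} cannot be applied to $D\mathbf{1}\{z>T\}$ directly.

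Your route is more elementary and avoids any regularity questions about a supremum-defined function. The paper's route is the classical Wald-type device: it needs the $\gamma$-oscillation small only in $L^1(G^\circ)$, not uniformly over $z$ in compacts, and so leans on joint continuity less. Under the stated hypotheses, both arguments go through.
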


We first establish the following uniform convergence lemma.
\begin{lemma}
    {Assume the same conditions as in Theorem \ref{thm:argmin}. Then, as \(k\to\infty\) and \(n/k\to\infty\),
\[
\sup_{\gamma\in\Gamma}
\left|
S_k(\gamma)-E[S(F_\gamma,Y^\circ)]
\right|
\xrightarrow{P}0 .
\]}
    \label{lemma:sup}
\end{lemma}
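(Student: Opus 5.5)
We prove the uniform convergence by combining a truncation argument with a standard Glivenko--Cantelli-type argument for the empirical tail measure. The key tool is Theorem~\ref{thm:consistency}, or more precisely the residual-estimation result of \cite{segers} on which it rests. That result gives $\frac1k\sum_{i=1}^k f(Y_{n,n-i+1}/Y_{n,n-k})\to \E[f(Y^\circ)]$ in probability for every a.e.\ continuous $f$ dominated by $Az^{(1-\delta)/\gamma_G}$.

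\emph{Step 1: reduction to a compact range of $z$.} Fix $M>1$ and split
\[
S_k(\gamma)=\frac1k\sum_{i=1}^k S(F_\gamma,Z_i)\mathbf 1\{Z_i\le M\}+\frac1k\sum_{i=1}^k S(F_\gamma,Z_i)\mathbf 1\{Z_i> M\},
\qquad Z_i:=\frac{Y_{n,n-i+1}}{Y_{n,n-k}} .
\]
By \eqref{eq:uni_bound}, the second term is bounded uniformly in $\gamma$ by $\frac1k\sum_i D(Z_i)\mathbf 1\{Z_i>M\}$. The function $z\mapsto D(z)\mathbf 1\{z>M\}$ may not be a.e.\ continuous. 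We therefore replace it by the continuous majorant $Az^{(1-\delta)/\gamma_G}\mathbf 1\{z>M\}$, smoothed near $M$. Theorem~\ref{thm:consistency} then gives convergence to $A\,\E[(Y^\circ)^{(1-\delta)/\gamma_G}\mathbf 1\{Y^\circ>M-1\}]$. This limit is finite because $Y^\circ$ is Pareto with index $1/\gamma_G$, and it tends to $0$ as $M\to\infty$. The same tail bound controls $\sup_\gamma|\E[S(F_\gamma,Y^\circ)\mathbf 1\{Y^\circ>M\}]|$.

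\emph{Step 2: uniformity on $\Gamma\times[1,M]$.} The map $S$ is continuous on the compact set $\Gamma\times[1,M]$, hence uniformly continuous there. Given $\varepsilon>0$, choose a finite grid $\gamma_1,\dots,\gamma_N$ in $\Gamma$ such that $|S(F_\gamma,z)-S(F_{\gamma_j},z)|\le\varepsilon$ for all $z\in[1,M]$ whenever $|\gamma-\gamma_j|$ is small. Then
\[
\sup_{\gamma\in\Gamma}\Bigl|\tfrac1k\textstyle\sum_i S(F_\gamma,Z_i)\mathbf 1\{Z_i\le M\}-\E[S(F_\gamma,Y^\circ)\mathbf 1\{Y^\circ\le M\}]\Bigr|\le 2\varepsilon+\max_{j\le N}|\Delta_{k,j}|,
\]
where $\Delta_{k,j}$ is the corresponding difference at $\gamma_j$. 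Each $\Delta_{k,j}\to0$ in probability by Theorem~\ref{thm:consistency} applied to $z\mapsto S(F_{\gamma_j},z)\mathbf 1\{z\le M\}$. This function is bounded and continuous except at the single point $M$, which is a null set for the Pareto law. The maximum over finitely many $j$ therefore also tends to zero.

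\emph{Step 3: conclusion.} Combine the two pieces: let $k,n\to\infty$ first, then $\varepsilon\to0$ and $M\to\infty$, to obtain the claim. The main obstacle is Step 1, namely controlling the truncated tail contributions uniformly in $\gamma$ in probability rather than in expectation. The envelope $D$ is what makes this possible: it reduces the uniform problem to a single function of $z$ covered by the consistency theorem. One must only check that the chosen majorant is a.e.\ continuous and still satisfies the growth condition of Assumption~\ref{ass:segers_ass}, which the explicit power bound $Az^{(1-\delta)/\gamma_G}$ guarantees.
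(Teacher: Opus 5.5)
Your proof is correct, but it takes a different route from the paper. The paper does not truncate in $z$. Instead it works with the local modulus $u(z,\theta,\tau)=\sup_{\gamma\in\Gamma:|\gamma-\theta|\le\tau}|S(F_\gamma,z)-S(F_\theta,z)|$ and proceeds as follows:
\begin{itemize}
\item it uses the envelope $u\le 2D$ and dominated convergence to get $\E[u(Y^\circ,\theta,\tau)]\to0$ as $\tau\downarrow0$;
\item it covers $\Gamma$ by finitely many balls on which $\E[u]$ is small;
\item it applies Theorem~\ref{thm:consistency} both to $S(F_{\theta_m},\cdot)$ and to $u(\cdot,\theta_m,\tau_m)$.
\end{itemize}
This is the classical Wald/Tauchen bracketing argument. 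Its cost is that $u(\cdot,\theta,\tau)$ must be shown measurable (via a countable dense subset) and continuous in $z$ (via joint continuity and compactness) before it can be fed into the consistency theorem. In exchange, it needs no splitting of the $z$-range.

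Your route replaces the modulus with uniform continuity of $S$ on the compact rectangle $\Gamma\times[1,M]$. As a result, Theorem~\ref{thm:consistency} is only applied to finitely many bounded functions with a single jump at $M$, plus one explicit continuous power majorant. You never need continuity of a supremum over $\gamma$. The price is the extra tail-truncation step and an iterated limit: first $k,n$, then $\varepsilon$, then $M$.

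You were right not to apply the consistency theorem to $D$ itself, since $D$ is only assumed measurable. One small imprecision: the smoothed majorant converges in probability to its own expectation under $Y^\circ$. That expectation is bounded by $A\,\E[(Y^\circ)^{(1-\delta)/\gamma_G}\mathbf 1\{Y^\circ>M-1\}]$ but is not equal to it. This does not affect the argument.
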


\begin{proof}[Proof of Lemma \ref{lemma:sup}]
    Let
    \begin{align*}
        u(z,\theta,\tau):= \sup_{\gamma{\in \Gamma}:|\gamma-\theta|\leq \tau}|S(F_\gamma,z)-S(F_\theta,z)|.
    \end{align*}
    {Firstly, we argue that $z\mapsto u(z,\theta,\tau)$ is measurable. For fixed \(\theta\in\Gamma\) and \(\tau>0\), let
    \[
    \Gamma_{\theta,\tau}:=\{\gamma\in\Gamma:|\gamma-\theta|\le \tau\}.
    \]
    Since \(\Gamma_{\theta,\tau}\) is compact and
    \(\gamma\mapsto S(F_\gamma,z)\) is continuous for each fixed \(z\), the supremum over \(\Gamma_{\theta,\tau}\) may be taken over any countable dense subset \(Q_{\theta,\tau}\subset \Gamma_{\theta,\tau}\). Hence
\[
u(z,\theta,\tau)
=
\sup_{\gamma\in Q_{\theta,\tau}}
|S(F_\gamma,z)-S(F_\theta,z)|.
\]
For each fixed \(\gamma\), the map \(z\mapsto S(F_\gamma,z)\) is measurable, since \(S\) is a scoring rule. Therefore \(u(\cdot,\theta,\tau)\) is the pointwise supremum of countably many measurable functions, and is consequently
measurable.}

    {By the continuity assumption of $S$, we have that
    $u(Y^\circ,\theta,\tau)\to 0$ for $\tau \to 0$ from above}.  Since $S$ is dominated by $D$, by the {triangle-inequality we have that $u(Y^\circ,\theta,\tau)\leq 2D(Y^\circ)$}. {Furthermore, since \(Y^\circ\) is Pareto with tail index \(\gamma_G\), the bound
\(D(z)\leq A z^{(1-\delta)/\gamma_G}\) implies
\(E[D(Y^\circ)]<\infty\).} Hence {for a specific $\theta$}, we have by the dominated convergence theorem, $\E [u(Y^\circ,\theta,\tau)]\to 0$ as $\tau\to 0$.

    Now use compactness of the parameter space to cover it by finitely many balls \(B(\theta_m,\tau_m)\), \(m=1,\dots,M\), {where $\theta_m\in \Gamma$}. By choosing \(M\) sufficiently large, we can ensure that for each ball
\begin{align}
\mu_m:=\E\big[u(Y^\circ,\theta_m,\tau_m)\big] < \varepsilon \label{eq:mu}
\end{align}
    for any given \(\varepsilon > 0\). For a given $\gamma\in B_m$ we have
    \begin{align}
        |S_k(\gamma)-\E[S(F_\gamma,Y^\circ)]| \leq & |S_k(\gamma)-S_k(\theta_m)| \label{eq:in1}\\
        & +|S_k(\theta_m)-\E[S(F_{\theta_m},Y^\circ)]| \label{eq:in2}\\
        & + |\E[S(F_{\theta_m},Y^\circ)]-\E[S(F_\gamma,Y^\circ)]| \label{eq:in3}.
    \end{align}
    For \eqref{eq:in1}, note that
    \begin{align*}
        |S_k(\gamma)-S_k(\theta_m)| \leq \left|\frac{1}{k}\sum_{i=1}^k u\left(\frac{Y_{n,n-k+i}}{Y_{n,n-k}},\theta_m,\tau_m\right)-\mu_m\right|+\mu_m.
    \end{align*}
    {As argued earlier, we have 
    \begin{align*}
        u(z,\theta_m,\tau_m) \leq 2Az^{(1-\delta)/\gamma_G},
    \end{align*}
    and by continuity of \(S\) and compactness of
\(\Gamma_{\theta,\tau}\), we have that $z\mapsto u(z,\theta,\tau)$ is continuous. }  {Hence, we can use Theorem \ref{thm:consistency} on $z\mapsto u(z,\theta_m,\tau_m)$ to conclude that 
    \begin{align*}
        \frac{1}{k}\sum_{i=1}^k u\left(\frac{Y_{n,n-k+i}}{Y_{n,n-k}},\theta_m,\tau_m\right) \to \mu_m,
    \end{align*}
    in probability.}
    Since $\mu_m$ can be made arbitrarily small by selecting a large $M$, \eqref{eq:in1} can be made arbitrarily small. Likewise, \eqref{eq:in2} converges to 0 by Theorem \ref{thm:consistency}, {while \eqref{eq:in3} can be made arbitrarily small using
    \begin{align*}
        |\E[S(F_{\theta_m},Y^\circ)]-\E[S(F_\gamma,Y^\circ)]|\leq \E\big[u(Y^\circ,\theta_m,\tau_m)\big]
    \end{align*}
    and \eqref{eq:mu}}.

    Altogether, this shows that for any \(\gamma \in B(\theta_m,\tau_m)\), the quantity
\[
|S_k(\gamma)-\E[S(F_\gamma,Y^\circ)]|
\]
can be made arbitrarily small by choosing \(M\) sufficiently large. Since this holds for each of the finitely many balls in the cover, it also holds after taking the supremum over \(\gamma \in \Gamma\) for \(M\) sufficiently large, which concludes the proof.
\end{proof}

\begin{proof}[Proof of Theorem \ref{thm:argmin}]
    By Lemma~\ref{lemma:sup}, we have
    \begin{align} \label{eq:assum}
        \sup_{\gamma \in \Gamma}|S_k(\gamma)-\E[S(F_\gamma,Y^\circ)]|\xrightarrow{\pr}0
    \end{align}
    as $k,n/k\to\infty$. Using \eqref{eq:assum}, we get that
    \begin{align}\label{eq:SS}
        S_k(\hat{\gamma})\geq S_k(\gamma_G)\geq \E[S(F_{\gamma_G},Y^\circ)]-o_\pr(1).
    \end{align}
    Using \eqref{eq:SS} and \eqref{eq:assum}, we get that
    \begin{align}
        \E[S(F_{\gamma_G},Y^\circ)] - \E[S(F_{\hat{\gamma}},Y^\circ)]&\leq S_k(\hat{\gamma})+o_\pr(1)-\E[S(F_{\hat{\gamma}},Y^\circ)] \\
        & \leq \sup_{\gamma \in \Gamma} |S_k(\gamma)-\E[S(F_\gamma,Y^\circ)]|+o_\pr(1) \xrightarrow{\pr}0 \label{eq:conv0}
    \end{align}
    as $k,n/k\to \infty$.
    {Since \(Y^\circ\sim F_{\gamma_G}\), strict propriety of \(S\) implies that
\[
\mathbb{E}[S(F_\gamma,Y^\circ)]
<
\mathbb{E}[S(F_{\gamma_G},Y^\circ)]
\]
whenever \(F_\gamma\neq F_{\gamma_G}\). By identifiability of the Pareto family,
this holds for every \(\gamma\neq\gamma_G\). Moreover, the map $\gamma\mapsto \mathbb{E}[S(F_\gamma,Y^\circ)]$ 
is continuous by dominated convergence. Hence, for every \(\varepsilon>0\), the compact set $\{\gamma\in\Gamma:|\gamma-\gamma_G|\geq\varepsilon\}$
admits a maximizer. Since this maximizer differs from \(\gamma_G\), we obtain
\[
\sup_{\gamma:|\gamma-\gamma_G|\geq\varepsilon}
\mathbb{E}[S(F_\gamma,Y^\circ)]
<
\mathbb{E}[S(F_{\gamma_G},Y^\circ)].
\]}
    Hence, for every $\varepsilon>0$, there exists an $\eta>0$ such that
    \begin{align*}
        \E[S(F_{\hat{\gamma}},Y^\circ)]<\E[S(F_{\gamma_G},Y^\circ)]-\eta \quad \text{when } |\hat\gamma-\gamma_G|\geq \varepsilon .
    \end{align*}
    Therefore,
    \begin{align*}
        \{|\hat{\gamma}-\gamma_G|\geq \varepsilon\} \subset\{\E[S(F_{\hat{\gamma}},Y^\circ)]<\E[S(F_{\gamma_G},Y^\circ)]-\eta\},
    \end{align*}
    and as the right-hand side converges to 0 in probability by \eqref{eq:conv0}, the left-hand side also converges to 0.
\end{proof}

 {When the empirical maximizer lies on the boundary of \(\Gamma\), the estimate is
a boundary optimum induced by the restricted parameter space. It may therefore
reflect that the unconstrained empirical optimum lies outside \(\Gamma\), so such
a solution should be interpreted with care rather than as direct evidence that the
true tail index coincides with the boundary point. For the logarithmic score, this
boundary effect can be described explicitly: the unconstrained estimator is the
Hill estimator, while the constrained estimator over \(\Gamma\) is its projection
onto \(\Gamma\).} 
\begin{remark}\label{col:LogS} 
     {Let \(F_\gamma^\circ\) denote the Pareto distribution on
\([1,\infty)\) with tail index \(\gamma\), and let
\[
\widehat\gamma_H
=
\frac{1}{k}
\sum_{i=1}^k
\log\left(
\frac{Y_{n,n-i+1}}{Y_{n,n-k}}
\right)
\]
denote the Hill estimator. Then the logarithmic-score estimator over the
parameter space \(\Gamma\) is the projection of the Hill estimator
onto this interval, i.e.
\[
\argmax_{\gamma\in\Gamma}
\frac{1}{k}
\sum_{i=1}^k
\log\left[
f_\gamma^\circ\left(
\frac{Y_{n,n-i+1}}{Y_{n,n-k}}
\right)
\right]
=
\Pi_{\Gamma}(\widehat\gamma_H),
\]
where $\Pi_{\Gamma}(x)
=
\min\{\gamma_U,\max\{\gamma_L,x\}\}$. 
In particular, if \(\widehat\gamma_H\in\Gamma\), then the
logarithmic-score estimator coincides with the Hill estimator.}
\end{remark}

\begin{remark}\label{rem:var_Logs}
    \cite{segers} shows that, under mild conditions on \(h\), the estimator of the form
\[ 
\frac{1}{k}\sum_{i=1}^k h\!\left(\frac{Y_{n,n-i+1}}{Y_{n,n-k}}\right)
\]
attains the smallest asymptotic variance when \(h=\log\). This suggests that it is inherently difficult to construct a scoring-rule-based estimator \(\hat{\gamma}_k\) that outperforms the Hill estimator in terms of asymptotic efficiency.   
\end{remark}

{The following corollary verifies the assumptions of Theorem~\ref{thm:argmin} for LogS and
\(ES_\beta\).}
\begin{corollary}\label{cor:argmax}
    The assumptions of Theorem \ref{thm:argmin} are satisfied when
    \begin{enumerate}
        \item using \(\mathrm{LogS}\) for all $\gamma>0$;
        \item using $ES_\beta$ with $\beta<\frac{1}{\gamma_U}$.
    \end{enumerate}
\end{corollary}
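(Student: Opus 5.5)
We check the hypotheses of Theorem~\ref{thm:argmin} one score at a time. For each score we need four things: strict propriety; joint continuity of \((\gamma,z)\mapsto S(F_\gamma,z)\) on \(\Gamma\times[1,\infty)\); Assumption~\ref{ass:segers_ass}; and the uniform envelope \eqref{eq:uni_bound}, that is, a single \(D\) with \(D(z)\le Az^{(1-\delta)/\gamma_G}\) for some \(\delta\in(0,1)\).

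\textbf{LogS.} The score has the explicit form
\[
\mathrm{LogS}(F_\gamma,z)=-\log\gamma-(1+1/\gamma)\log z, \qquad z\ge 1.
\]
This is jointly continuous on \(\Gamma\times[1,\infty)\), since \(\gamma\ge\gamma_L>0\). Strict propriety of LogS is standard, and we restrict it to the Pareto family. Taking the supremum over \(\Gamma\) gives the envelope
\[
D(z)=\max\{|\log\gamma_L|,|\log\gamma_U|\}+(1+1/\gamma_L)\log z .
\]
Because \(\log z=o(z^{\epsilon})\) for every \(\epsilon>0\), this \(D\) is bounded by \(Az^{(1-\delta)/\gamma_G}\) for any \(\delta\in(0,1)\). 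The bound also yields Assumption~\ref{ass:segers_ass}, and this part is routine.

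\textbf{ES\(_\beta\).} Write \(\mathrm{ES}_\beta(F_\gamma,z)=\tfrac12 c_\beta(\gamma)-g_\beta(\gamma,z)\), where
\[
c_\beta(\gamma)=\E|X-X'|^\beta, \qquad g_\beta(\gamma,z)=\E|X-z|^\beta, \qquad X,X'\sim F_\gamma .
\]
Since \(\beta<1/\gamma_U\le 1/\gamma\) for every \(\gamma\in\Gamma\), the moment \(\E X^\beta\) is finite throughout \(\Gamma\). Using \(|x-z|^\beta\le x^\beta+z^\beta\) (valid because \(\beta<1\)), we get
\[
0\le g_\beta(\gamma,z)\le \E_\gamma X^\beta+z^\beta, \qquad \E_\gamma X^\beta=\frac{1}{1-\beta\gamma}\le\frac{1}{1-\beta\gamma_U}.
\]
Likewise \(c_\beta(\gamma)\le 2/(1-\beta\gamma_U)\). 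So \(D(z)=A(1+z^\beta)\) is a uniform envelope on \(\Gamma\).

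For the growth condition we need \(\beta<(1-\delta)/\gamma_G\) for some \(\delta\in(0,1)\). This follows from \(\gamma_G<\gamma_U\), which gives \(\beta<1/\gamma_U<1/\gamma_G\); we then choose \(\delta=1-\beta\gamma_G\in(0,1)\). The same bound gives Assumption~\ref{ass:segers_ass}, and it is consistent with Corollary~\ref{col:cons}, since \(\beta<\min\{1/\gamma,1/\gamma_G\}\) holds for all \(\gamma\in\Gamma\). Strict propriety of ES\(_\beta\) for \(\beta\in(0,2)\) holds on the class of distributions with finite \(\beta\)-moment, and the whole Pareto family over \(\Gamma\) lies in that class.

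The main obstacle is joint continuity of \((\gamma,z)\mapsto\mathrm{ES}_\beta(F_\gamma,z)\). We handle it by the substitution \(X=U^{-\gamma}\) with \(U\sim\mathrm{Unif}(0,1)\), and similarly \(X'=U'^{-\gamma}\). This gives
\[
g_\beta(\gamma,z)=\int_0^1 |u^{-\gamma}-z|^\beta\,\dd u, \qquad c_\beta(\gamma)=\int_0^1\!\!\int_0^1 |u^{-\gamma}-v^{-\gamma}|^\beta\,\dd u\,\dd v .
\]
In each integral the integrand is continuous in \((\gamma,z)\) for fixed \(u,v\). On \(\Gamma\times[1,Z]\) it is dominated by \(u^{-\beta\gamma_U}+Z^\beta\), which is integrable because \(\beta\gamma_U<1\); in the double integral the corresponding bound is \(u^{-\beta\gamma_U}+v^{-\beta\gamma_U}\). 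Dominated convergence then gives joint continuity on every compact set, hence on all of \(\Gamma\times[1,\infty)\). This completes the verification for both scores.
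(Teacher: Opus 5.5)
Your proof is correct and follows the paper's route: continuity, strict propriety, a uniform envelope $C(1+z^\beta)$ for $\mathrm{ES}_\beta$, then $\gamma_G<\gamma_U$ to choose $\delta$, and it is more explicit where the paper only asserts, since you get the constant from $\E_\gamma X^\beta=1/(1-\beta\gamma)\le 1/(1-\beta\gamma_U)$ and you actually prove joint continuity via $X=U^{-\gamma}$ and dominated convergence. The one slip is the parenthetical ``valid because $\beta<1$'': $\beta<1/\gamma_U$ permits $\beta>1$ when $\gamma_U<1$, but $|x-z|^\beta\le x^\beta+z^\beta$ and your dominating functions hold for every $\beta>0$ because $|x-z|\le\max\{x,z\}$ for $x,z\ge 0$, so nothing else changes.
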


\begin{proof}
    For \(\mathrm{LogS}\), the map \((\gamma,z)\mapsto \mathrm{LogS}(F_\gamma,z)\) is continuous on \(\Gamma\times[1,\infty)\), and \(\mathrm{LogS}\) is strictly proper. Moreover, the calculations in the proof of Corollary~\ref{col:cons} show that \[\sup_{\gamma\in\Gamma}|\mathrm{LogS}(F_\gamma,z)|\leq Az^{(1-\delta)/\gamma_G}.\] Hence the assumptions of Theorem~\ref{thm:argmin} are satisfied for \(\mathrm{LogS}\).

For \(ES_\beta\), the map \((\gamma,z)\mapsto ES_\beta(F_\gamma,z)\) is continuous on
\(\Gamma\times [1,\infty)\), and \(ES_\beta\) is strictly proper. By the
calculations in the proof of Corollary~\ref{col:cons}, for each fixed
\(\gamma\) with \(\beta<1/\gamma\) there exists a constant \(C_\gamma>0\) such that
\[
 |ES_\beta(F_\gamma,z)|\leq C_\gamma(1+z^\beta).
\]
Since \(\beta<1/\gamma_U\), we have \(\beta<1/\gamma\) for every
\(\gamma\in\Gamma\). Moreover, it is seen from the calculations in the proof of
Corollary~\ref{col:cons} that the constant \(C_\gamma\) can be chosen as
a continuous function of \(\gamma\). Since \(\Gamma\) is compact, \(C(\gamma)\) is
bounded on \(\Gamma\). Hence there exists \(C>0\) such that
\[
 \sup_{\gamma\in\Gamma}|ES_\beta(F_\gamma,z)|\leq C(1+z^\beta).
\]
\[
|ES_\beta(F_\gamma,z)|
\leq C_\gamma\left(1+z^\beta\right).
\]
Since \(\gamma_G\in\Gamma\), the condition
\(\beta<1/\gamma_U\) implies \(\beta<1/\gamma_G\). Hence we may choose
\(\delta\in(0,1)\) such that
\[
\beta<\frac{1-\delta}{\gamma_G}.
\]
It follows that, for \(z\geq 1\),
\[
C(1+z^\beta)\leq A z^{(1-\delta)/\gamma_G}
\]
for some \(A>0\). Therefore the uniform envelope condition in
Theorem~\ref{thm:argmin} is satisfied, and the result follows.
\end{proof}

\begin{remark}
{
Corollary~\ref{cor:argmax} gives a uniform sufficient condition for the
Energy-score estimator over the parameter interval
\(\Gamma=[\gamma_L,\gamma_U]\). For \(ES_\beta\), this condition requires
\(\beta < 1/\gamma_U\). Thus, allowing larger values of \(\gamma_U\) forces smaller
choices of \(\beta\). In the CRPS case, \(\beta=1\), this corresponds to the condition
\(\gamma_U < 1\). Hence, for score-based tail-index estimation, the Energy-score
estimator is most natural when the parameter interval can be chosen so that
commonly used values of \(\beta\) remain admissible.
}
\end{remark}

        
        

\section{Simulation Study} \label{sec:sim}

This section explores our theoretical results through three simulation studies. First, we investigate the ability of \eqref{eq: estimator} to differentiate between tail indices. Next, we assess robustness when the scaling varies systematically across observations. Lastly, we examine the finite-sample behavior of {$\widehat{\gamma}_{k}(S)$} for Energy-score-based tail-index estimation, compared with the classical Hill estimator.

Across all studies, the primary diagnostic is stability of the score or estimator curve over \(k\): reliable behavior should persist over a non-negligible \(k\)-range rather than at isolated grid points.

\subsection{Ranking Tail Models: Baseline Setting}
\label{subsec:homosce}

In the baseline setting, the purpose of the simulation study is to assess how well a $LogS$ rule can identify the true tail index of a heavy-tailed distribution, and how this depends on the threshold level and the sample size. 

Samples of size $n \in \{10^3, 10^4, 10^5\}$ are generated from heavy-tailed distributions with true tail index $\gamma_G = 1$. We consider two parametric families of distributions, namely the Fr\'echet and the Burr distributions.  
The Fr\'echet distribution is specified by the distribution function
\[
F(x) = \exp\!\left(-x^{-s}\right), \qquad x > 0,
\]
where the shape parameter is set to $s = 1/\gamma_G$.  The Burr distribution is given by
\[
F(x) = 1 - \bigl(1 + x^{c}\bigr)^{-t}, \qquad x > 0,
\]
with shape parameters chosen as $c = 1/\gamma_G$ and $t = 1$. With these parameterizations, both distributions are characterized by the same tail index $\gamma_G$.

Within each sample, the threshold parameter $k$ ranges from $50$ to $n/4$ and is evaluated on an evenly spaced grid of $100$ points. 
For each value of $k$, four logarithmic scores are computed by taking $F$ to be a Pareto distribution with tail index 
\[
\gamma \in \{0.8,\,1,\,1.2,\,1.5\}.
\]
Using this setup, the empirical criterion is computed via \eqref{eq: estimator}. In Figure~\ref{fig:LS_frechet_burr}, the resulting scores of a single realization are plotted as functions of $k$, with one curve corresponding to each value of $\gamma$. 
If the scoring rule is well aligned with the true tail behavior, the candidate value $\gamma = \gamma_G = 1$ should yield the highest log score across a broad range of $k$. To formalize this comparison, for each fixed \(k\) the candidate values are ranked according to their empirical logarithmic scores. In particular, the highest-ranked value is
\[
 \argmax_{\gamma \in \{0.8,1,1.2,1.5\}} \mathrm{LogS}_k(\gamma),
\]
with the remaining candidates ordered according to their corresponding scores. For the smaller sample sizes $n=10^3$ and $n=10^4$, and in particular for the Burr model, the curve corresponding to $\gamma=1$ is not systematically above the competing curves; instead, the candidate values $\gamma=0.8$ and $\gamma=1.2$ often attain similar or larger scores over substantial ranges of $k$. By contrast, clearer signals appear in the remaining cases, especially for $n=10^5$ and for the Fréchet model, where the curve corresponding to $\gamma=1$ tends to dominate over a noticeable region of smaller $k$.
This pattern is consistent with the expected bias-variance trade-off in tail inference: with larger \(n\), a wider lower-\(k\) region supports stable discrimination of the correct tail index.
This visual impression is further supported by a Monte Carlo experiment reported in Figure~\ref{fig:LS_mean}. For each configuration, 100 independent samples are generated and the empirical logarithmic score is computed for the candidate values $\gamma \in \{0.8,1,1.2,1.5\}$ across the same range of thresholds. For each value of $k/n$, we record whether the score is maximized at the true value $\gamma=1$, and plot the resulting proportion of such occurrences across the simulations. In both the Fr\'echet and Burr cases, when \(n=10^3\) there is no clear tendency for the proportion to approach one as $k/n$ goes to 0. As the sample size increases, however, the proportion rises in both models. For Fréchet data-generating distributions this increase is particularly strong. When \(n=10^5\) the proportion is equal to one across the entire range of \(k/n\), indicating that the correct tail index is almost always selected. In the Burr case, the increase is observed only for smaller values of k/n, indicating that the correct tail index is consistently favored only further into the tail. 

{It should be noted that both the Fréchet and Burr samples are evaluated against Pareto limit distributions. Although \(\gamma=1\) gives the correct asymptotic tail index in both cases, finite-threshold exceedances are not exactly Pareto, so larger values of \(k/n\) may still reflect pre-asymptotic distributional features. Hence, it is expected that the correct tail-index signal is clearest for smaller values of \(k/n\).}
 
\begin{figure}[H]
\centering

\begin{subfigure}{0.40\textwidth}
    \includegraphics[width=\linewidth, trim= 0.3in 0.6in 0.3in 0.6in,clip]{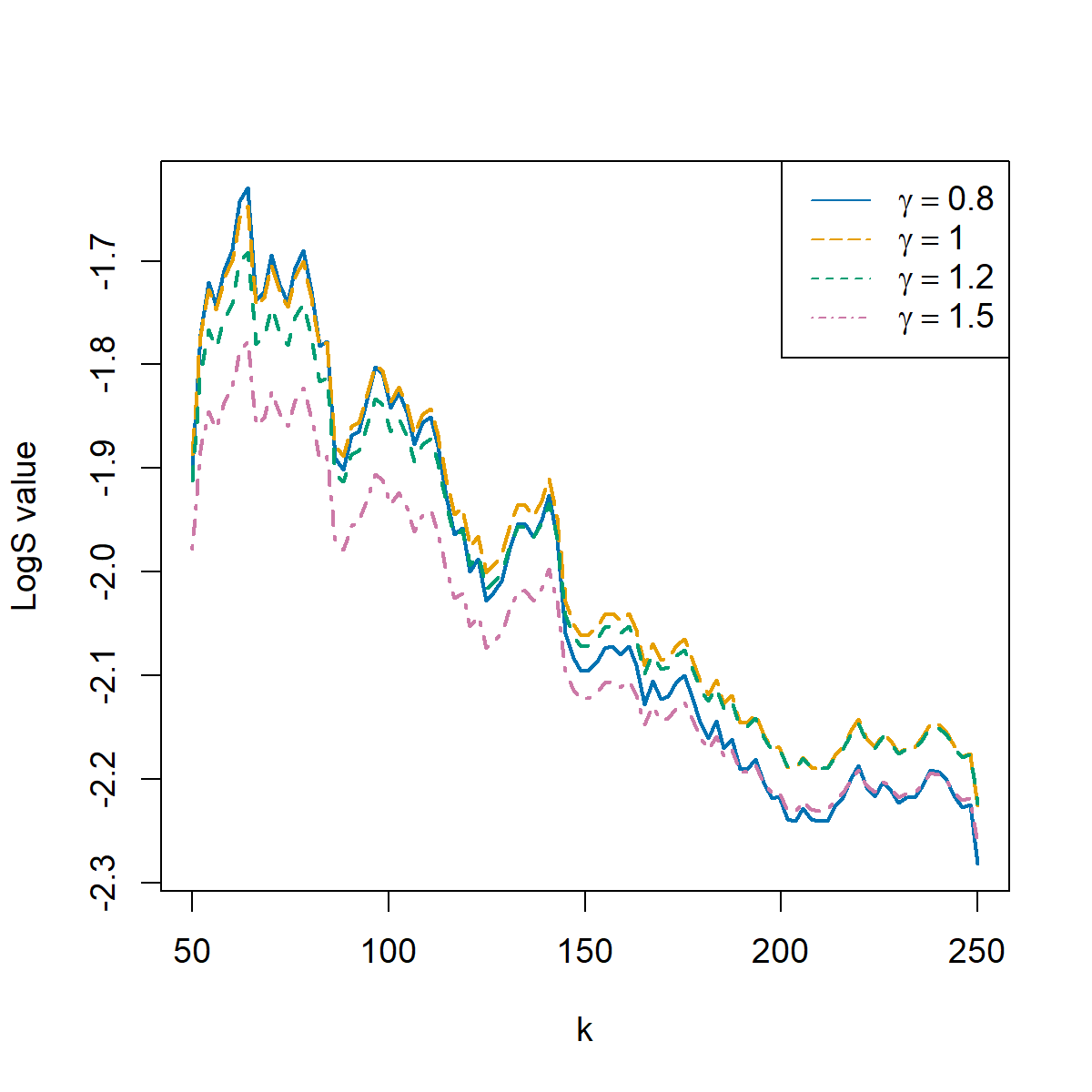}
    \caption{Fréchet DGP, $n = 10^3$}
\end{subfigure}
\begin{subfigure}{0.40\textwidth}
    \includegraphics[width=\linewidth, trim= 0.3in 0.6in 0.3in 0.6in,clip]{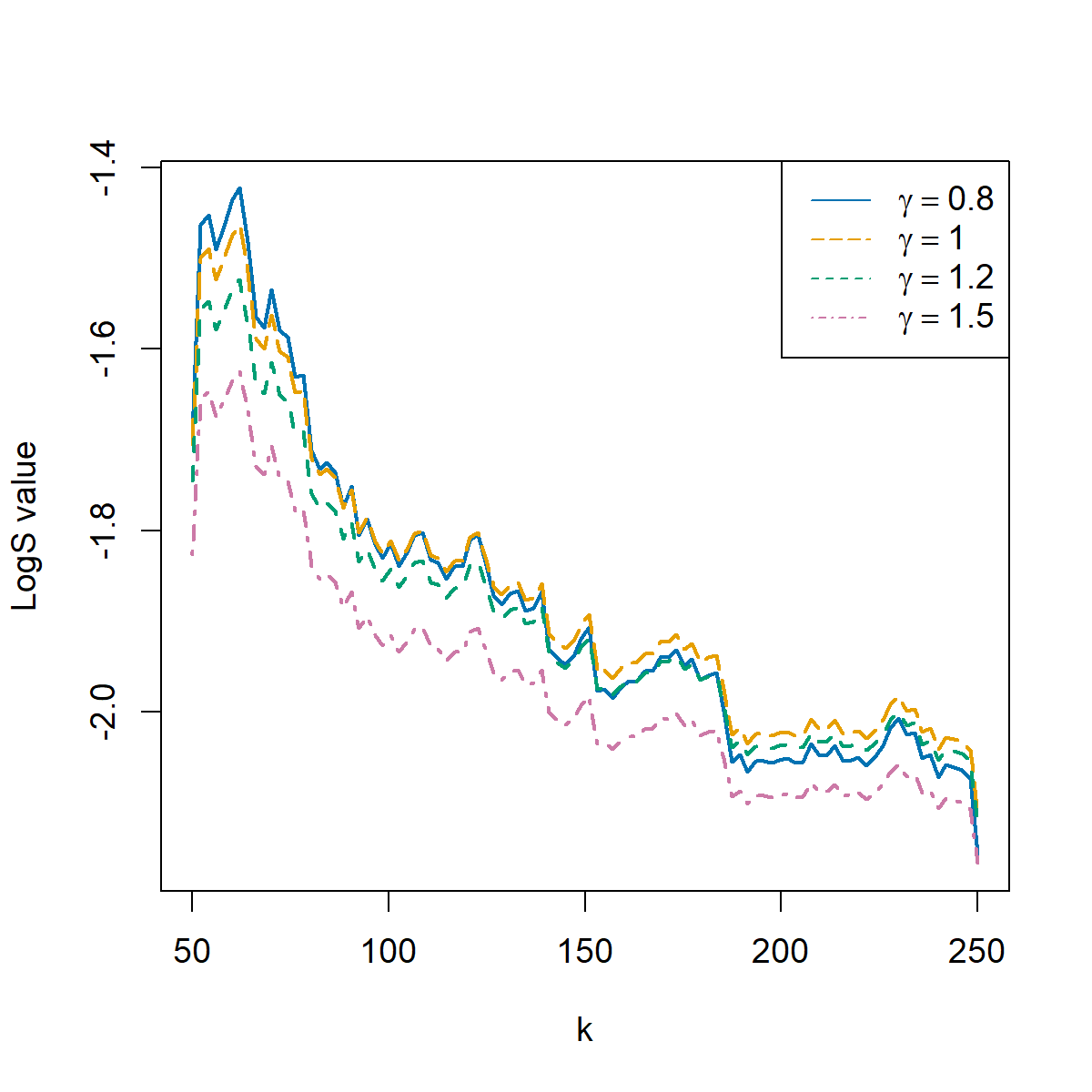}
    \caption{Burr DGP, $n = 10^3$}
\end{subfigure}\\
\begin{subfigure}{0.40\textwidth}
    \includegraphics[width=\linewidth, trim= 0.3in 0.6in 0.3in 0.6in,clip]{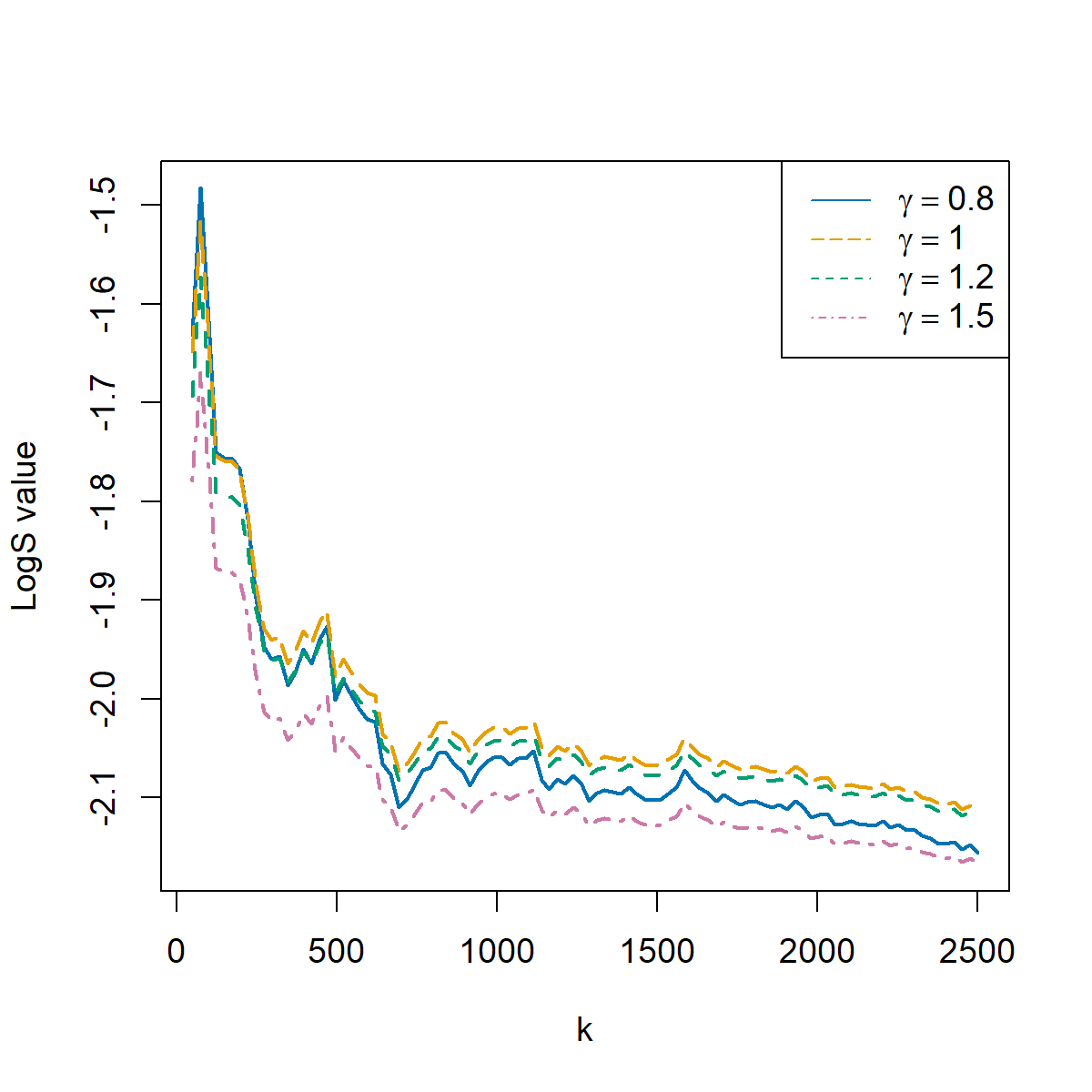}
    \caption{Fréchet DGP, $n = 10^4$}
\end{subfigure}
\begin{subfigure}{0.40\textwidth}
    \includegraphics[width=\linewidth, trim= 0.3in 0.6in 0.3in 0.6in,clip]{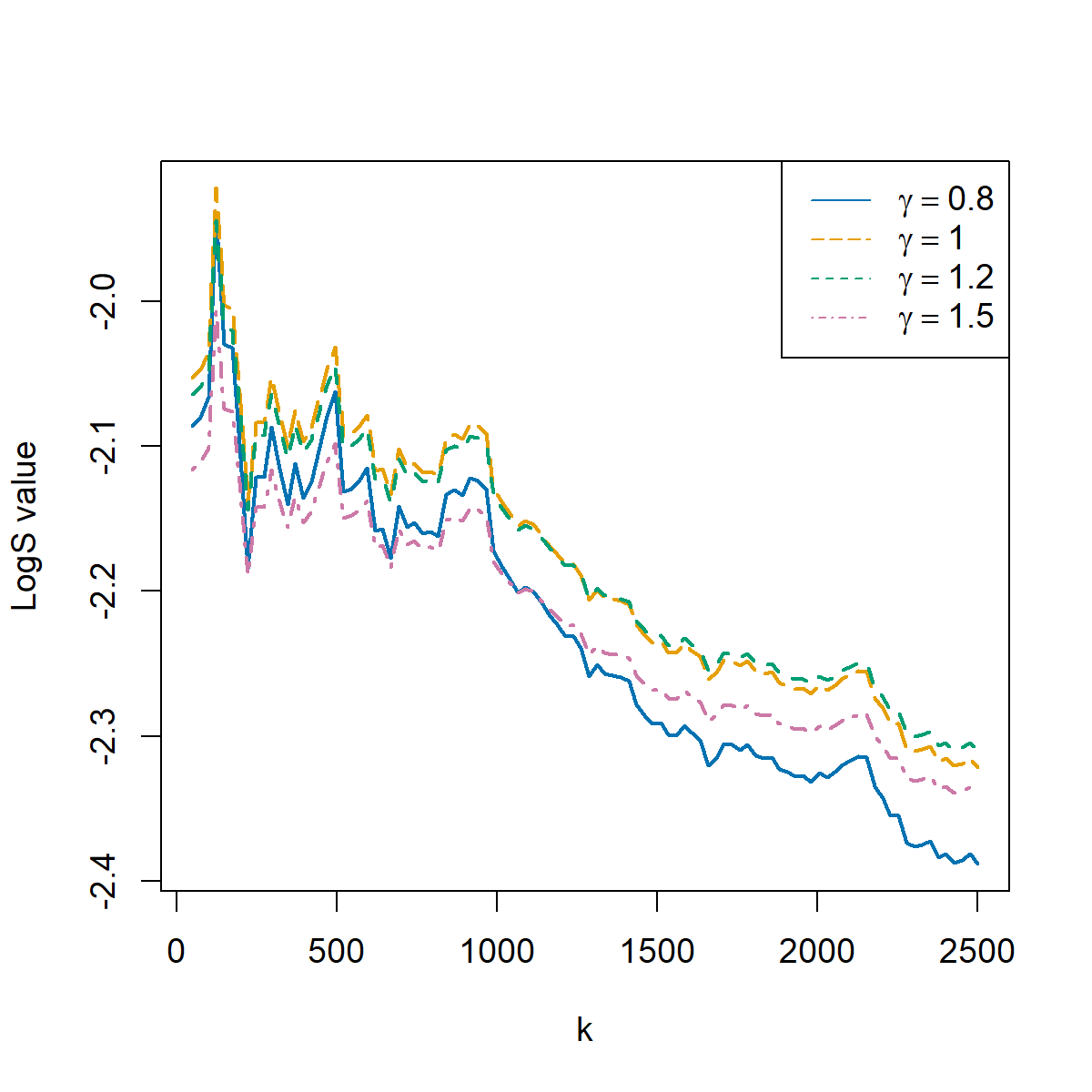}
    \caption{Burr DGP, $n = 10^4$}
\end{subfigure}\\
\begin{subfigure}{0.40\textwidth}
    \includegraphics[width=\linewidth, trim= 0.3in 0.6in 0.3in 0.6in,clip]{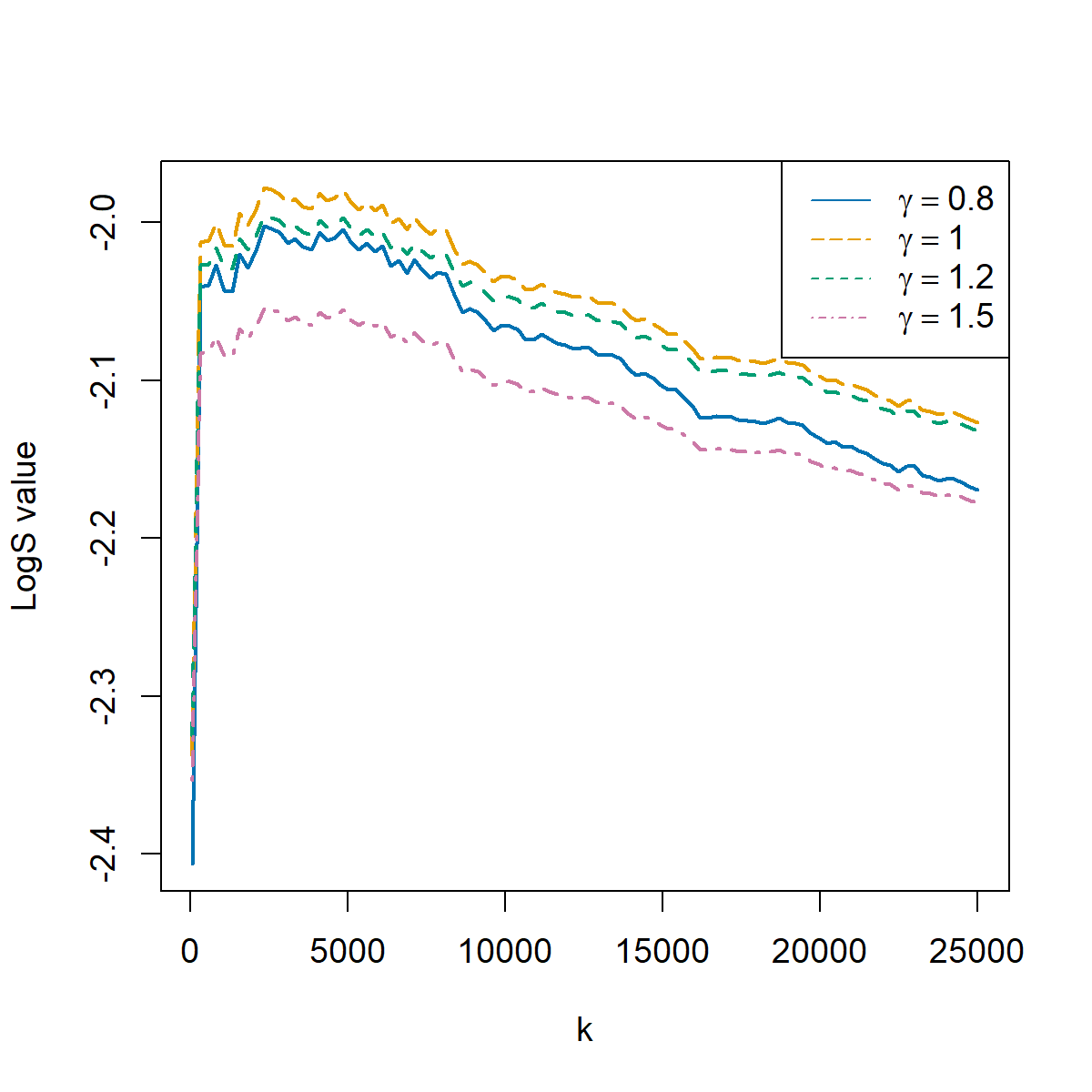}
    \caption{Fréchet DGP, $n = 10^5$}
\end{subfigure}
\begin{subfigure}{0.40\textwidth}
    \includegraphics[width=\linewidth, trim= 0.3in 0.6in 0.3in 0.6in,clip]{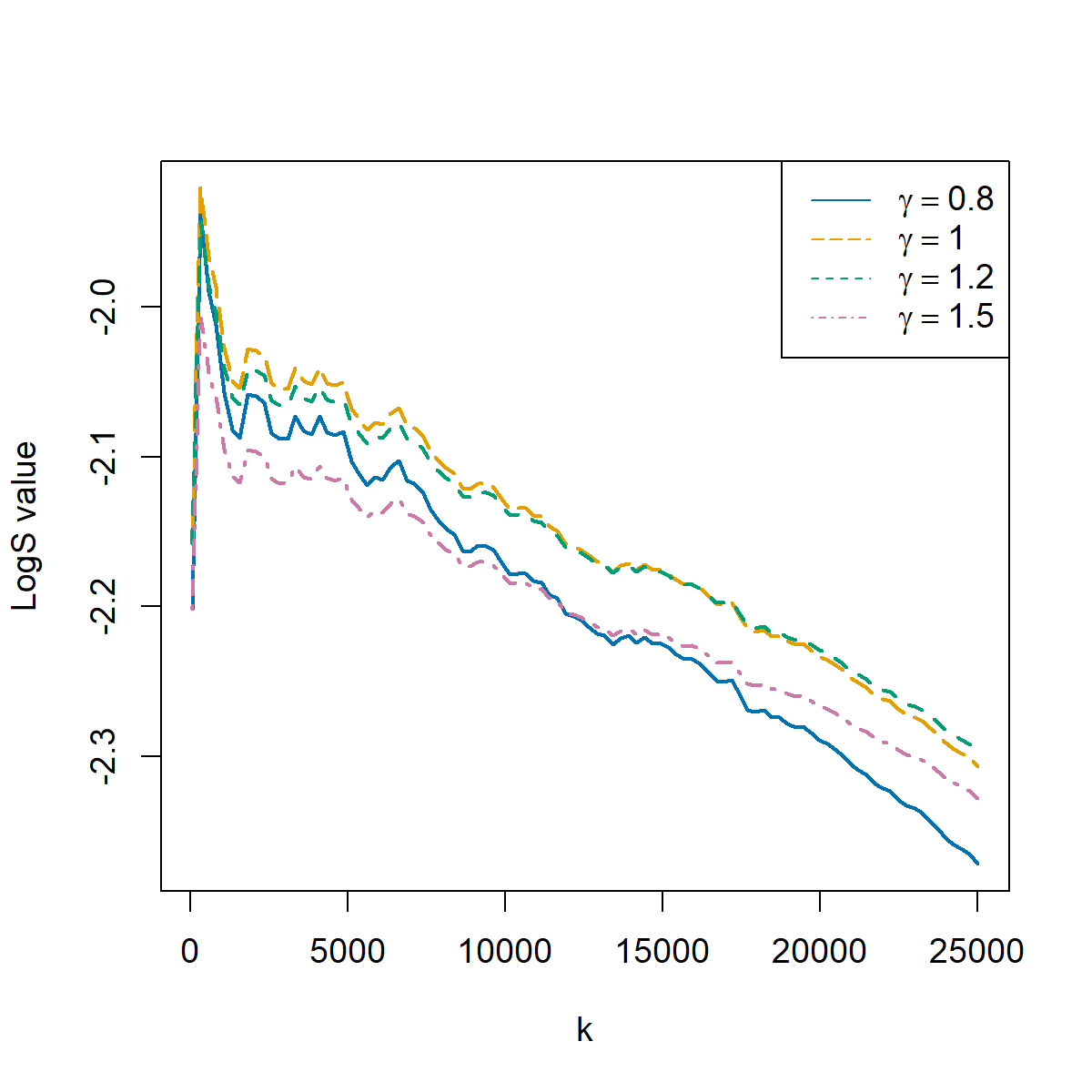}
    \caption{Burr DGP, $n = 10^5$}
\end{subfigure}

\caption{
Empirical logarithmic scores in \eqref{eq: estimator} (vertical axis) plotted against the number of upper order statistics $k$ (horizontal axis) for candidate tail indices $\gamma \in \{0.8, 1, 1.2, 1.5\}$. Left panels use Fréchet data-generating distributions, right panels use Burr distributions, and rows correspond to $n=10^3,10^4,10^5$. Higher curves indicate better tail fit; the true value is $\gamma_G=1$.}
\label{fig:LS_frechet_burr}
\end{figure}

\begin{figure}[H]
\centering

\begin{subfigure}{0.40\textwidth}
    \includegraphics[width=\linewidth, trim= 0.3in 0.6in 0.3in 0.6in,clip]{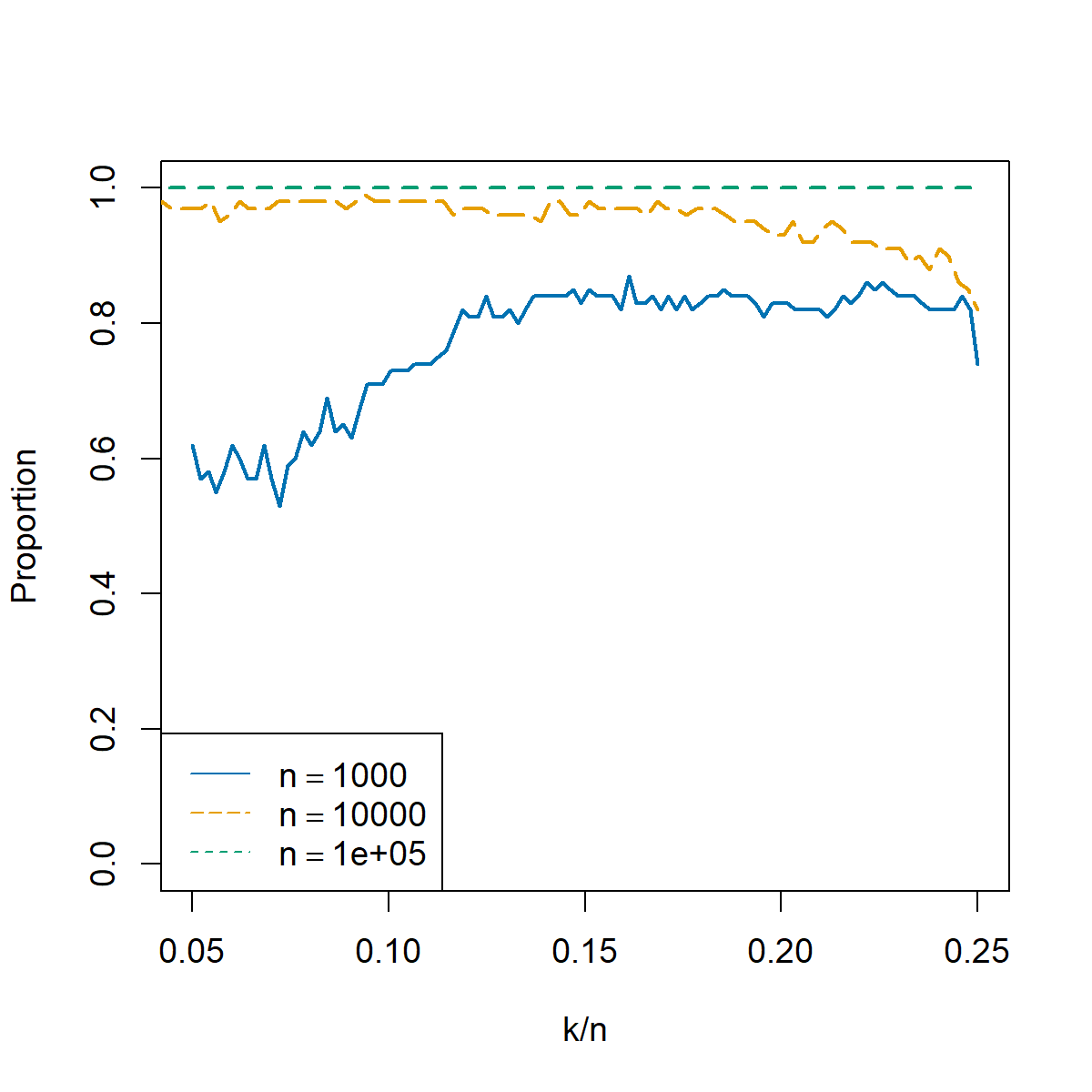}
    \caption{Fréchet DGP}
\end{subfigure}
\begin{subfigure}{0.40\textwidth}
    \includegraphics[width=\linewidth, trim= 0.3in 0.6in 0.3in 0.6in,clip]{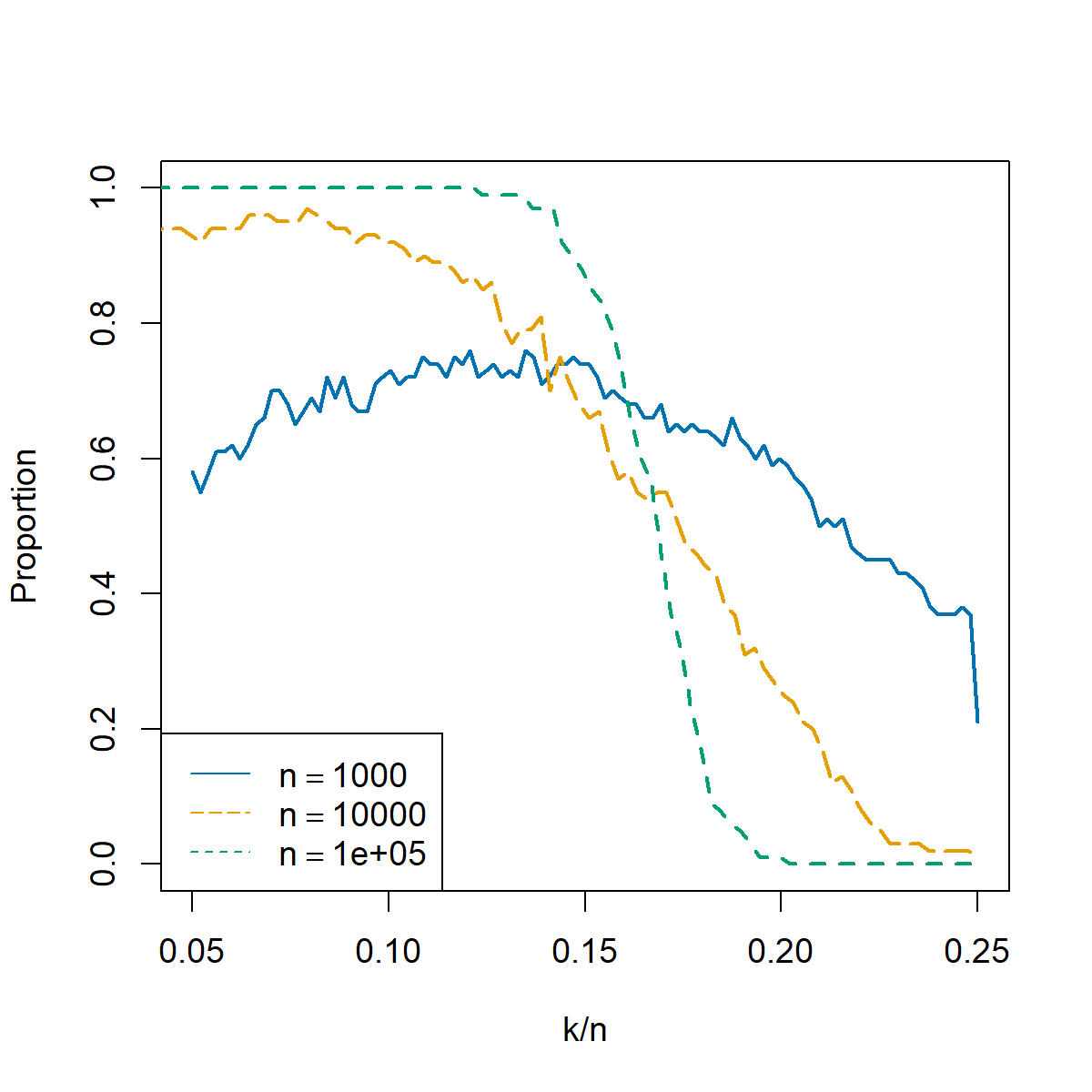}
    \caption{Burr DGP}
\end{subfigure}\\
\caption{
Proportion of simulations (based on 100 Monte Carlo replications) in which the empirical logarithmic score in \eqref{eq: estimator} is maximized at the true tail index $\gamma=1$ among candidate values $\gamma \in \{0.8, 1, 1.2, 1.5\}$, plotted against the relative number of upper order statistics $k/n$. The left plot uses Fréchet data-generating distributions and the right plot uses Burr distributions. Curves correspond to sample sizes $n=10^3,10^4,10^5$. Higher values indicate that the logarithmic score more frequently identifies the true tail index.}
\label{fig:LS_mean}
\end{figure}

\subsection{Ranking Tail Models under Heterogeneous Scaling}
\label{subsec:scales}

{A practically relevant situation is one in which the observations are not identically distributed. For example, covariates may affect the scale of the underlying distribution. Although such settings go beyond the asymptotic theory developed in Section~\ref{sec:math}, we include the following simulation experiment to examine the robustness of the proposed approach under systematic scale variation. }
Specifically, we simulate independent baseline heavy-tailed variables \(Z\) from Burr and Fr\'echet distributions and multiply each realization by a scaling factor. That is,
\[
Y_i = X_i Z_i .
\]

We follow the same setup as in Section~\ref{subsec:homosce} to generate the baseline heavy-tailed variable \(Z\). We consider two different structures for the scaling factors \(X\):

\begin{enumerate}[label=(\roman*)]
  \item \(X_i^1={1+}\frac{i}{n}\).
  \item \(X_i^2=1.5+0.5\sin\!\left(6\frac{i}{n}\pi\right)\).
\end{enumerate}

Both scaling structures imply that \(X_i\in[1,2]\). {These two structures represent qualitatively different forms of scale heterogeneity and should be viewed as simple heuristic stress tests. They are chosen to separate two effects that may arise in practice: persistent changes in scale and recurrent scale fluctuations. The monotone structure captures the former, with the largest scale occurring only at the end of the sample. The periodic structure captures the latter, producing repeated high-scale regions so that several parts of the sample contain observations with the largest scale.}

Importantly, in this simulation design the tail index is constant across all observations. The only source of heterogeneity arises from the scaling component. Consequently, the tail heaviness of the distribution remains unchanged, while only the scale varies across \(i\). Under this setup, the model corresponding to \(\gamma = 1\) correctly specifies the tail index of the data-generating process. Therefore, we would expect the logarithmic score (LogS) to attain its highest value at \(\gamma = 1\).

 Figures~\ref{fig:LS_frechet_hetero} and \ref{fig:LS_burr_hetero} present the results of a single realization, while Figure~\ref{fig:LS_mean_scale} reports the proportion results. The figures are constructed in the same manner as those in the previous section and are therefore not described in further detail. The same \(k\)-grid and candidate set as in Section~\ref{subsec:homosce} are used, so changes in behavior can be attributed to the scaling mechanism. Even in the presence of heterogeneous scaling, the qualitative conclusions of Section~\ref{subsec:homosce} remain valid. In particular, the logarithmic score continues to favor the correct tail index for sufficiently large sample sizes and appropriate choices of $k$, and the most reliable inference is still concentrated in a stable lower-\(k\) region.

\begin{figure}[H]
\centering

\begin{subfigure}{0.40\textwidth}
    \includegraphics[width=\linewidth, trim= 0.3in 0.6in 0.3in 0.6in,clip]{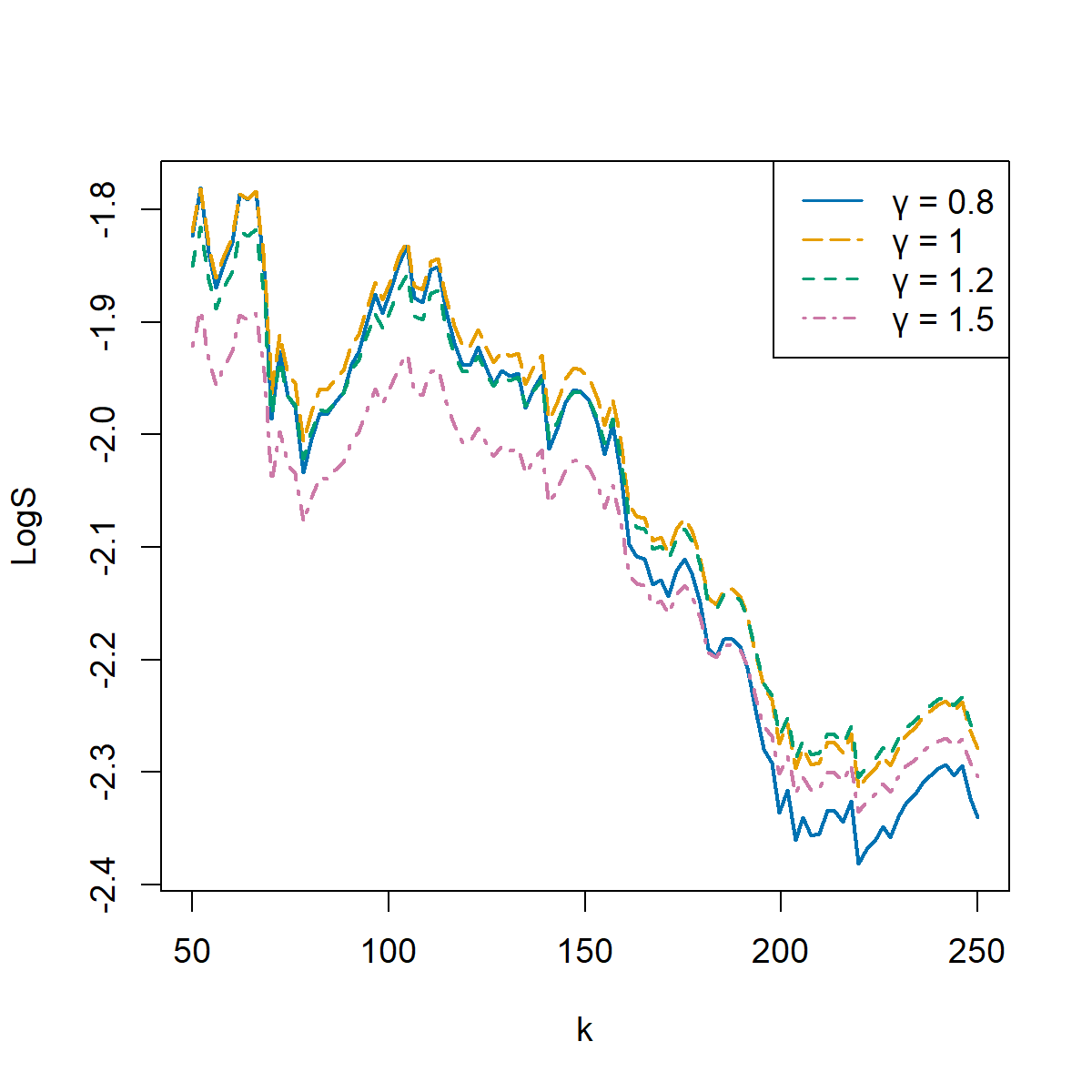}
    \caption{Linear scaling $X_i=X_i^1$, $n = 10^3$}
\end{subfigure}
\begin{subfigure}{0.40\textwidth}
    \includegraphics[width=\linewidth, trim= 0.3in 0.6in 0.3in 0.6in,clip]{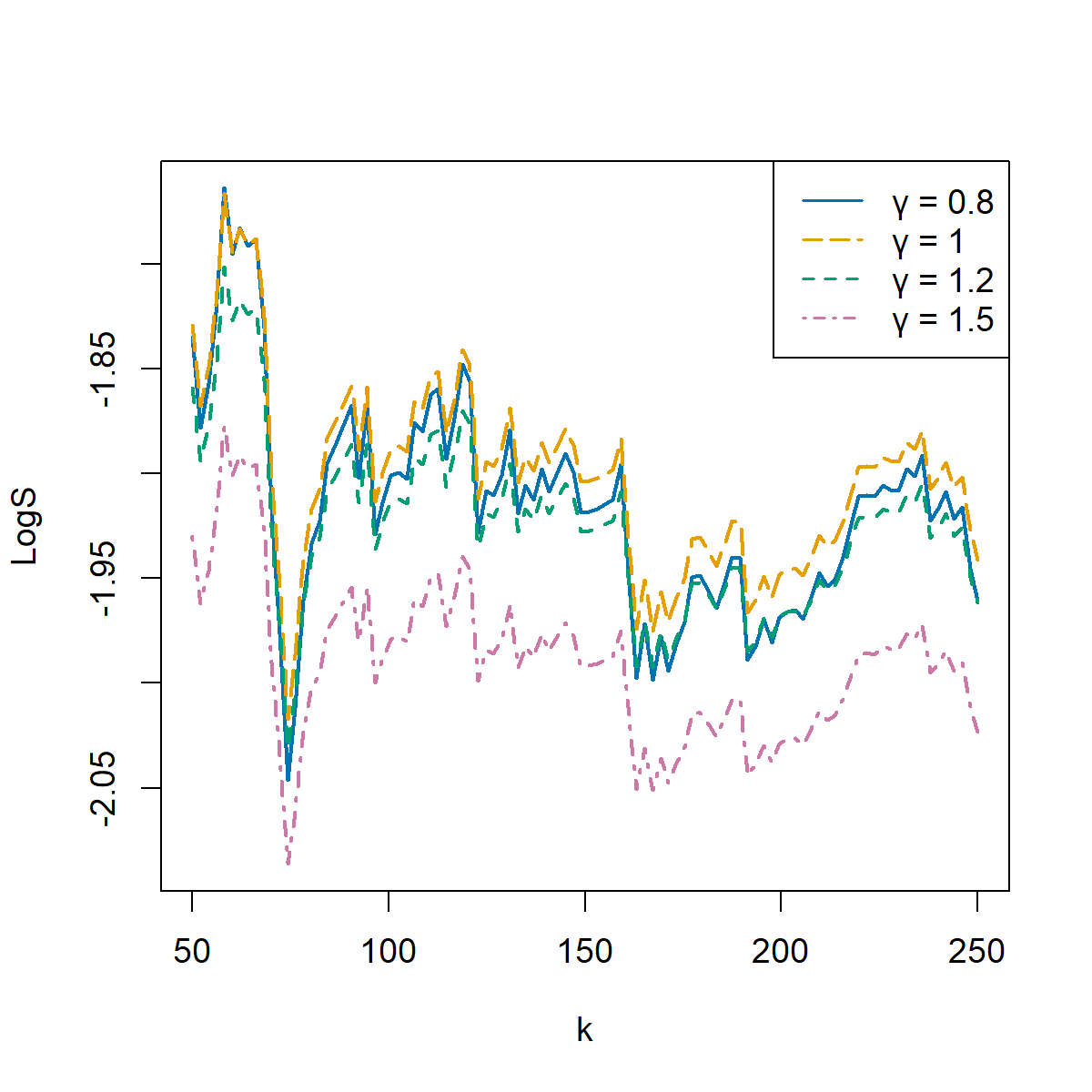}
    \caption{Sinusoidal scaling $X_i=X_i^2$, $n = 10^3$}
\end{subfigure}\\
\begin{subfigure}{0.40\textwidth}
    \includegraphics[width=\linewidth, trim= 0.3in 0.6in 0.3in 0.6in,clip]{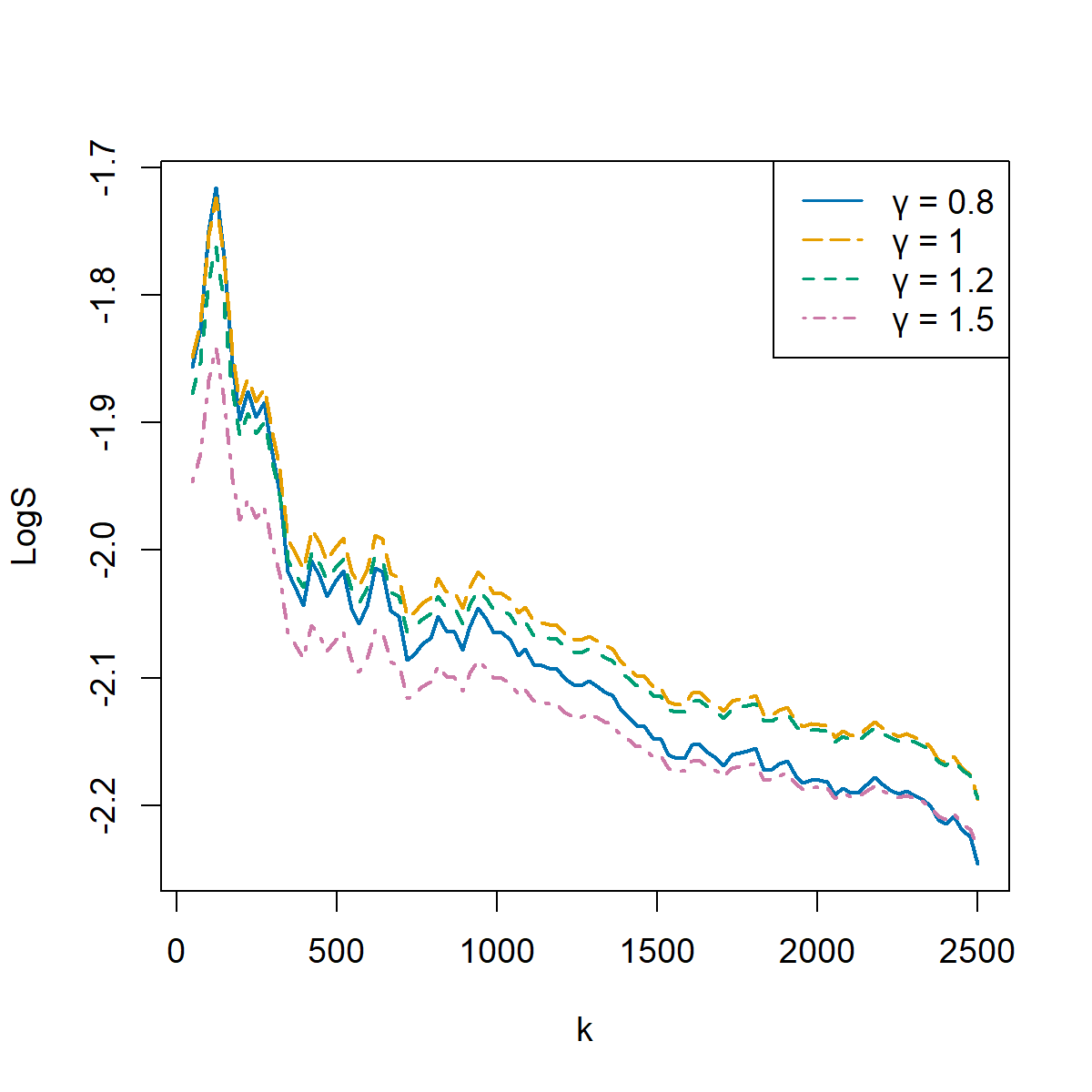}
    \caption{Linear scaling $X_i=X_i^1$, $n = 10^4$}
\end{subfigure}
\begin{subfigure}{0.40\textwidth}
    \includegraphics[width=\linewidth, trim= 0.3in 0.6in 0.3in 0.6in,clip]{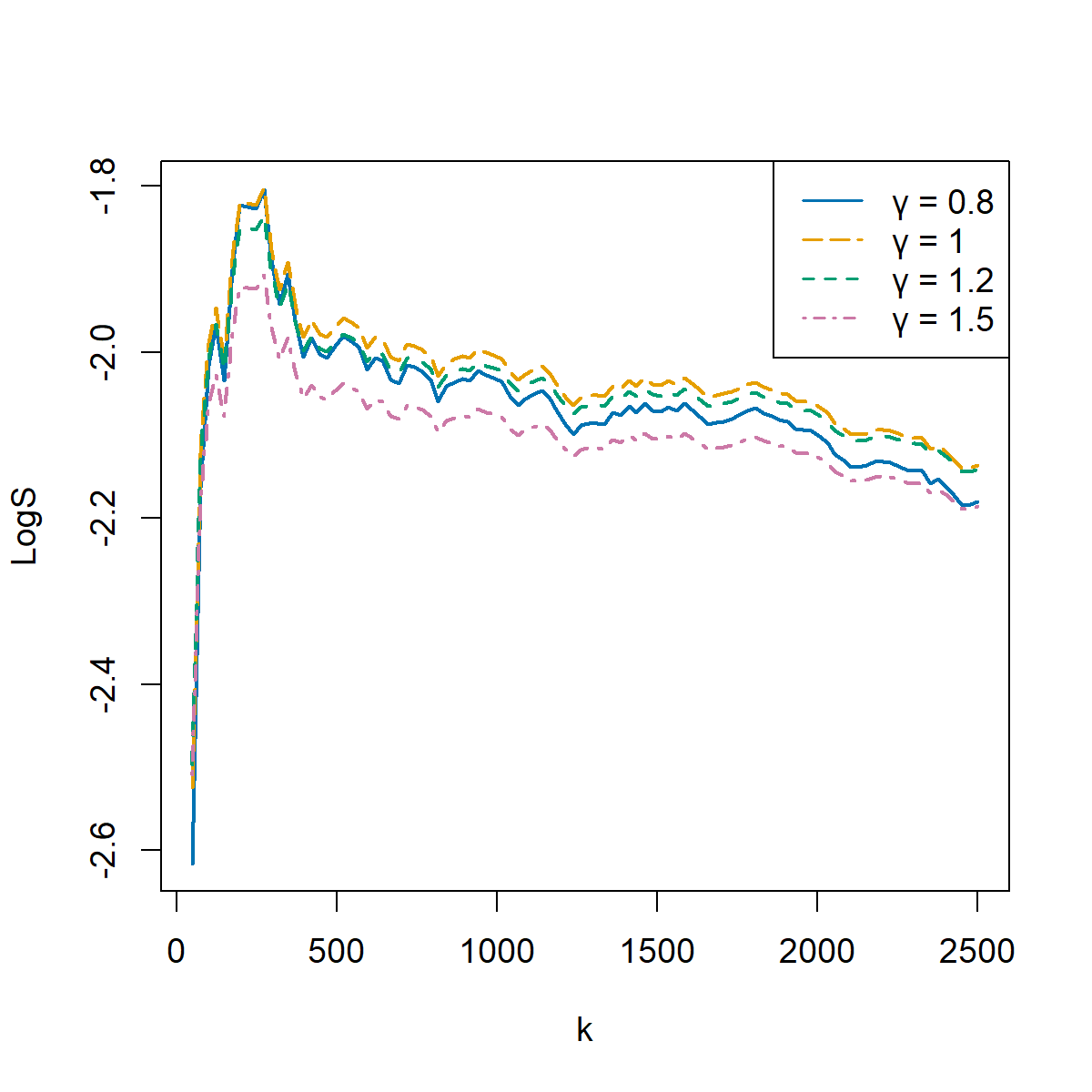}
    \caption{Sinusoidal scaling $X_i=X_i^2$, $n = 10^4$}
\end{subfigure}\\
\begin{subfigure}{0.40\textwidth}
    \includegraphics[width=\linewidth, trim= 0.3in 0.6in 0.3in 0.6in,clip]{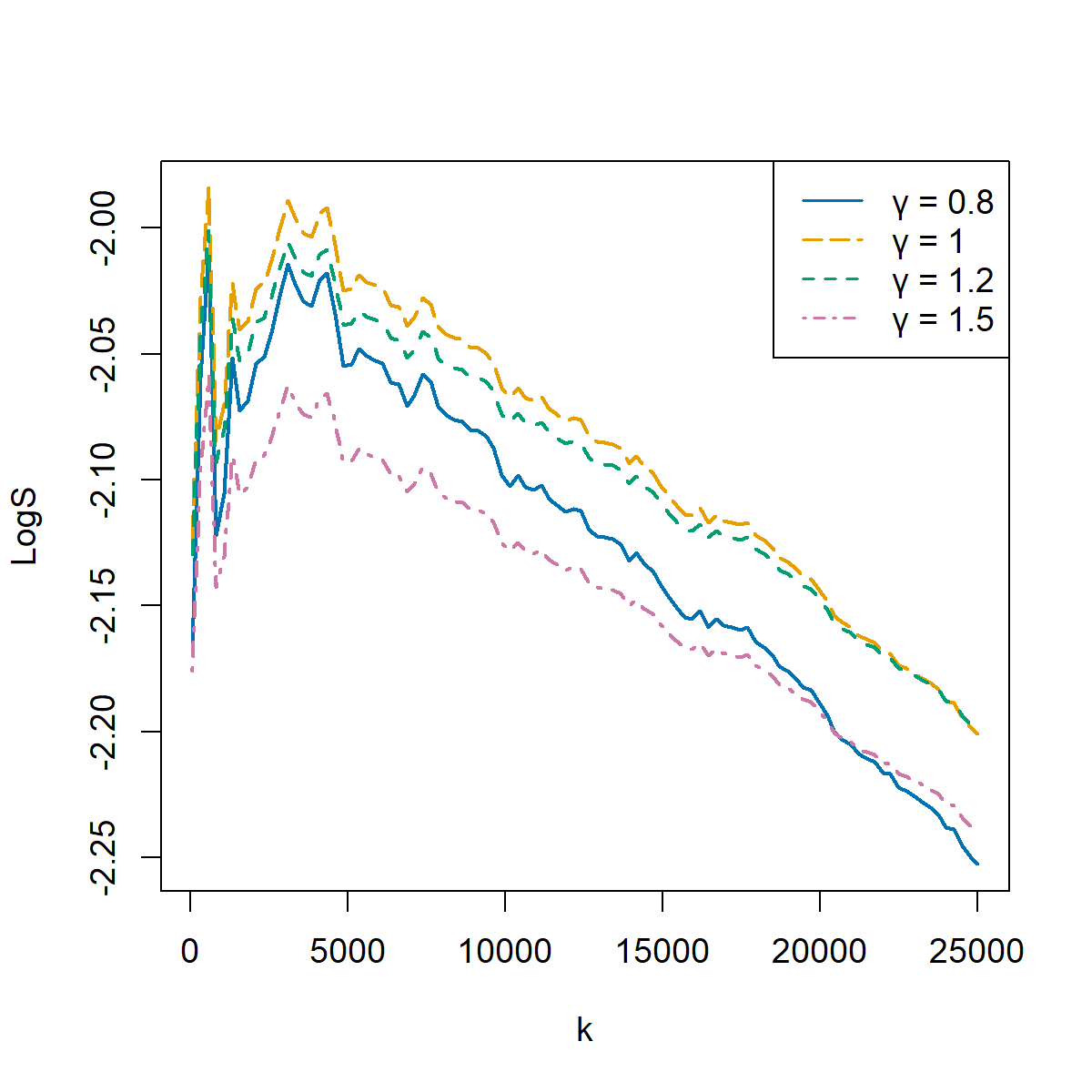}
    \caption{Linear scaling $X_i=X_i^1$, $n = 10^5$}
\end{subfigure}
\begin{subfigure}{0.40\textwidth}
    \includegraphics[width=\linewidth, trim= 0.3in 0.6in 0.3in 0.6in,clip]{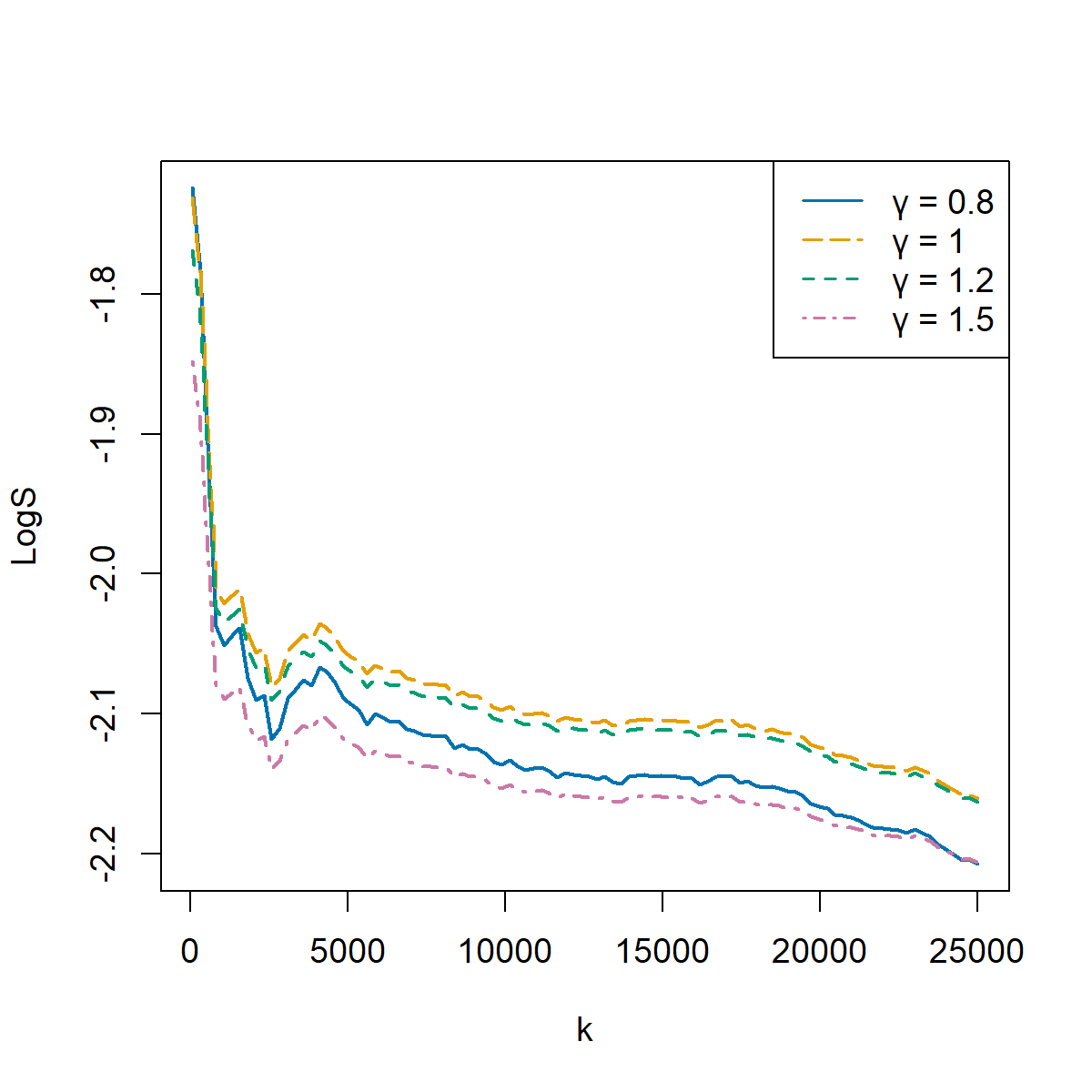}
    \caption{Sinusoidal scaling $X_i=X_i^2$, $n = 10^5$}
\end{subfigure}

\caption{Empirical logarithmic scores (vertical axis) versus $k$ (horizontal axis) for Fréchet baseline samples with heterogeneous scaling, $Y_i=X_iZ_i$, where $Z_i$ has tail index $\gamma_G=1$. Candidate tail indices are $\gamma \in \{0.8, 1, 1.2, 1.5\}$. Left panels use $X_i^1={1+}i/n$, right panels use $X_i^2=1.5+0.5\sin(6\pi i/n)$, and rows correspond to $n=10^3,10^4,10^5$.
}
\label{fig:LS_frechet_hetero}
\end{figure}

\begin{figure}[H]
\centering

\begin{subfigure}{0.40\textwidth}
    \includegraphics[width=\linewidth, trim= 0.3in 0.6in 0.3in 0.6in,clip]{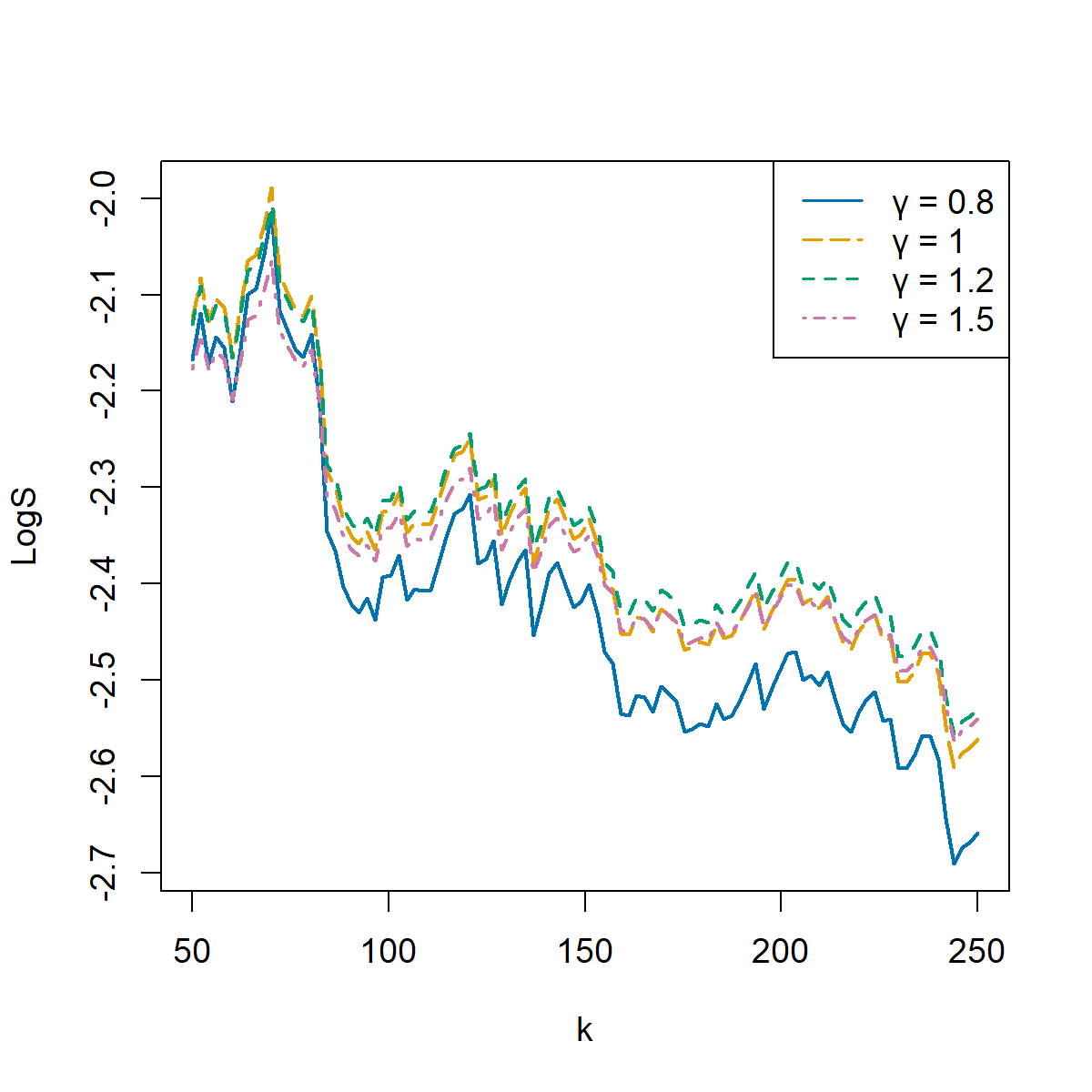}
    \caption{Linear scaling $X_i=X_i^1$, $n = 10^3$}
\end{subfigure}
\begin{subfigure}{0.40\textwidth}
    \includegraphics[width=\linewidth, trim= 0.3in 0.6in 0.3in 0.6in,clip]{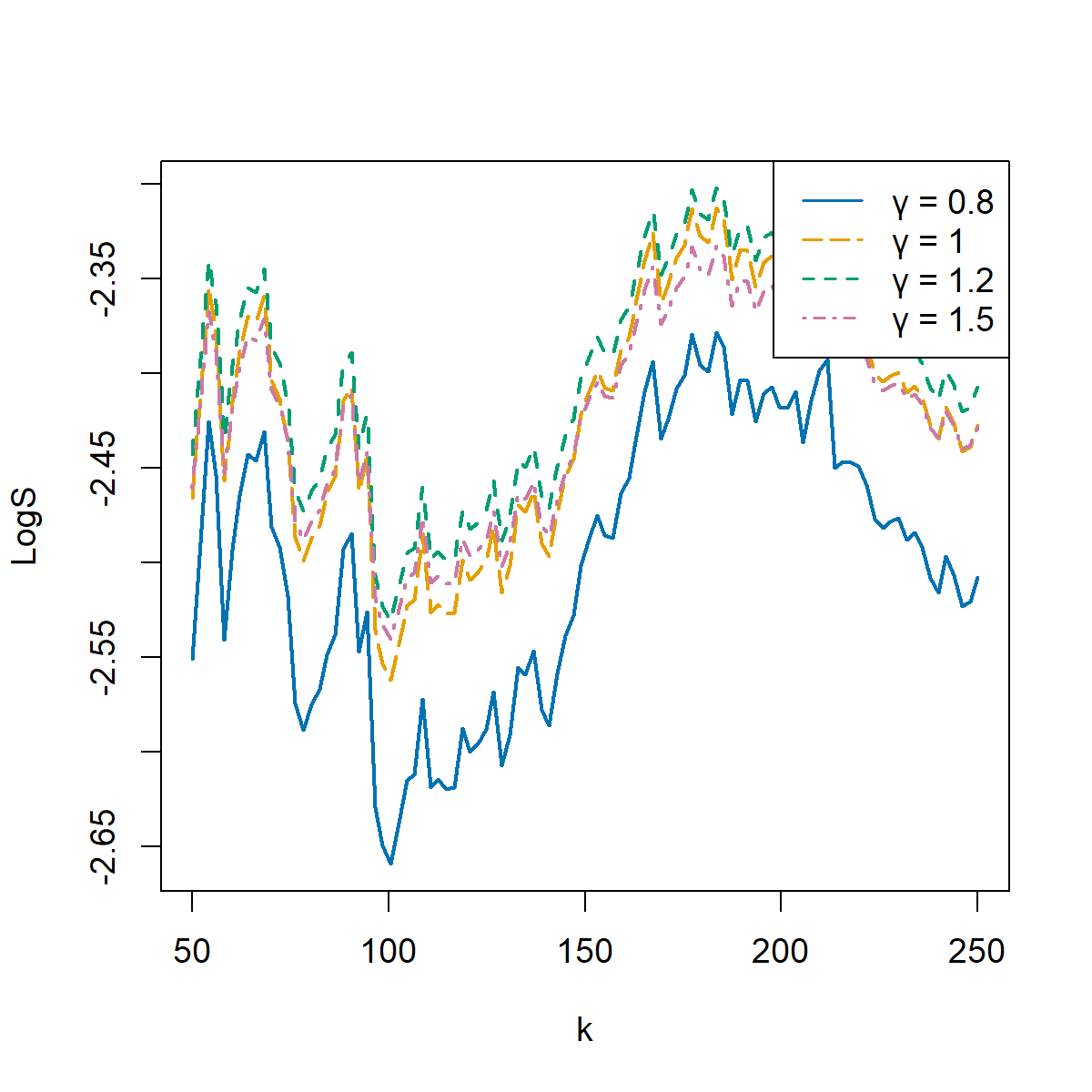}
    \caption{Sinusoidal scaling $X_i=X_i^2$, $n = 10^3$}
\end{subfigure}\\
\begin{subfigure}{0.40\textwidth}
    \includegraphics[width=\linewidth, trim= 0.3in 0.6in 0.3in 0.6in,clip]{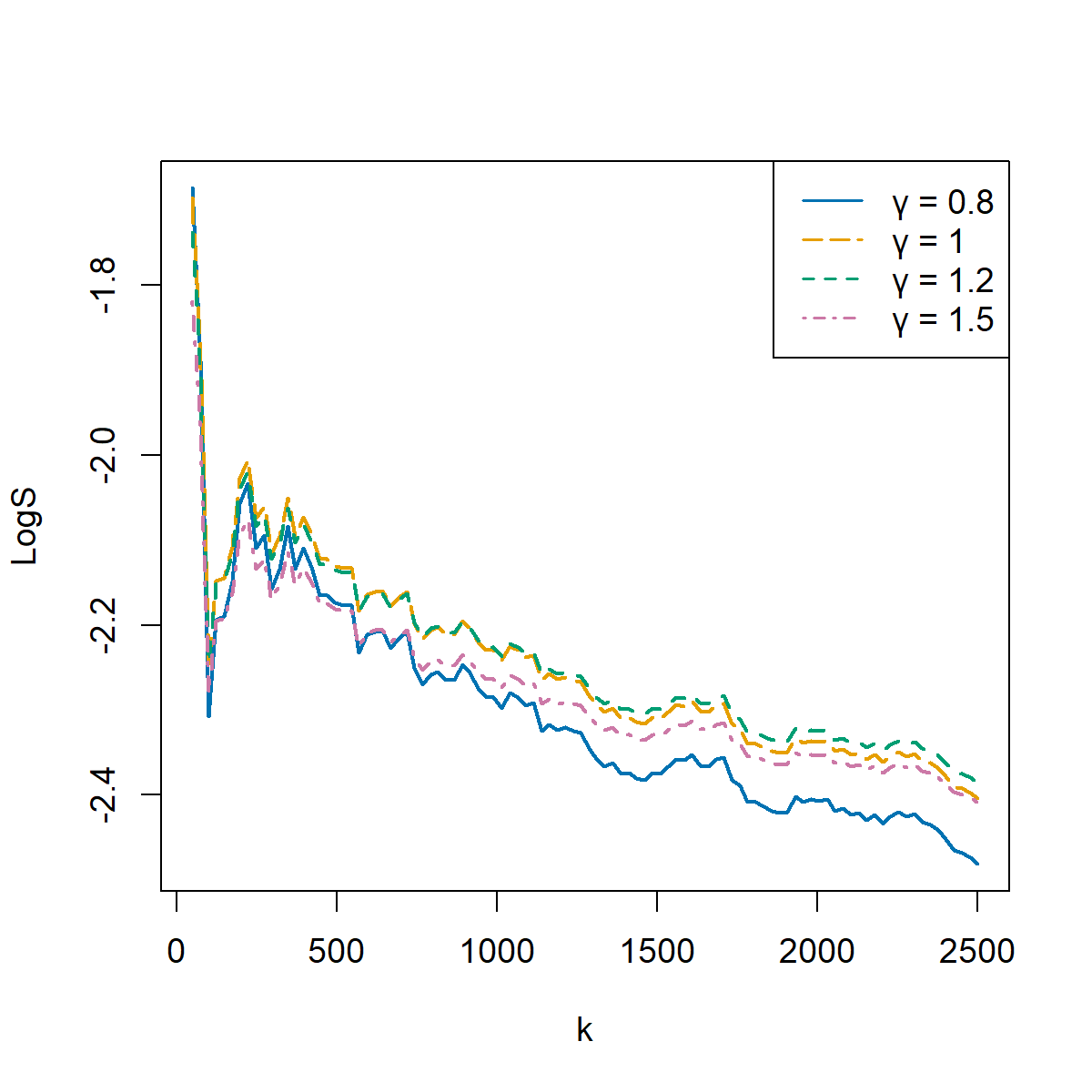}
    \caption{Linear scaling $X_i=X_i^1$, $n = 10^4$}
\end{subfigure}
\begin{subfigure}{0.40\textwidth}
    \includegraphics[width=\linewidth, trim= 0.3in 0.6in 0.3in 0.6in,clip]{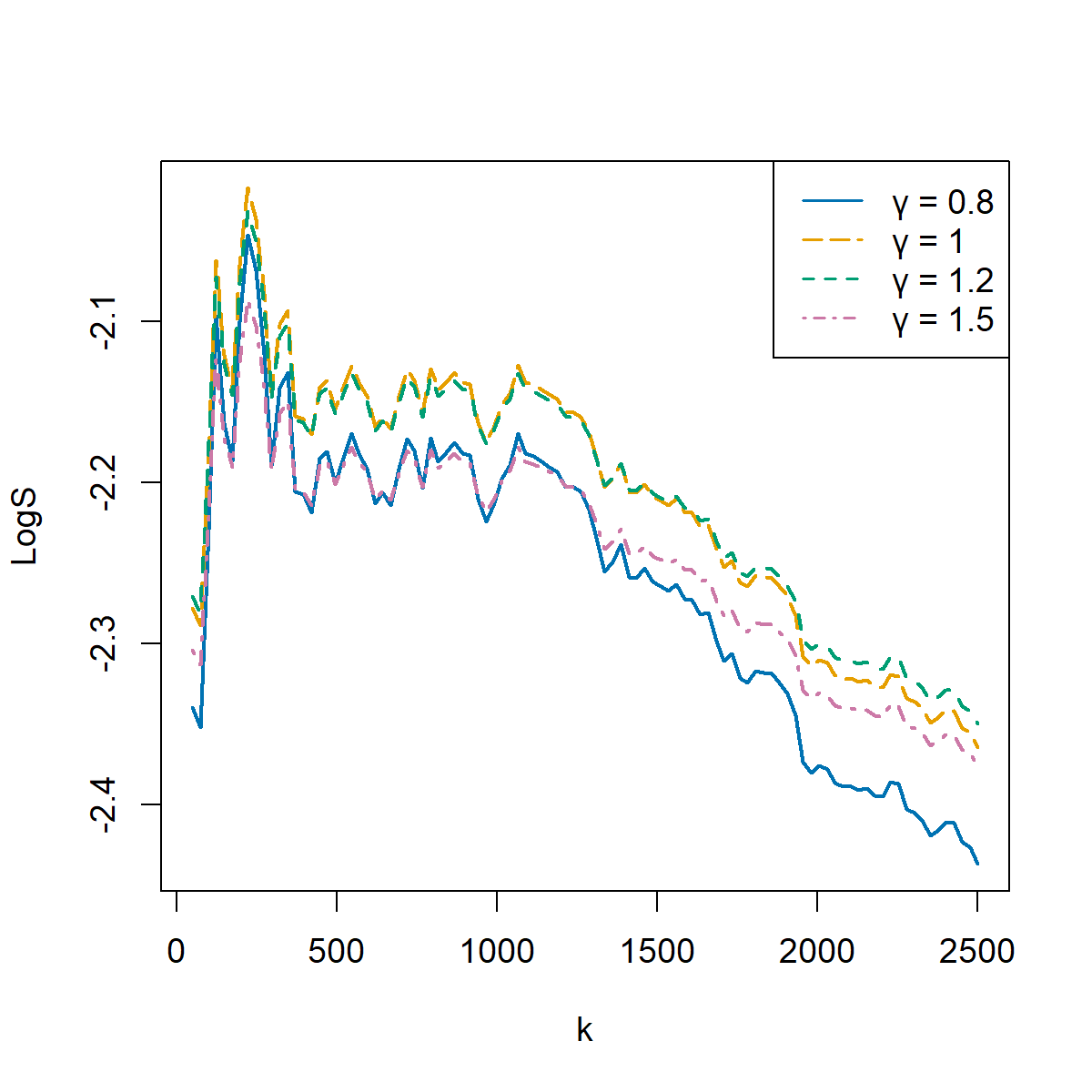}
    \caption{Sinusoidal scaling $X_i=X_i^2$, $n = 10^4$}
\end{subfigure}\\
\begin{subfigure}{0.40\textwidth}
    \includegraphics[width=\linewidth, trim= 0.3in 0.6in 0.3in 0.6in,clip]{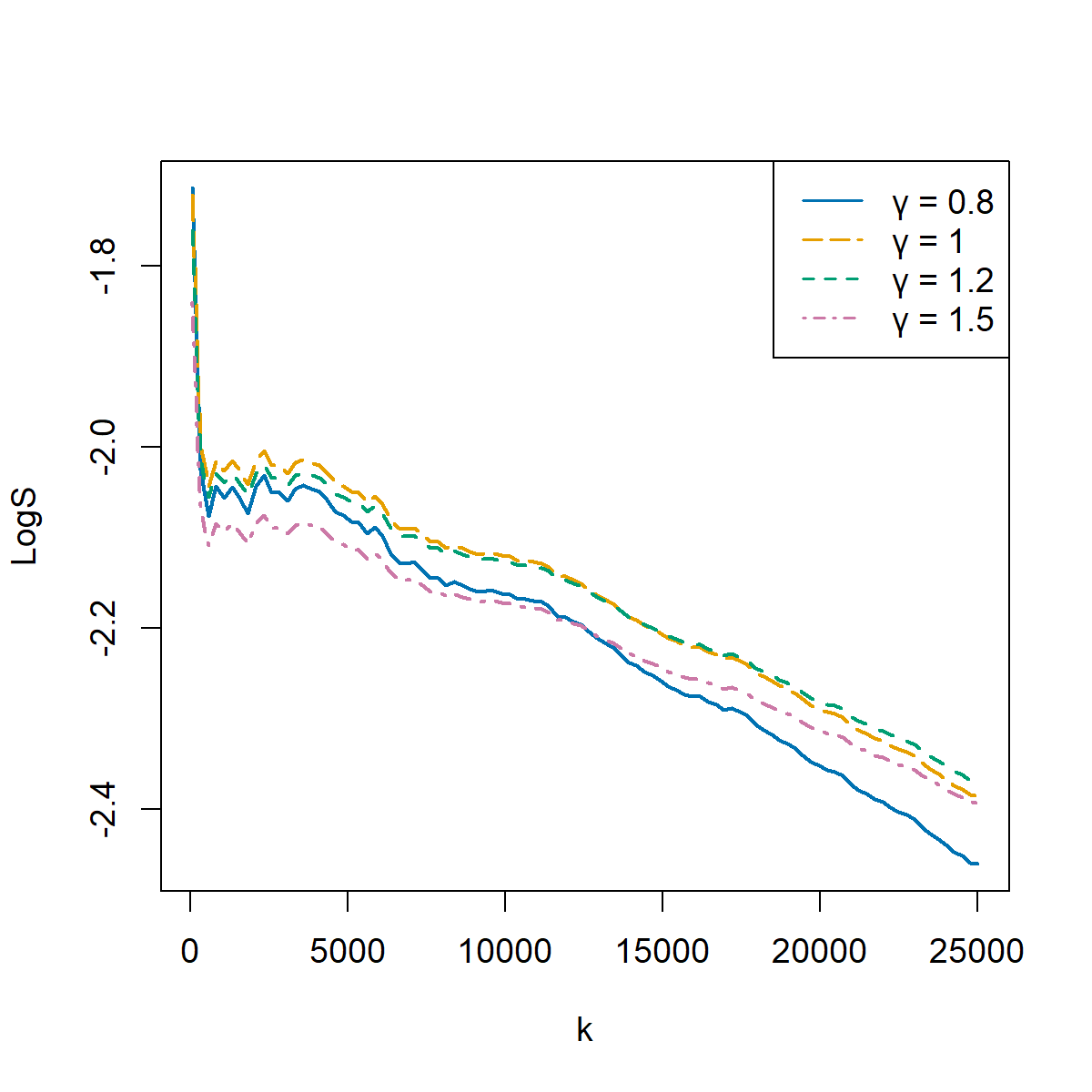}
    \caption{Linear scaling $X_i=X_i^1$, $n = 10^5$}
\end{subfigure}
\begin{subfigure}{0.40\textwidth}
    \includegraphics[width=\linewidth, trim= 0.3in 0.6in 0.3in 0.6in,clip]{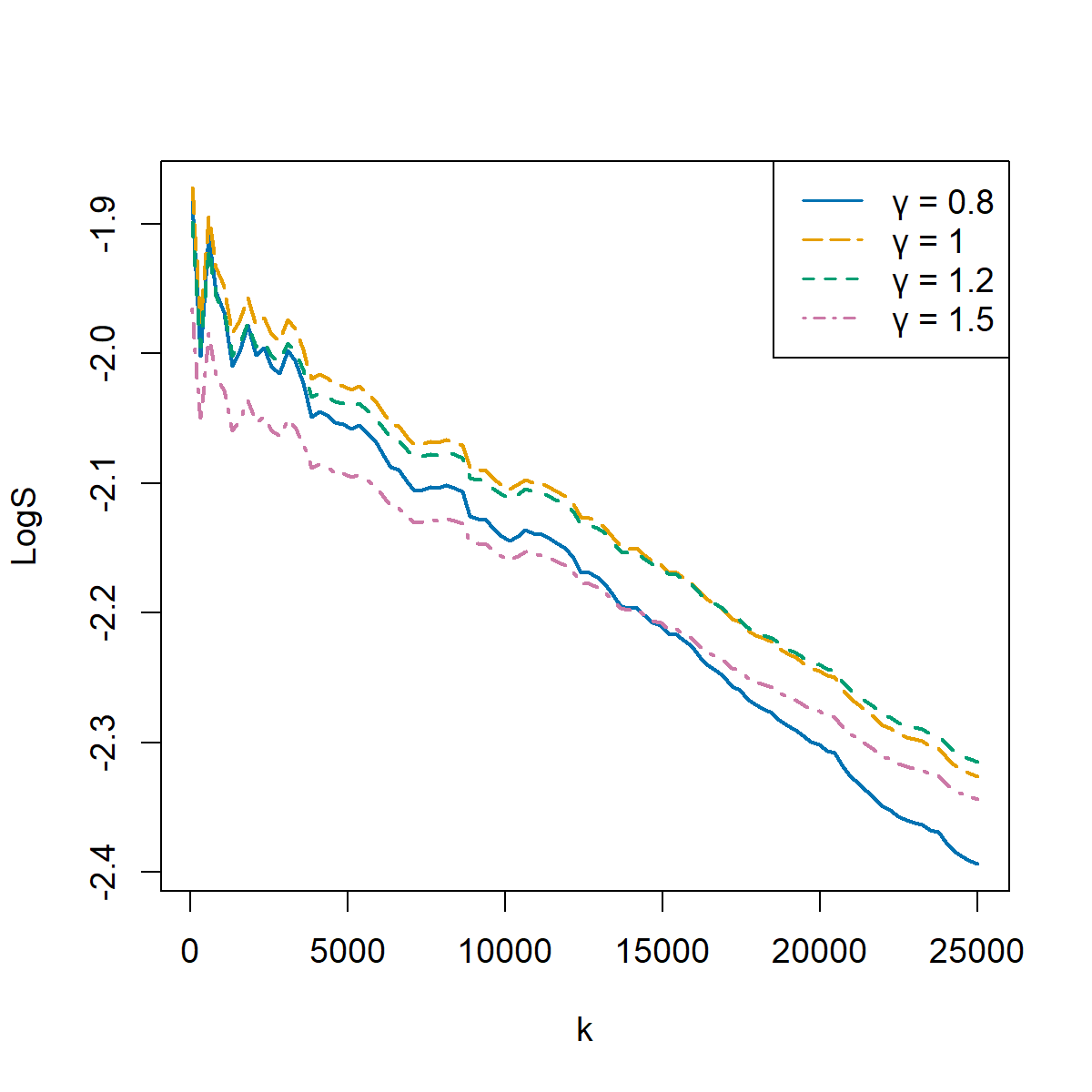}
    \caption{Sinusoidal scaling $X_i=X_i^2$, $n = 10^5$}
\end{subfigure}

\caption{
Empirical logarithmic scores (vertical axis) versus $k$ (horizontal axis) for Burr baseline samples with heterogeneous scaling, $Y_i=X_iZ_i$, where $Z_i$ has tail index $\gamma_G=1$. Candidate tail indices are $\gamma \in \{0.8, 1, 1.2, 1.5\}$. Left panels use $X_i^1={1+}i/n$, right panels use $X_i^2=1.5+0.5\sin(6\pi i/n)$, and rows correspond to $n=10^3,10^4,10^5$.
}
\label{fig:LS_burr_hetero}
\end{figure}

\begin{figure}[H]
\centering
\begin{subfigure}{0.40\textwidth}
    \includegraphics[width=\linewidth, trim= 0.3in 0.6in 0.3in 0.6in,clip]{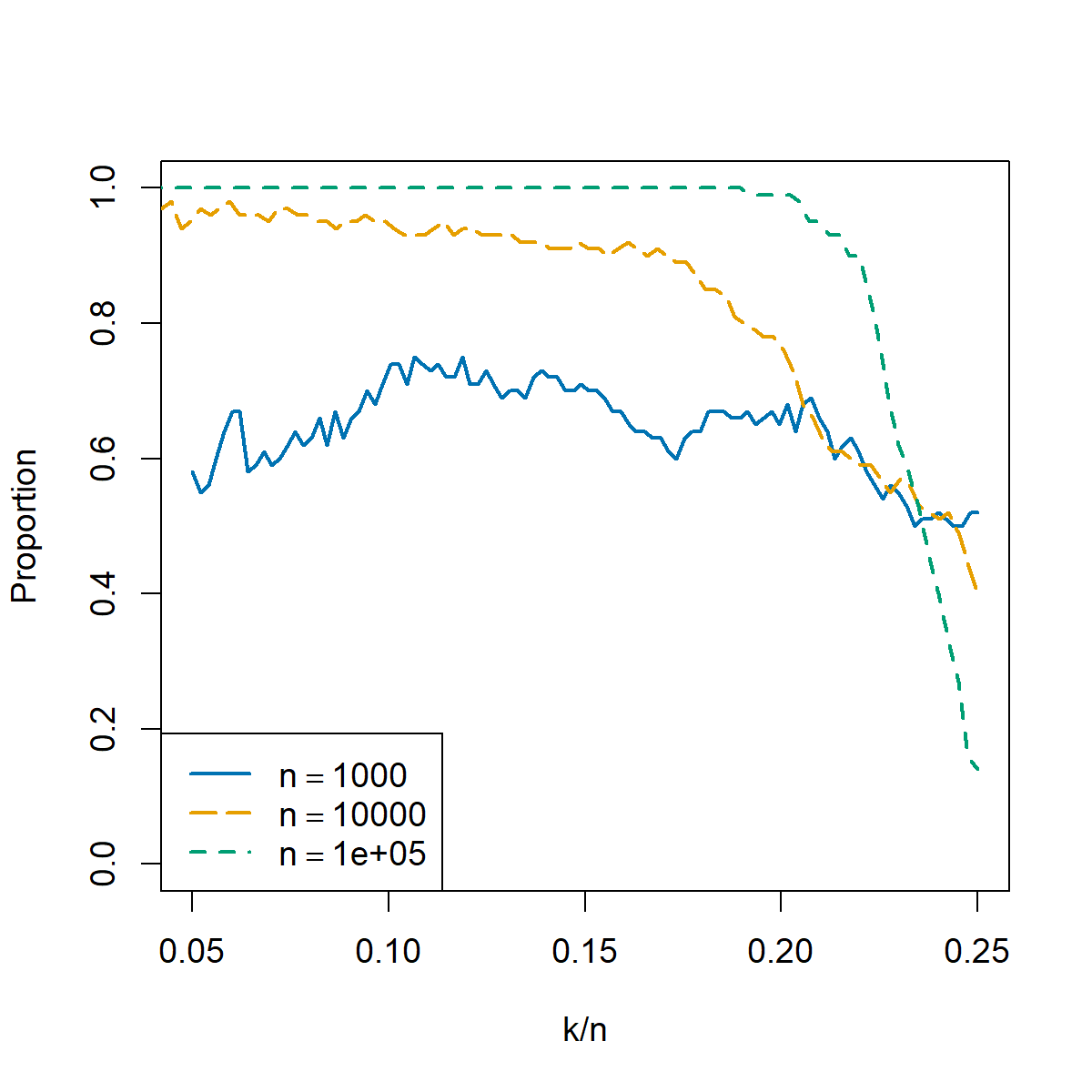}
    \caption{Fréchet DGP, Linear scaling $X_i=X_i^1$}
\end{subfigure}
\begin{subfigure}{0.40\textwidth}
    \includegraphics[width=\linewidth, trim= 0.3in 0.6in 0.3in 0.6in,clip]{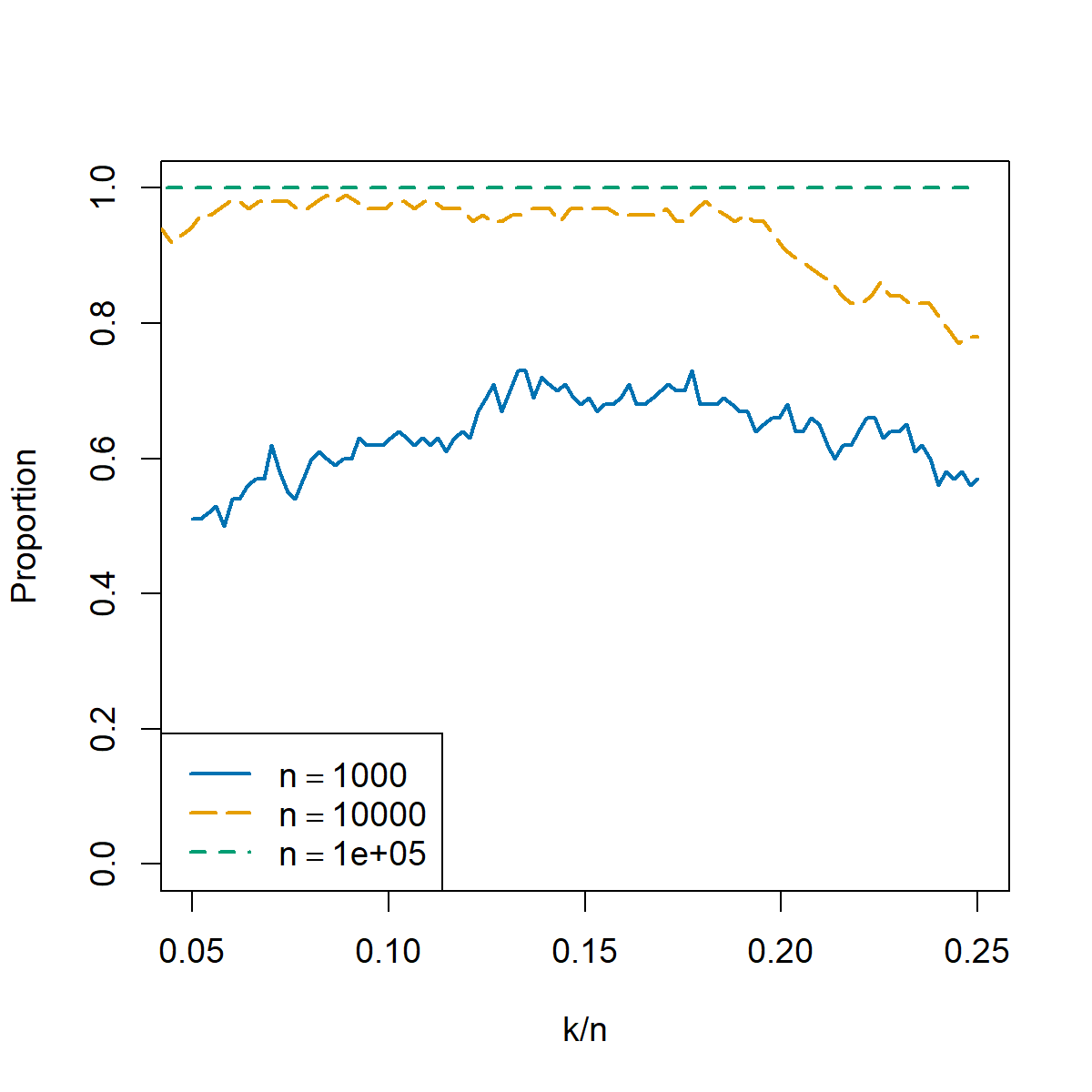}
    \caption{Fréchet DGP, Sinusoidal scaling $X_i=X_i^2$}
\end{subfigure}
\begin{subfigure}{0.40\textwidth}
    \includegraphics[width=\linewidth, trim= 0.3in 0.6in 0.3in 0.6in,clip]{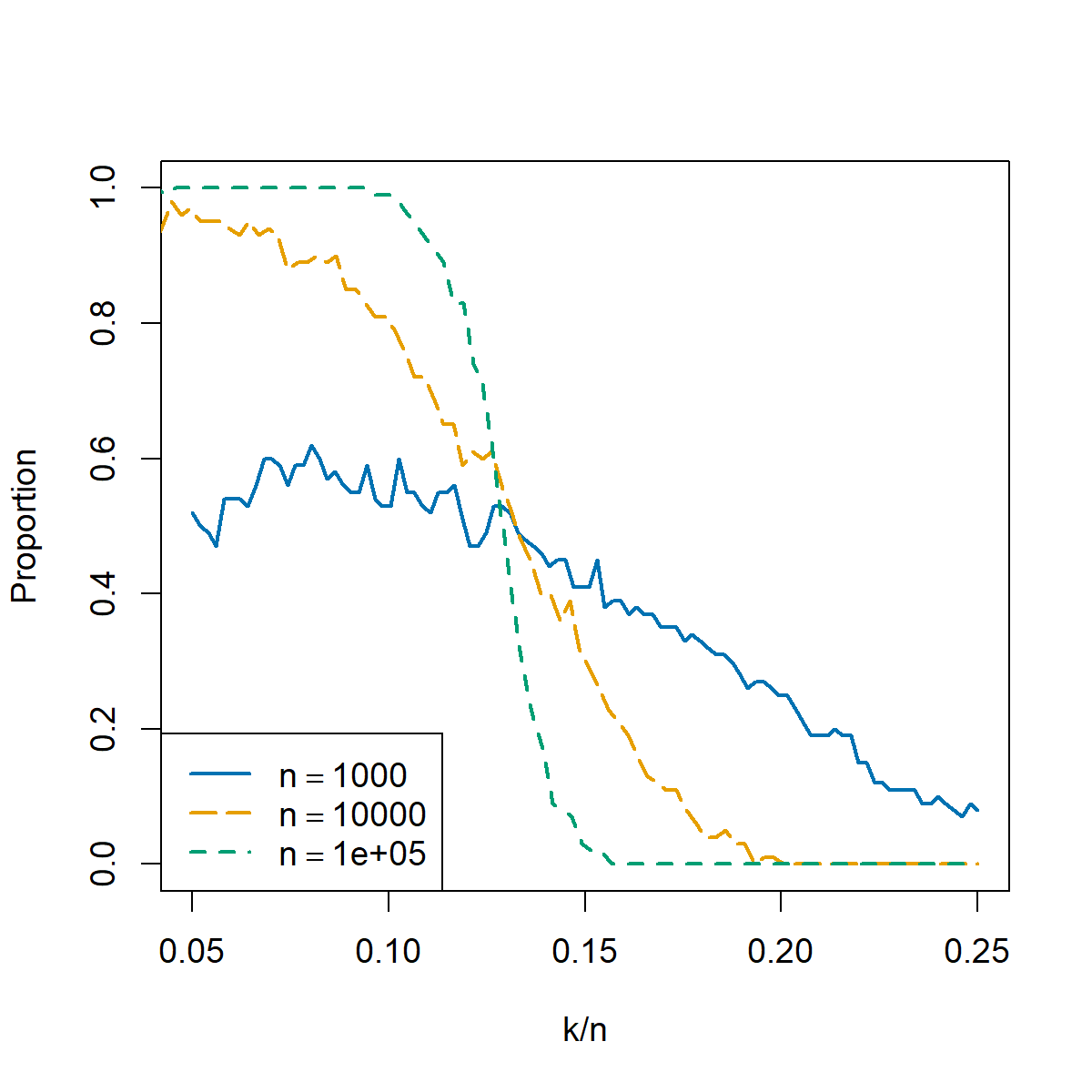}
    \caption{Burr DGP, Linear scaling $X_i=X_i^1$}
\end{subfigure}
\begin{subfigure}{0.40\textwidth}
    \includegraphics[width=\linewidth, trim= 0.3in 0.6in 0.3in 0.6in,clip]{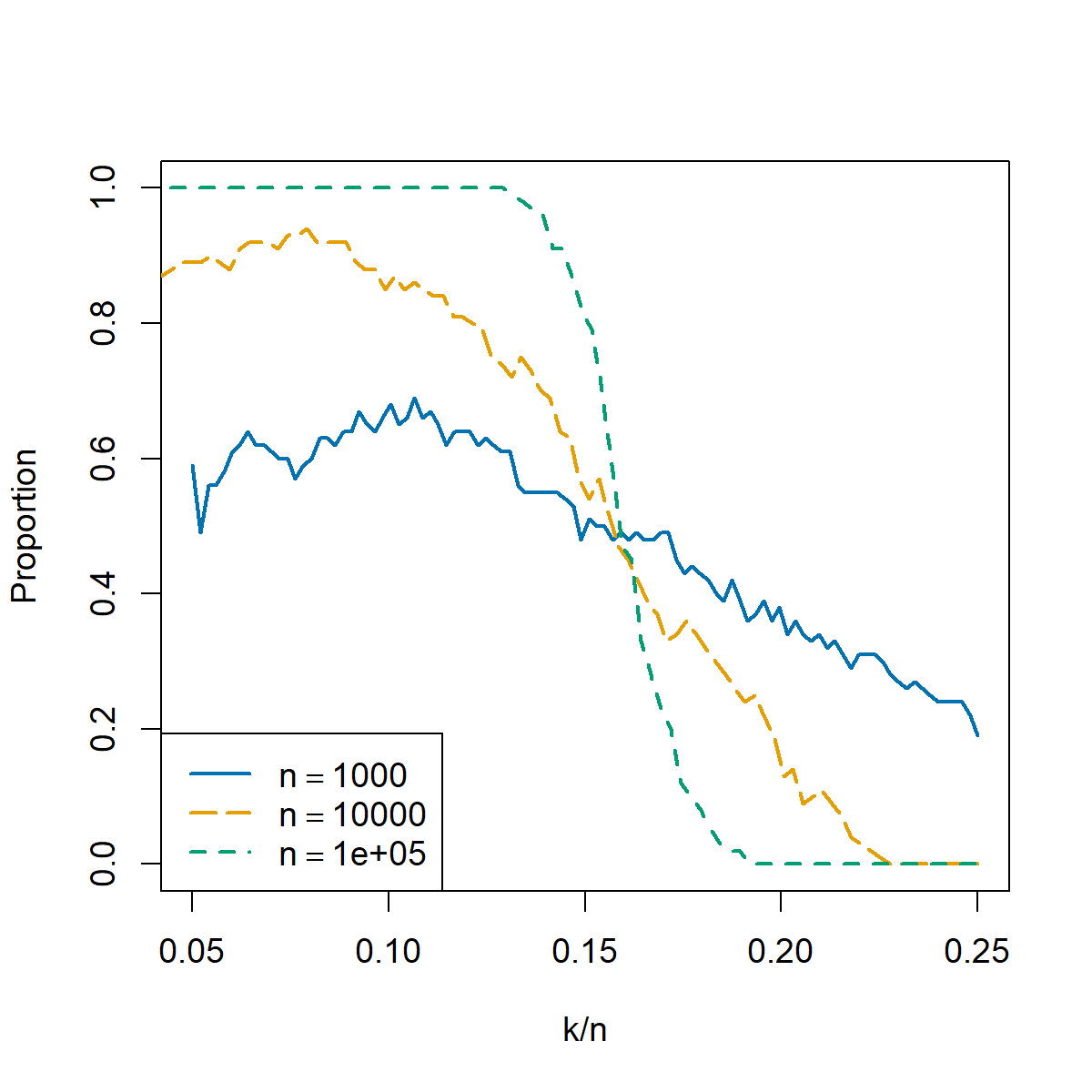}
    \caption{Burr DGP, Sinusoidal scaling $X_i=X_i^2$}
\end{subfigure}\\
\caption{
Proportion of simulations (based on 100 Monte Carlo replications) in which the empirical logarithmic score in \eqref{eq: estimator} is maximized at the true tail index $\gamma=1$ among candidate values $\gamma \in \{0.8, 1, 1.2, 1.5\}$, plotted against the relative number of upper order statistics $k/n$. The top panels correspond to Fr\'echet data-generating distributions and the bottom panels to Burr distributions. The left panels use a linearly varying scale, while the right panels use a sinusoidally varying scale. Curves correspond to sample sizes $n=10^3,10^4,10^5$. Higher values indicate that the logarithmic score more frequently identifies the true tail index.}
\label{fig:LS_mean_scale}
\end{figure}

\subsection{Finite-Sample Performance of Score-Optimization Estimators} 

\begin{figure}[H]
    \centering
    \begin{subfigure}[b]{0.40\textwidth}
        \centering
        \includegraphics[width=\textwidth, trim= 0.3in 0.6in 0.3in 0.6in,clip]{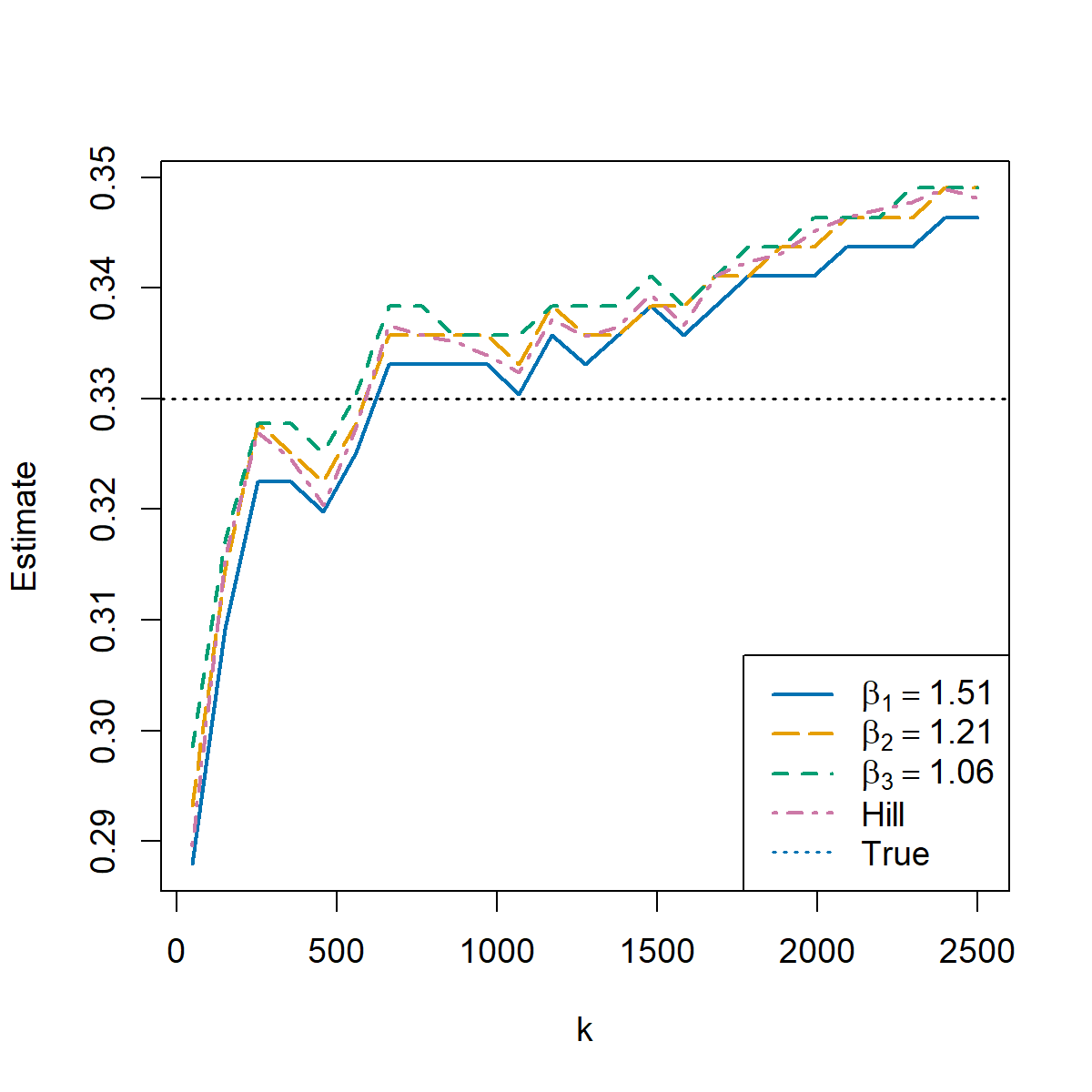}
        \caption{True tail index $\gamma_G=0.33$}
        \label{fig:Ener_fre1}
    \end{subfigure}
    \begin{subfigure}[b]{0.40\textwidth}
        \centering
        \includegraphics[width=\textwidth, trim= 0.3in 0.6in 0.3in 0.6in,clip]{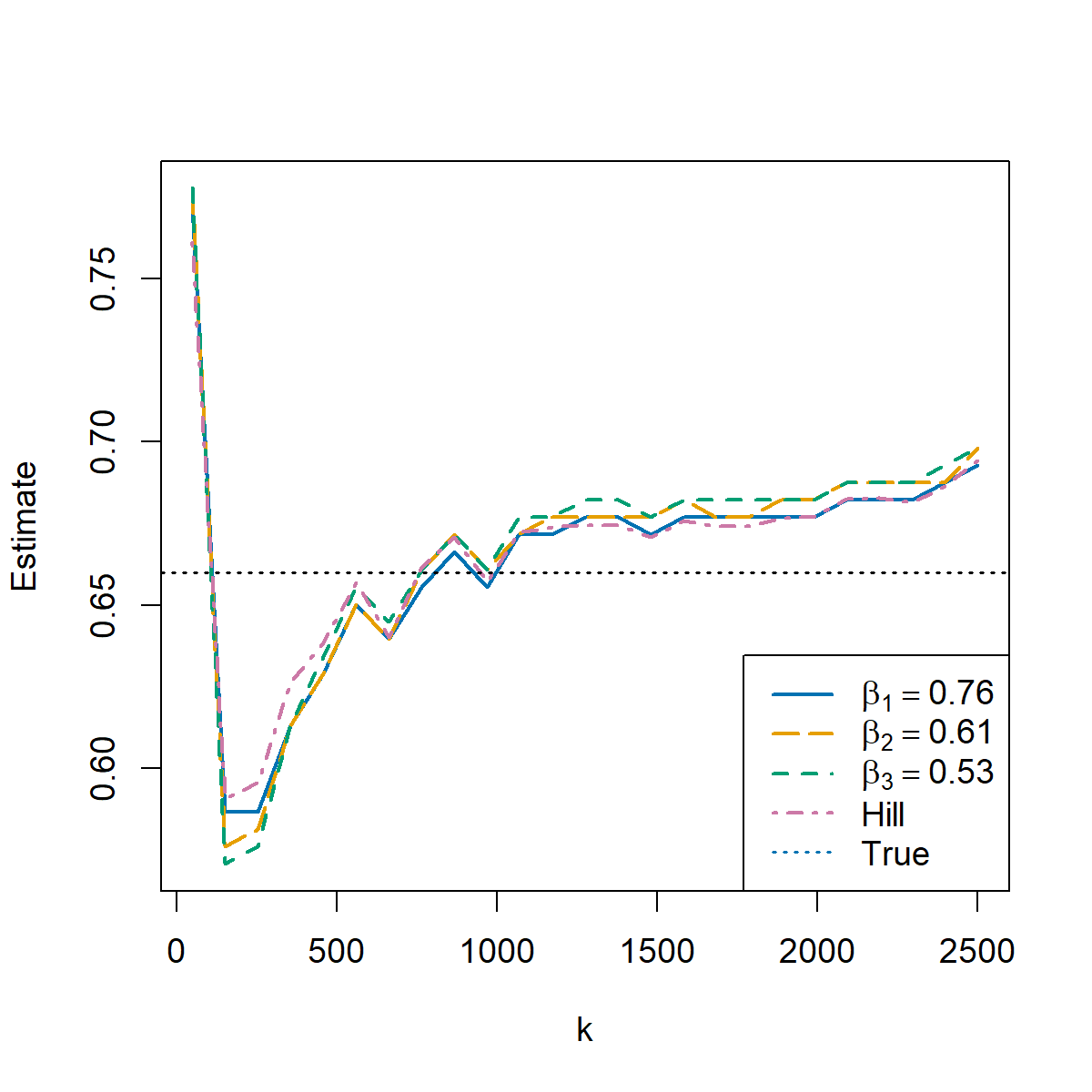}
        \caption{True tail index $\gamma_G=0.66$}
        \label{fig:Ener_fre4}
    \end{subfigure}
    \begin{subfigure}[b]{0.40\textwidth}
        \centering
        \includegraphics[width=\textwidth, trim= 0.3in 0.6in 0.3in 0.6in,clip]{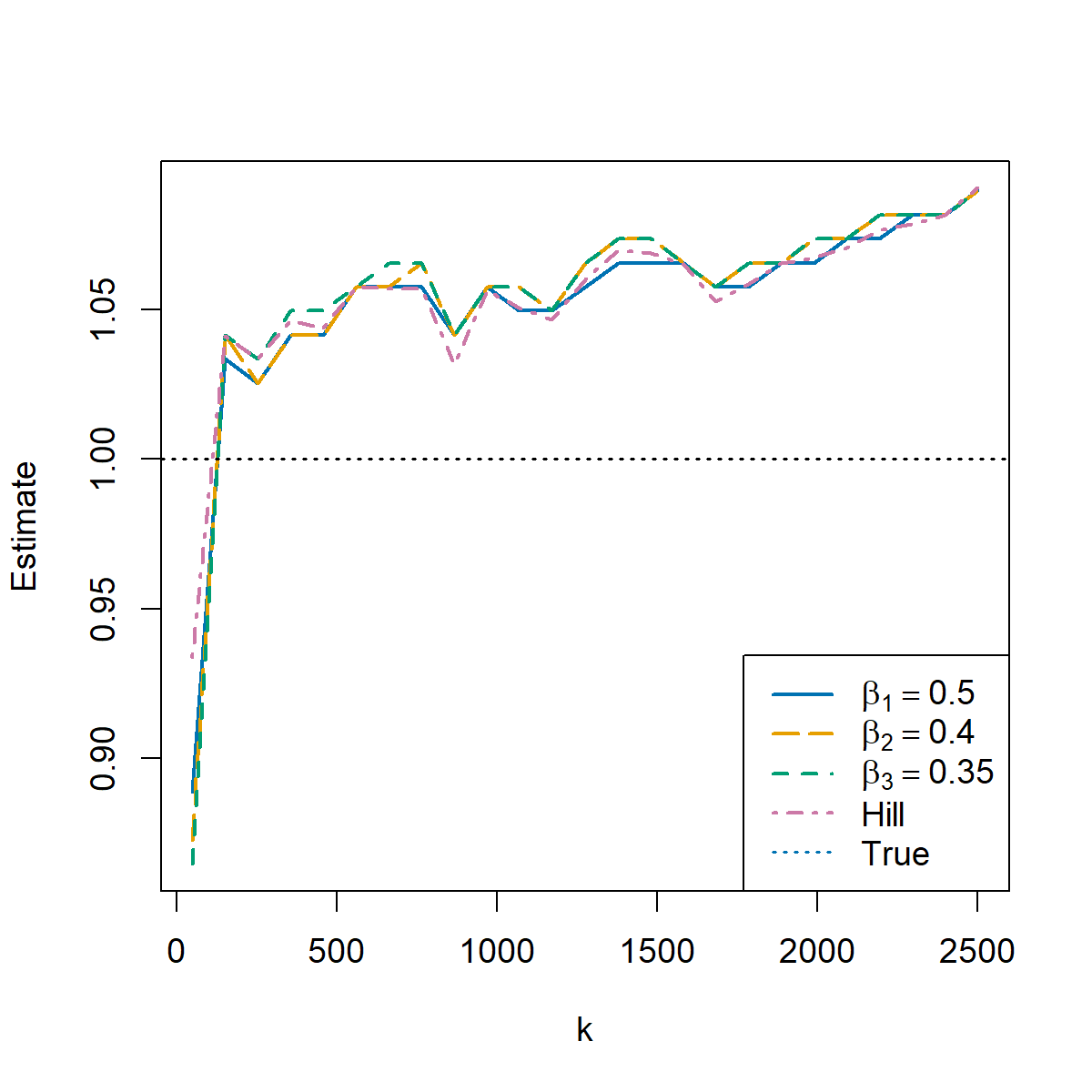}
        \caption{True tail index $\gamma_G=1$}
        \label{fig:Ener_fre5}
    \end{subfigure}
    \begin{subfigure}[b]{0.40\textwidth}
        \centering
        \includegraphics[width=\textwidth, trim= 0.3in 0.6in 0.3in 0.6in,clip]{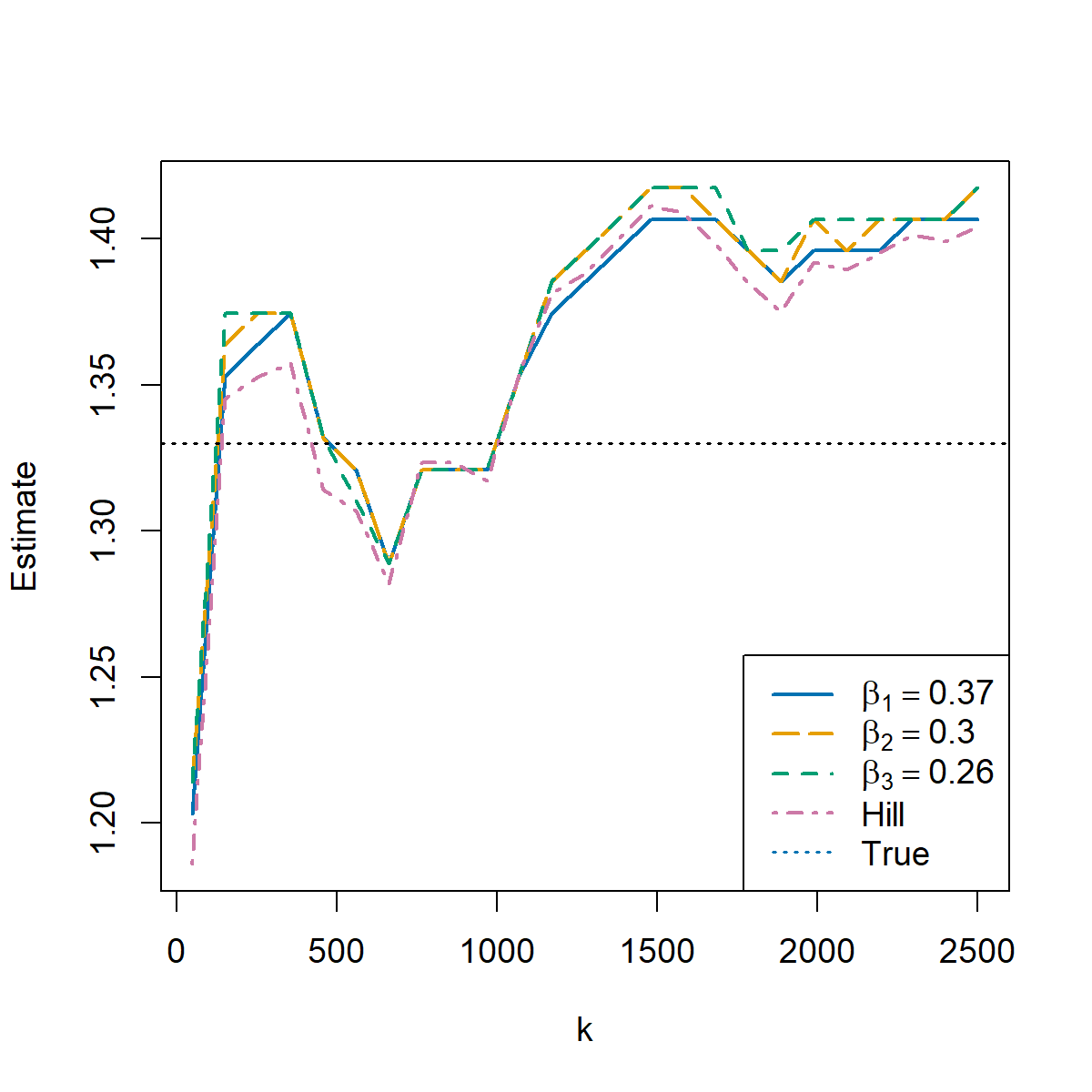}
        \caption{True tail index $\gamma_G=1.33$}
        \label{fig:Ener_fre6}
    \end{subfigure}
    \caption{Tail-index estimates $\hat{\gamma}_k$ (vertical axis) as functions of the number of upper order statistics $k$ (horizontal axis) for Fréchet samples with $n=10^4$. Panels vary the true tail index $\gamma_G$. Each panel compares the Hill estimator with three Energy-score estimators based on $\beta_1,\beta_2,\beta_3$; the horizontal reference line marks $\gamma_G$. 
    }
    \label{fig:Ener_hill_fre}
\end{figure}

We now examine the finite-sample behavior of tail-index estimators based on the Energy score, $\hat{\gamma}_k(ES_\beta)$, {to illustrate the finite-sample behavior of
Energy-score-based estimation.} Their performance is compared with that of the classical Hill estimator over a range of tail indices. Recall that the parameter $\beta$ must satisfy $\beta < 1/\gamma^U$. 

Samples of size $n = 10^4$ are generated from Fr\'echet distributions. Companion Burr runs were also conducted but are not shown, as they display the same qualitative sensitivity in \(k\) and \(\beta\) and were omitted for brevity. As in the previous simulation studies, the remaining parameters are chosen analogously. The number of upper order statistics $k$ is varied over 25 equally spaced values between $50$ and $n/4$.
The Energy score was evaluated over the compact parameter space
\[ 
\Gamma=[\gamma_L,\gamma_U]=[\,0.8\gamma_G,\,2\gamma_G\,],
\]
using a grid of 150 equidistant candidate values. {The interval is a heuristic choice, included to contain the true value while also allowing the estimator to select substantially heavier-tailed alternatives.} To study the sensitivity of the estimator to the parameter $\beta$ in the score, three different values were considered,
\[
\beta_1 = \frac{1}{{\gamma_U}} - 0.001, \qquad
\beta_2 = 0.8\beta_1, \qquad
\beta_3 = 0.7\beta_1.
\]
For each tail index $\gamma_G$, the estimated values $\hat{\gamma}_k$ obtained from the three energy-score-based estimators and the Hill estimator are plotted against $k$. The true value $\gamma_G$ is shown as a horizontal reference line.

{Figure~\ref{fig:Ener_hill_fre} reports the Fréchet results for a single
realization. Across all panels, the choice of $\beta$ has only a limited
impact on the resulting Energy-score estimator. Moreover, the three
Energy-score estimators are almost indistinguishable from the Hill estimator
over most of the considered range of $k$. Thus, in this setting, optimizing
the Energy score leads to essentially the same finite-sample behaviour as the
classical Hill estimator.
This should be viewed in light of Remark~\ref{col:LogS} and
Remark~\ref{rem:var_Logs}. These results give the logarithmic-score estimator,
and hence the Hill estimator, a natural benchmark role. Hence, one should not
expect the Energy-score estimators to provide a clear efficiency improvement in
this experiment. At the same time, the near-overlap of the curves indicates
that using the Energy score does not lead to a noticeable deterioration of the
estimates.
Table~\ref{tab:var_bias} supports the same conclusion. The reported bias and
variance are estimated from 100 independent simulation runs using the usual
sample mean and sample variance of the estimates. For all four values of
$\gamma_G$ and all three choices of $k$, the Energy-score estimators and the
Hill estimator have very similar bias and variance. The differences between
the three choices of $\beta$ are also small, indicating that the finite-sample
performance is not particularly sensitive to the precise value of $\beta$ in
this simulation design.}

\begin{table}[H] 
\centering
\begin{tabular}{c|cc|cc|cc}
\hline
 & \multicolumn{2}{c|}{$k=0.05n$} & \multicolumn{2}{c|}{$k=0.15n$} & \multicolumn{2}{c}{$k=0.25n$} \\
 & {Variance} & {Bias} & {Variance} & {Bias} & {Variance} & {Bias} \\
\hline

\multicolumn{7}{c}{$\gamma = 0.33$} \\
\hline
Hill      
& 2.08e-04 & 3.66e-03 
& 8.15e-05 & 1.279e-02 
& 5.57e-05 & 2.307e-02 \\
$\beta_1$ 
& 1.98e-04 & 3.05e-03 
& 7.58e-05 & 1.121e-02 
& 5.33e-05 & 2.019e-02 \\
$\beta_2$ 
& 2.09e-04 & 3.66e-03 
& 8.25e-05 & 1.227e-02 
& 5.68e-05 & 2.194e-02 \\
$\beta_3$ 
& 2.17e-04 & 3.90e-03 
& 8.50e-05 & 1.275e-02 
& 5.77e-05 & 2.287e-02 \\

\hline
\multicolumn{7}{c}{$\gamma = 0.66$} \\
\hline
Hill      
& 8.72e-04 & 6.10e-03 
& 3.03e-04 & 2.672e-02 
& 2.38e-04 & 4.826e-02 \\
$\beta_1$ 
& 8.71e-04 & 5.40e-03 
& 3.07e-04 & 2.624e-02 
& 2.36e-04 & 4.729e-02 \\
$\beta_2$ 
& 9.21e-04 & 5.78e-03 
& 3.16e-04 & 2.784e-02 
& 2.48e-04 & 5.000e-02 \\
$\beta_3$ 
& 9.33e-04 & 5.72e-03 
& 3.24e-04 & 2.885e-02 
& 2.55e-04 & 5.128e-02 \\

\hline
\multicolumn{7}{c}{$\gamma = 1$} \\
\hline
Hill      
& 1.64e-03 & 1.710e-02 
& 5.48e-04 & 4.377e-02 
& 3.82e-04 & 7.129e-02 \\
$\beta_1$ 
& 1.64e-03 & 1.640e-02 
& 5.85e-04 & 4.411e-02 
& 4.01e-04 & 7.125e-02 \\
$\beta_2$ 
& 1.67e-03 & 1.681e-02 
& 5.75e-04 & 4.548e-02 
& 3.99e-04 & 7.447e-02 \\
$\beta_3$ 
& 1.72e-03 & 1.705e-02 
& 6.36e-04 & 4.644e-02 
& 4.21e-04 & 7.681e-02 \\

\hline
\multicolumn{7}{c}{$\gamma = 1.33$} \\
\hline
Hill      
& 3.38e-03 & 1.640e-02 
& 1.24e-03 & 5.424e-02 
& 7.36e-04 & 9.525e-02 \\
$\beta_1$ 
& 3.54e-03 & 1.785e-02 
& 1.24e-03 & 5.481e-02 
& 7.28e-04 & 9.562e-02 \\
$\beta_2$ 
& 3.56e-03 & 1.914e-02 
& 1.27e-03 & 5.716e-02 
& 8.09e-04 & 1.0076e-01 \\
$\beta_3$ 
& 3.63e-03 & 1.914e-02 
& 1.36e-03 & 5.898e-02 
& 8.07e-04 & 1.0290e-01 \\

\hline
\end{tabular}
\caption{Bias and variance of the Hill estimator and three Energy-score based estimators corresponding to $\beta_1,\beta_2,$ and $\beta_3$, computed from 100 simulated samples drawn from a Fr\'echet distribution with sample size $n = 10^4$. Results are reported for different values of the extreme value index $\gamma$ and for three choices of the number of upper order statistics $k$.
}
\label{tab:var_bias}
\end{table}

\section{Empirical Application: USAutoBI Claim Severity} \label{sec:Data}

The empirical analysis is based on the \texttt{usautoBI} (USAutoBI) automobile bodily injury claims dataset from the \texttt{CASdatasets} package in \textsf{R}. The dataset contains $1{,}340$ observations and $8$ variables, including the economic loss amount (\texttt{LOSS}), measured in thousands of U.S.\ dollars. The data were collected in 2002 by the Insurance Research Council. The variable \texttt{LOSS} is observed for all observations, whereas some observations contain missing values in other variables. Accordingly, all observations are retained in analyses that do not rely on subsetting. When subsetting on variables with missing values, observations with missing entries are excluded. No additional data transformations are applied.  We let $Y$ denote the economic loss amount. In contrast to the previous sections, the score is evaluated for all integer values of $k\ge10$, rather than on a predefined grid.

Following the methodology in Section~\ref{sec:method}, the empirical workflow is: (i) assess Fréchet-domain plausibility using a tail QQ diagnostic; (ii) define a finite set of candidate Pareto tail models; (iii) compute score curves over \(k\); and {(iv) base the final ordering on a stable lower-\(k\) region, while using uncertainty bands to assess the robustness of the resulting ranking.} 

{For the empirical comparison, we focus on the logarithmic score. In the present
application this is the more practical choice, since it avoids the additional choice
of the parameter \(\beta\) required by Energy-score-based estimation. The resulting
empirical comparison is summarized in Figure~\ref{fig:realdata}.} Panel~(a) shows a Pareto quantile--quantile plot for the upper tail, and the near-linear pattern supports Fréchet-domain behavior. {This diagnostic suggests that the Pareto approximation is reasonable over the
upper-tail region used in the score comparison, even though the simulation study
shows that samples of this size may not always produce a stable ranking.} Panels~(b)--(c) display logarithmic score curves as functions of $k$ for five Pareto candidates with $\gamma\in\{0.3,0.5,0.8,1,1.3\}$, where panel~(c) focuses on the lower $25\%$ of the $k$-range. 

{Within this lower \(k\)-range, the rankings vary only slightly with \(k\).
To obtain a single summary ranking, we follow the procedure described in
Section~\ref{sec:method} and rank the models based on \(\bar S\), with
\(\mathcal{K}_{\mathrm{stab}}=\{10,11,\ldots,335\}\). The score curves are
relatively stable over this range, and the corresponding thresholds remain
sufficiently far into the tail. The models with \(\gamma=0.8\) and \(\gamma=1\) yield nearly indistinguishable average scores and are therefore ranked jointly first;} \(\gamma=1.3\) is ranked second, \(\gamma=0.5\) third, and \(\gamma=0.3\) fourth. {Alternative approaches for selecting a single threshold value \(k\), rather than a stability range, are available in
the extreme value literature. See, for example, \cite{bladt2020threshold} for an overview.
These methods could be adapted to the present score-based framework.} 

{To assess the uncertainty in these score comparisons, we consider score
differences relative to the benchmark \(\gamma=1\). This benchmark is chosen
because it is one of the two jointly highest-ranked models in the summary
ranking above. For each alternative \(\gamma\), a positive expected score
difference indicates that \(F_\gamma\) attains a higher tail score than \(F_1\),
whereas a negative value favours the benchmark. Confidence intervals for these
differences therefore indicate whether the observed differences are clearly
separated from zero, or whether the corresponding models should be viewed as
statistically indistinguishable at the given threshold level. Let
\begin{align*}
h_\gamma(y)=\text{LogS}(F_\gamma,y)-\text{LogS}(F_1,y)=\log(1/\gamma)+\left(1-\frac{1}{\gamma}\right)\log y, \qquad y\ge 1
\end{align*}
be the score difference function relative to the reference model $\gamma=1$. Panels~(d)--(f) then report 
\begin{align*}
    \frac{1}{k}\sum_{i=1}^kh_\gamma\left(\frac{Y_{n,n-i+1}}{Y_{n,n-k}}\right)
\end{align*}
together with pointwise 95\% confidence intervals for different $\gamma$'s.  
The confidence intervals are based on Theorem~\ref{thm:normality}, applied to the score difference function \(h_\gamma\). The function \(h_\gamma\) is absolutely continuous with derivative proportional to \(y^{-1}\). Hence, for any \(\rho \in (0,(2\gamma_G)^{-1})\), the derivative condition in Theorem~\ref{thm:normality} is satisfied. Consequently,  an approximate \(95\%\) confidence interval for
\(
\mathbb{E}\!\left[h_\gamma(Y^\circ)\right]
\)
is given by
\begin{align*}
     \frac{1}{k}\sum_{i=1}^kh_\gamma\left(\frac{Y_{n,n-i+1}}{Y_{n,n-k}}\right)\pm z_{0.975}\frac{\hat\sigma_k(\gamma)}{\sqrt{k}},
\end{align*}
where 
\[
\hat\sigma_k^2(\gamma)
=
\frac{1}{k-1}\sum_{i=1}^k
\left[
h_\gamma\left(\frac{Y_{n,n-i+1}}{Y_{n,n-k}}\right)
-
\frac{1}{k}\sum_{i=1}^kh_\gamma\left(\frac{Y_{n,n-i+1}}{Y_{n,n-k}}\right)
\right]^2 .
\]
The pointwise confidence intervals are then calculated for each $k$. Using the sample variance avoids substituting an estimate of the unknown tail index
\(\gamma_G\), although the resulting intervals should be interpreted with some
caution for small values of \(k\).}

{The score-difference plots are consistent with the identification of
\(\gamma=0.8\) and \(\gamma=1\) as the leading candidates, and indicate that this
comparison is stable across the selected lower \(k\)-range. The model
\(\gamma=0.5\) has negative score differences over most of this range, indicating
weaker performance than the reference model. In contrast, the differences for
\(\gamma=0.8\) are close to zero, with intervals that largely overlap zero, which
supports treating \(\gamma=0.8\) and \(\gamma=1\) as jointly best. The model
\(\gamma=1.3\) also has mostly negative score differences relative to
\(\gamma=1\).}

\begin{figure}[H]
    \centering
    \begin{subfigure}[b]{0.40\textwidth}
        \centering
        \includegraphics[width=\textwidth, trim= 0.3in 0.6in 0.3in 0.6in,clip]{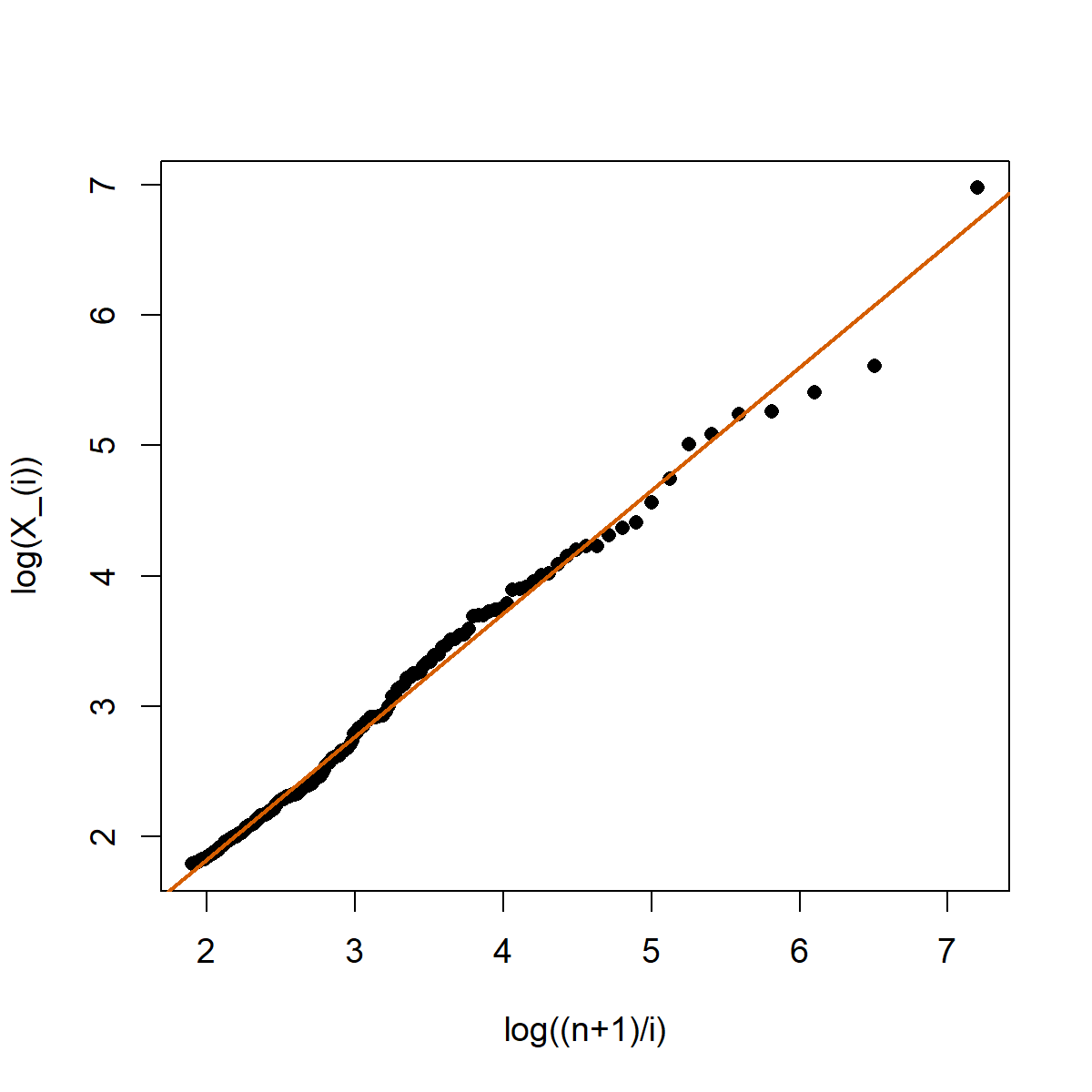}
        \caption{Upper-tail Pareto QQ-plot}
        \label{fig:qqpareto}
    \end{subfigure}
    \begin{subfigure}[b]{0.40\textwidth}
        \centering
        \includegraphics[width=\textwidth, trim= 0.3in 0.6in 0.3in 0.6in,clip]{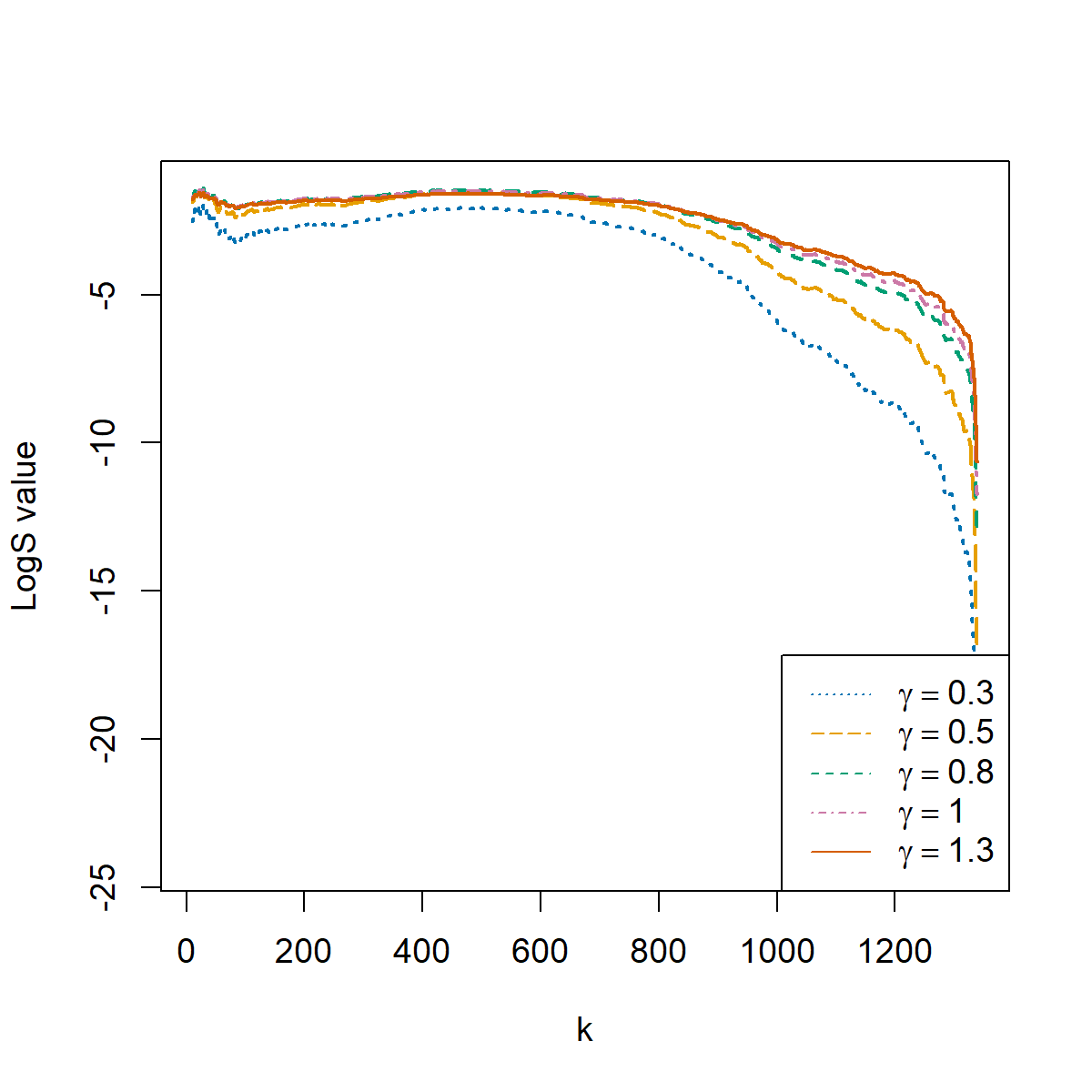}
        \caption{LogS versus $k$ over the full range}
        \label{fig:realdata_full}
    \end{subfigure}
    \begin{subfigure}[b]{0.40\textwidth}
        \centering
        \includegraphics[width=\textwidth, trim= 0.3in 0.6in 0.3in 0.6in,clip]{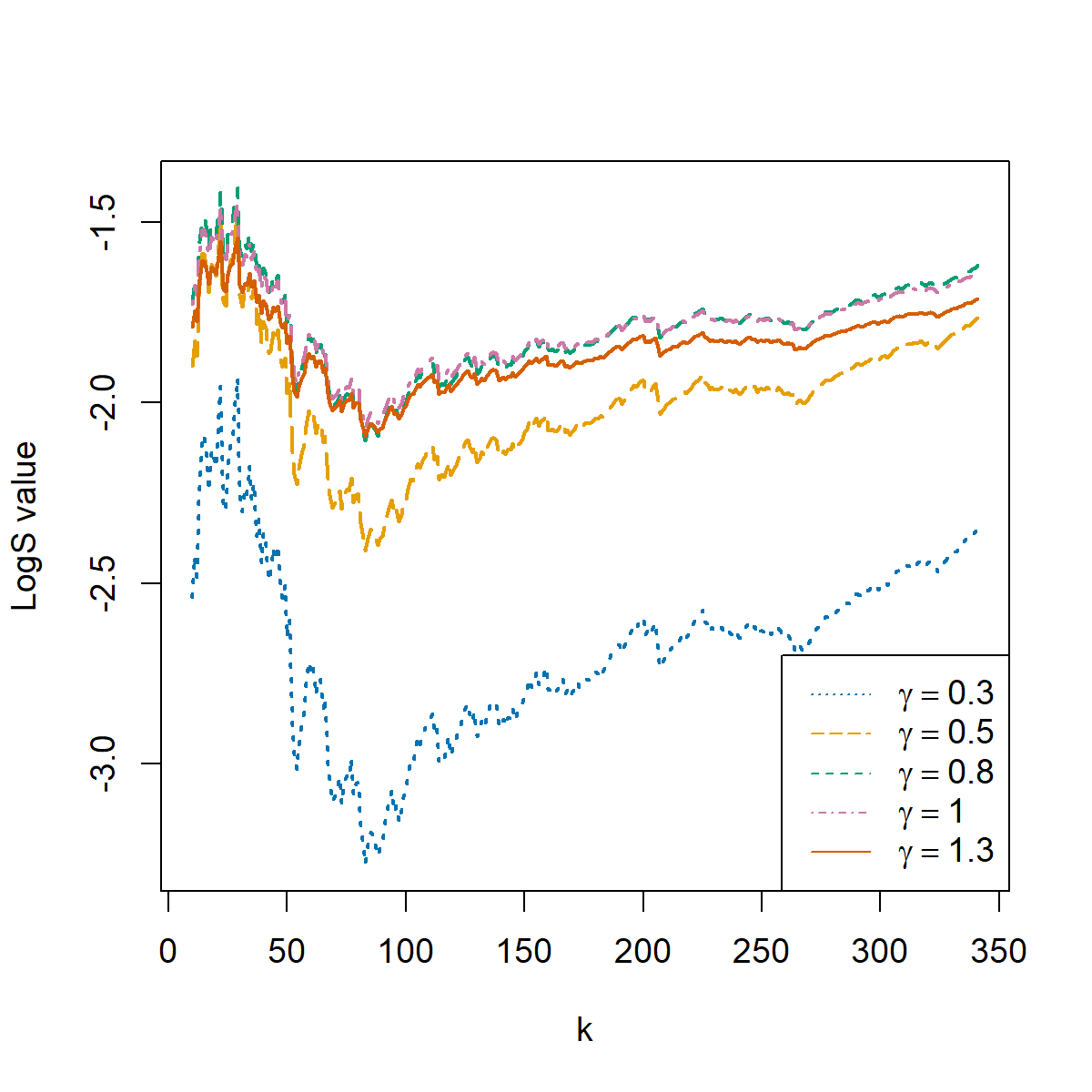}
        \caption{LogS versus $k$ for the lower $25\%$ of the range}
        \label{fig:realdata_zoom}
    \end{subfigure}
    \begin{subfigure}[b]{0.4\textwidth}
        \centering
        \includegraphics[width=\textwidth, trim= 0.3in 0.6in 0.3in 0.6in,clip]{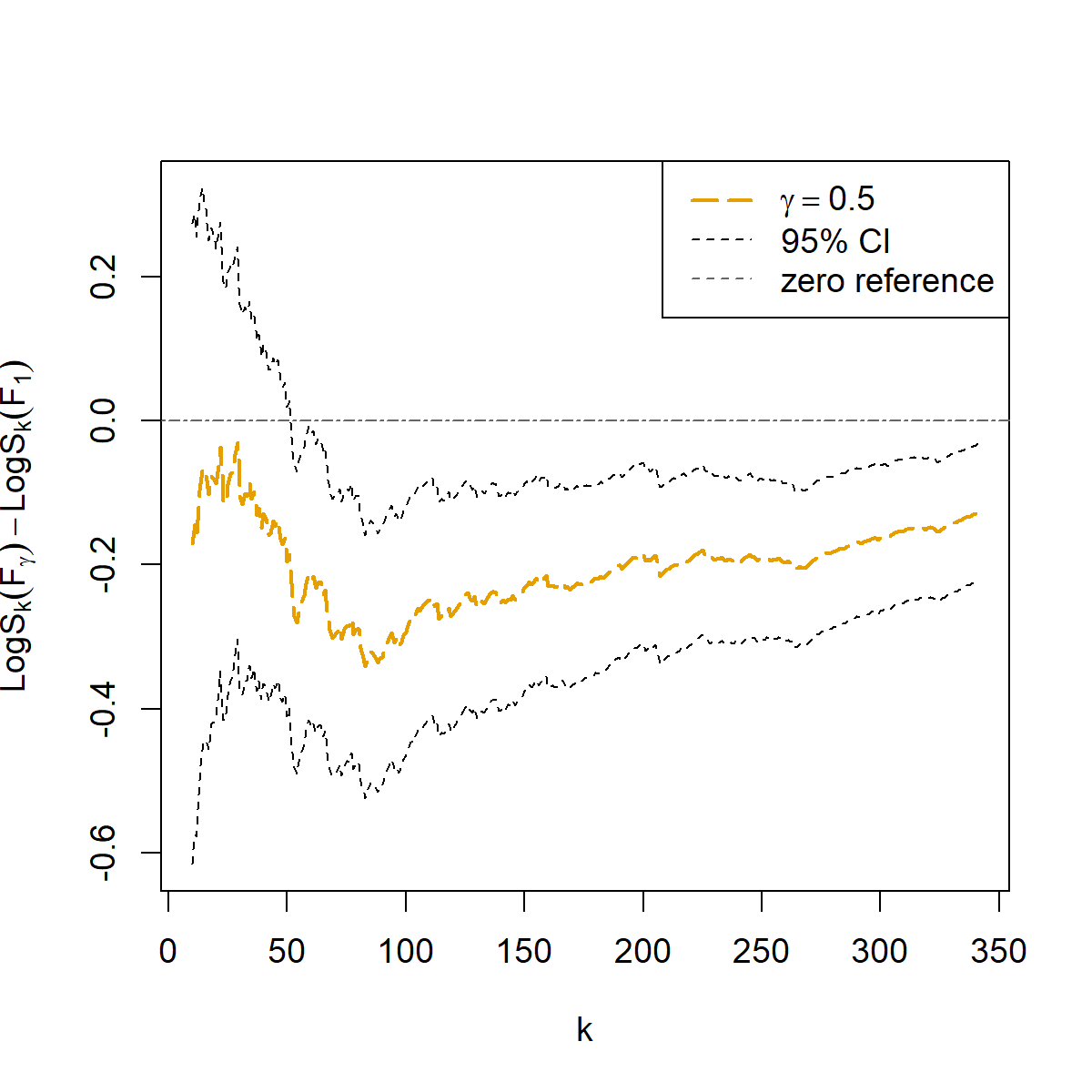}
        \caption{\(S_k(F_{0.8})-S_k(F_1)\) versus $k$.
        }
        \label{fig:realdata_CI_05}
    \end{subfigure}
    \begin{subfigure}[b]{0.40\textwidth}
        \centering
        \includegraphics[width=\textwidth, trim= 0.3in 0.6in 0.3in 0.6in,clip]{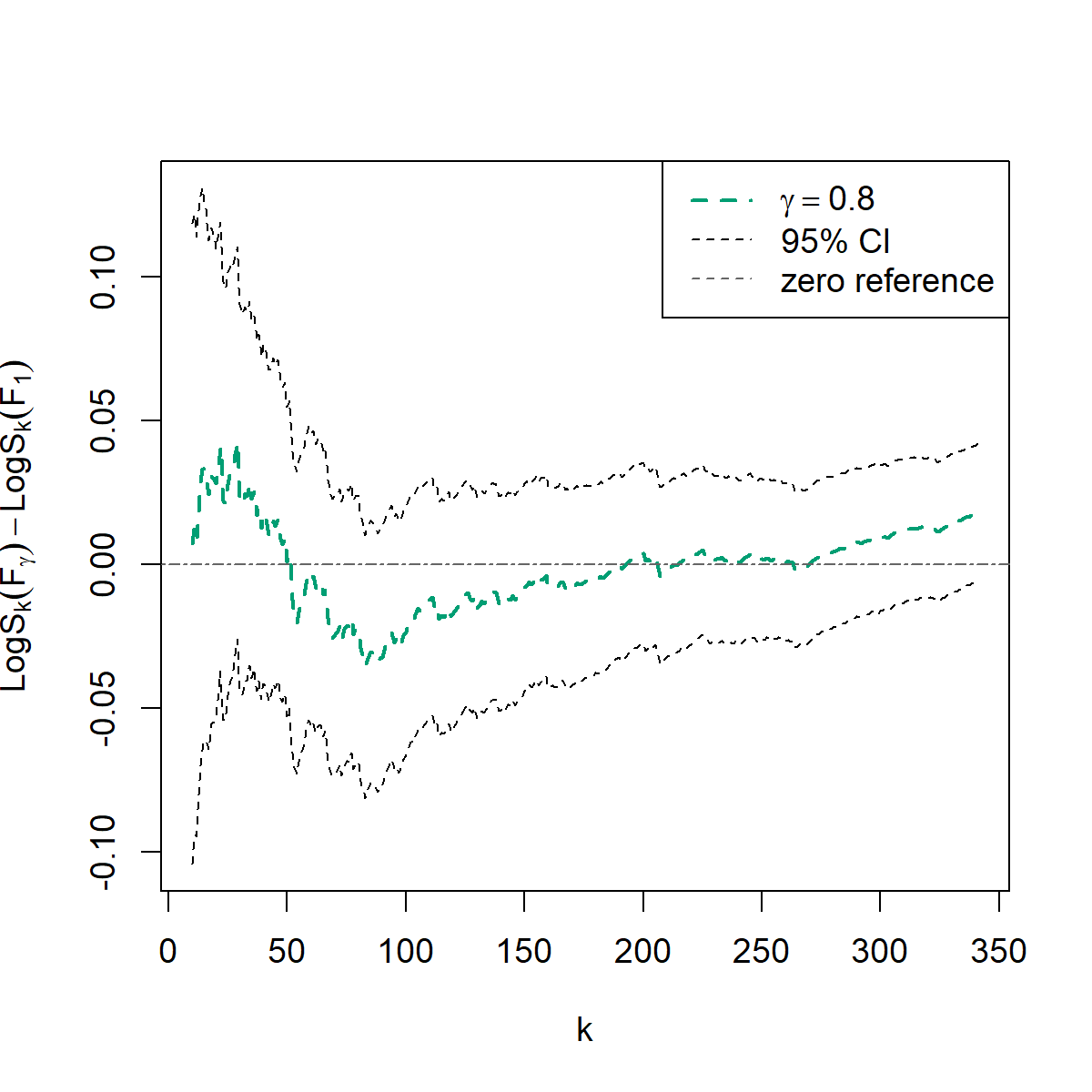}
        \caption{\(S_k(F_{0.5})-S_k(F_1)\) versus $k$. 
        }
        \label{fig:realdata_CI_08}
    \end{subfigure}
    \begin{subfigure}[b]{0.4\textwidth}
        \centering
        \includegraphics[width=\textwidth, trim= 0.3in 0.6in 0.3in 0.6in,clip]{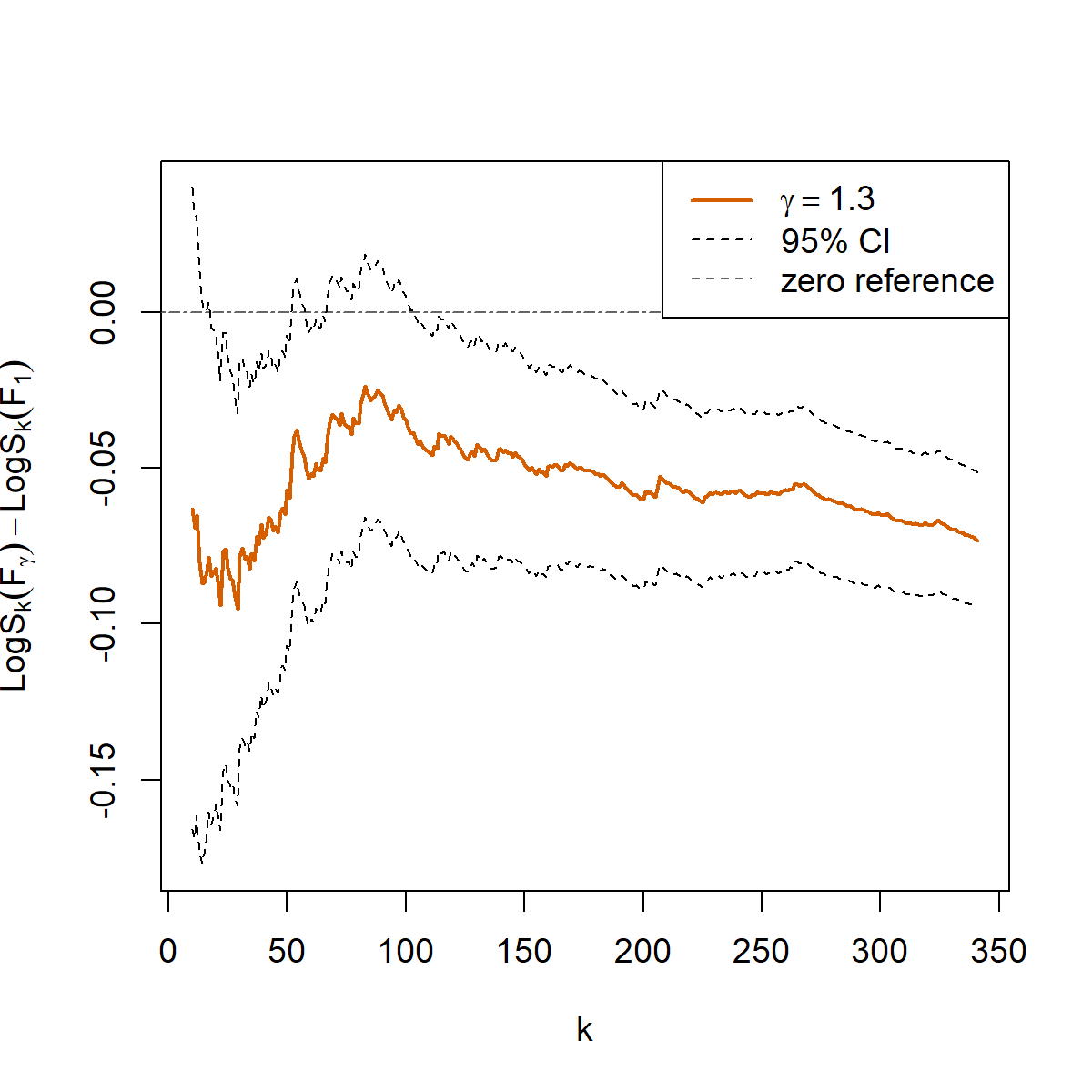}
        \caption{\(S_k(F_{1.3})-S_k(F_1)\) versus $k$. 
        }
        \label{fig:realdata_CI_13}
    \end{subfigure}
    \caption{USAutoBI claim severity analysis ($n=1{,}340$). Panel~(a) compares theoretical Pareto upper-tail quantiles (horizontal axis) with empirical quantiles (vertical axis). Panels~(b)--(c) plot logarithmic score (vertical axis) against the number of upper order statistics $k$ (horizontal axis) for five Pareto predictive distributions with $\gamma\in\{0.3,0.5,0.8,1,1.3\}$. Panel~(c) is a zoom of panel~(b) over the lower $25\%$ of the $k$-range. {Panels~(d)--(f) show score differences relative to the \(\gamma=1\) model with pointwise 95\% confidence intervals.}
    }
    \label{fig:realdata}
\end{figure}

To assess subgroup variation, additional insight is obtained by repeating the score analysis on subsamples. Figure~\ref{fig:realdata_subset} reports splits by sex (Figures~\ref{fig:real_female} and \ref{fig:real_men}) and attorney involvement (Figures~\ref{fig:real_A0} and \ref{fig:real_A1}), using the same relative lower-\(k\) range as in Figure~\ref{fig:realdata_zoom}. The sex-based split is defined using the variable \texttt{CLMSEX}, where \texttt{F} and \texttt{M} denote female and male, respectively; observations with missing values (12 in total) are excluded. The split by attorney involvement is based on the variable \texttt{ATTORNEY}, where $1$ indicates the presence of an attorney and $0$ otherwise.
For the sex split, the ranking is broadly consistent with the full sample:
$\gamma=0.8$ and $\gamma=1$ attain the highest scores. In contrast, the attorney split yields a different ordering for claims without attorney representation (Figure~\ref{fig:real_A0}), where $\gamma=0.5$ scores highest and $\gamma=0.3$ becomes competitive. This suggests a lighter tail in the non-attorney subsample, which is plausible if smaller claims are less likely to involve legal representation. As usual for observational subgroup analyses, this interpretation is descriptive and does not isolate causal effects of covariates.

\begin{figure}[H]
    \centering
    \begin{subfigure}[b]{0.40\textwidth}
        \centering
        \includegraphics[width=\textwidth, trim= 0.3in 0.6in 0.3in 0.6in,clip]{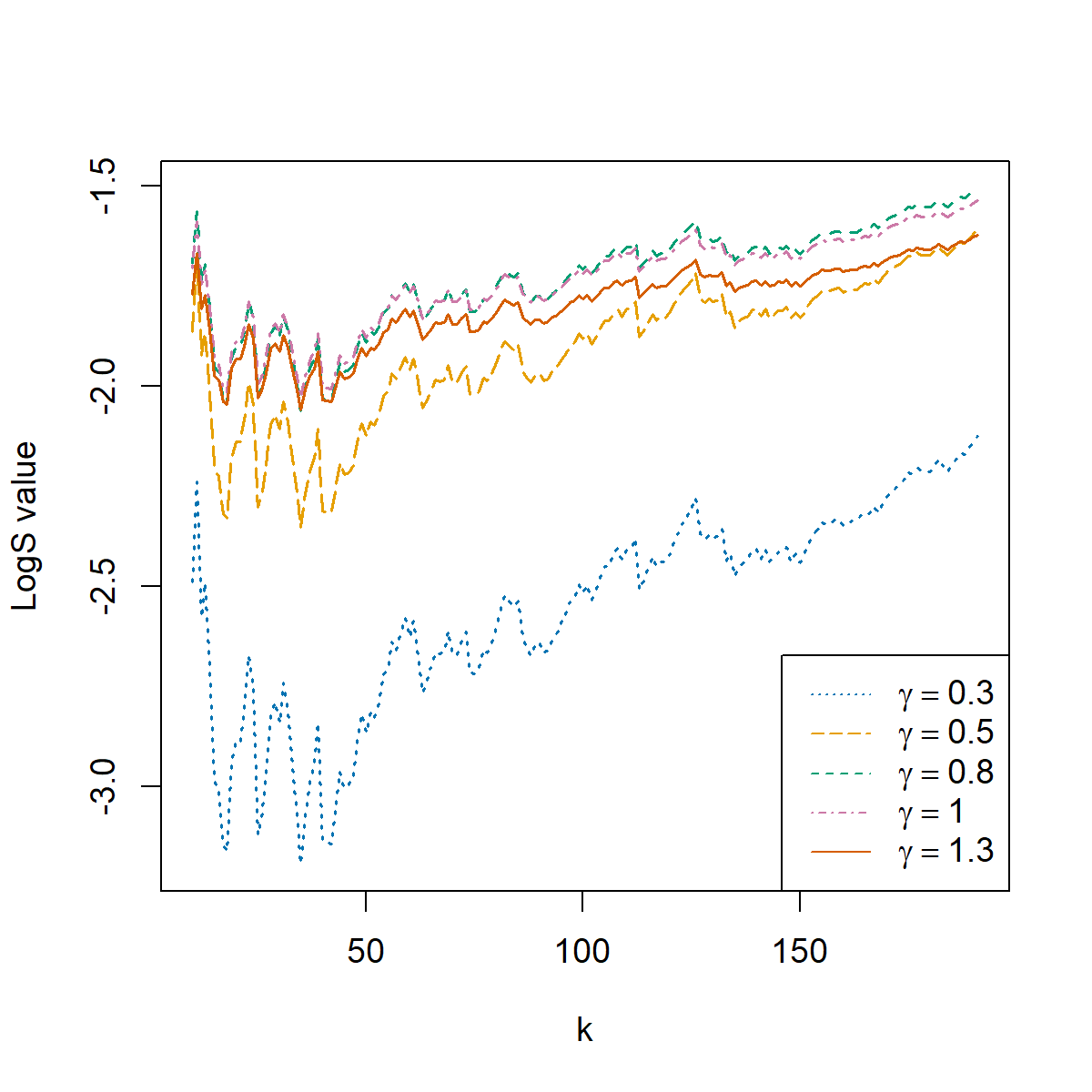}
        \caption{Female subsample ($n=742$)}
        \label{fig:real_female}
    \end{subfigure}
    \begin{subfigure}[b]{0.40\textwidth}
        \centering
        \includegraphics[width=\textwidth, trim= 0.3in 0.6in 0.3in 0.6in,clip]{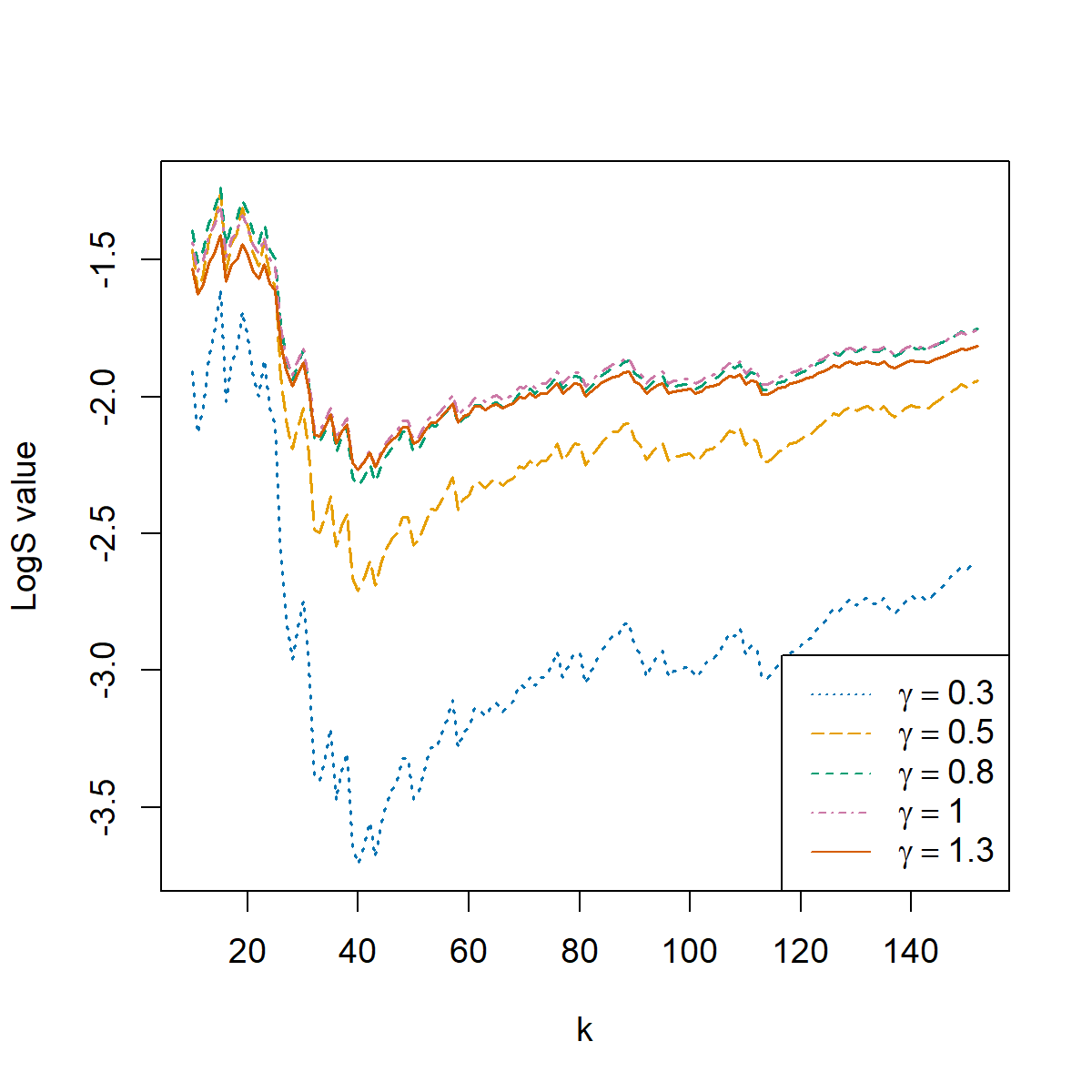}
        \caption{Male subsample ($n=586$)}
        \label{fig:real_men}
    \end{subfigure}
    \begin{subfigure}[b]{0.40\textwidth}
        \centering
        \includegraphics[width=\textwidth, trim= 0.3in 0.6in 0.3in 0.6in,clip]{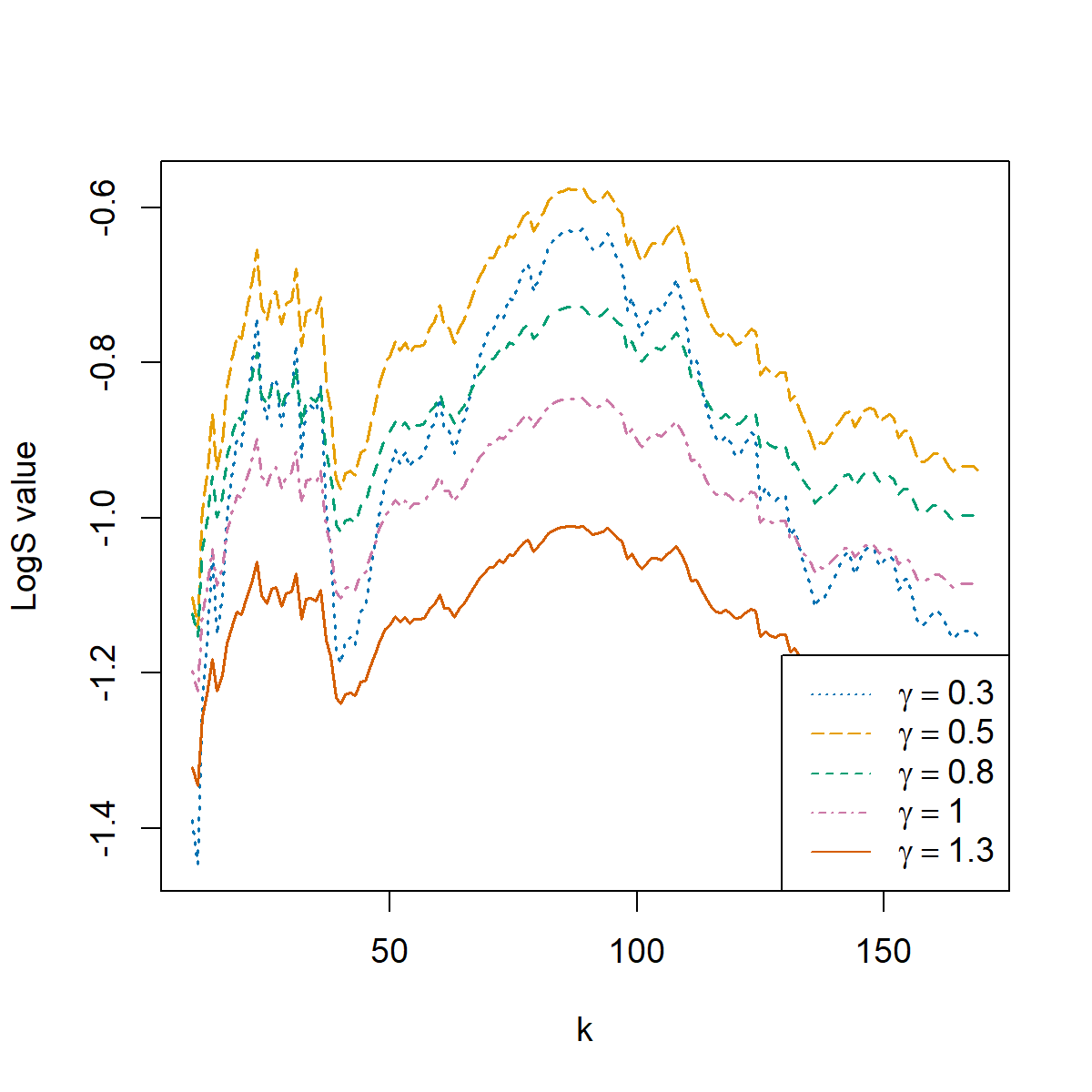}
        \caption{No-attorney subsample ($n=655$)}
        \label{fig:real_A0}
    \end{subfigure}
    \begin{subfigure}[b]{0.4\textwidth}
        \centering
        \includegraphics[width=\textwidth, trim= 0.3in 0.6in 0.3in 0.6in,clip]{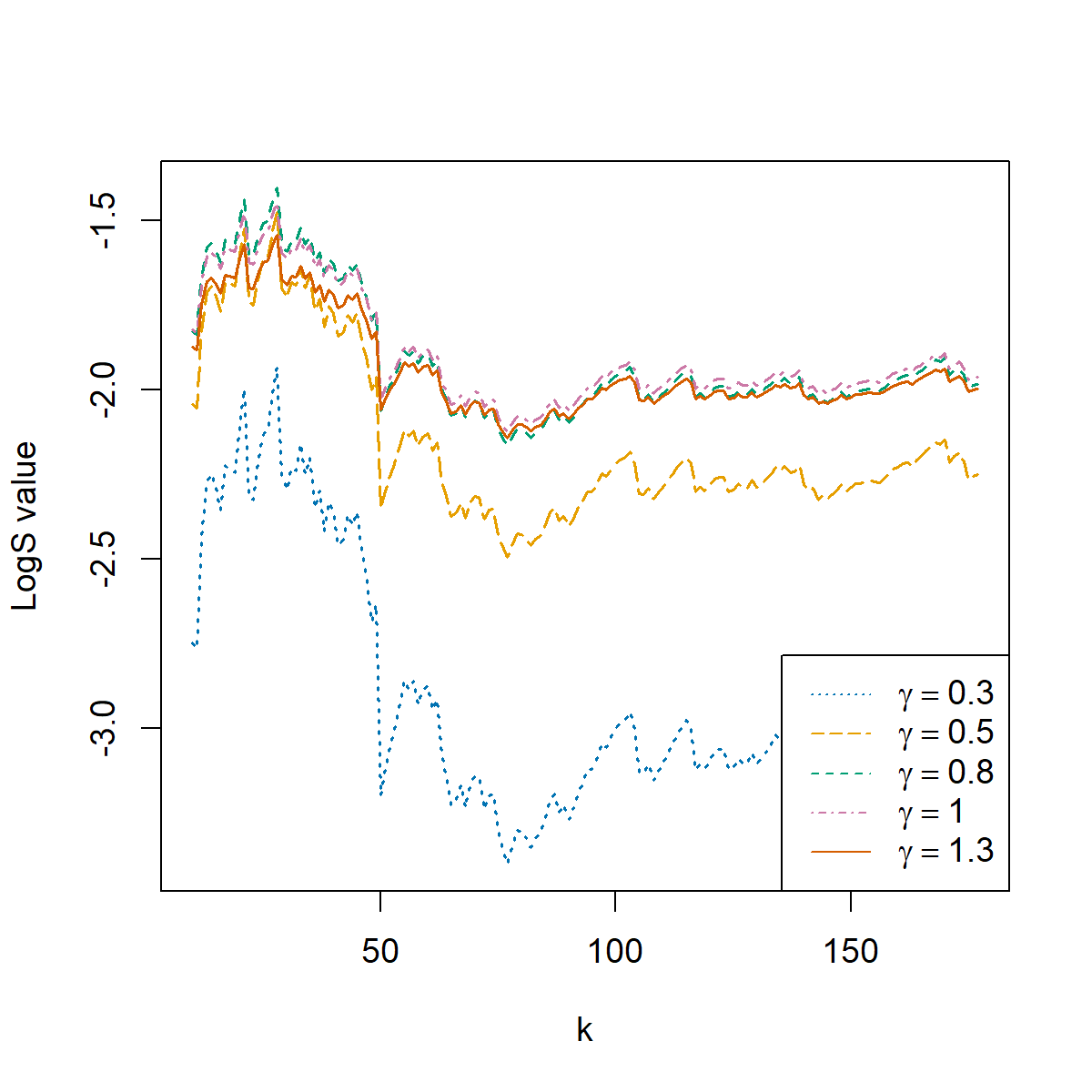}
        \caption{Attorney subsample ($n=685$)}
        \label{fig:real_A1}
    \end{subfigure}
    \caption{Subset analyses of the USAutoBI claim severity data. Each panel plots logarithmic score (vertical axis) against the number of upper order statistics $k$ (horizontal axis) over the lower $25\%$ of the $k$-range, for the same five Pareto candidates used in Figure~\ref{fig:realdata}. Panels~(a)--(b) split by sex; panels~(c)--(d) split by attorney involvement.
    }
    \label{fig:realdata_subset}
\end{figure}

\section{Concluding Remarks}

This paper proposes a scoring-rule-based framework for ranking predictive distributions in the Fréchet domain based on normalized order statistics. By embedding extreme value limit theory into a scoring-rule perspective, we show how proper scoring rules can be applied to compare tails in a principled manner. Within this framework, we derive conditions under which the logarithmic score and the CRPS (as a special case of the Energy score) are well defined and applicable.

We further show that optimizing scoring rules yields consistent tail-index estimators and that the classical Hill estimator arises as a special case.

The first two simulation studies show that the proposed approach can successfully distinguish between different tail indices in finite samples and that
its performance remains stable under systematic scale variation. {In the third simulation study, we observe that estimation based on Energy-score optimization yields results that are very close to those of the classical Hill estimator.} Lastly, we conduct an empirical analysis of automobile claim severity data, in which five competing Pareto tail models are evaluated and ranked using the proposed scoring-rule framework. The analysis illustrates how the method can be applied in practice, how the resulting ranking supports model selection, and how differences in tail behavior across data partitions can be identified.

{In this paper, the score-specific conditions are verified only for the logarithmic and Energy scores, making the study of further scoring rules a natural direction for future work. It would also be useful to study how existing threshold-selection methods can be adapted to the present score-based framework. Finally, extending the asymptotic theory to covariate-dependent or otherwise non-identically distributed predictive distributions would broaden the scope of the proposed methodology.
}

\paragraph{Acknowledgments}
AI tools were used exclusively for language editing. All results, analysis, and conclusions are the authors’ own.

\paragraph{Funding Statement}
The first author was supported by the Carlsberg Foundation, grant CF23-1096.

\paragraph{Competing Interests}
None to declare.

\section*{Acknowledgements}

AI tools were used exclusively for language editing. All results, analysis, and conclusions are the authors’ own.

\section*{Competing Interest}

None to declare.

\section*{Funding}
The first author was supported by the Carlsberg Foundation, grant CF23-1096.

\newpage
\bibliographystyle{imsart-nameyear} 
\bibliography{main.bib}       

\end{document}